\numberwithin{equation}{section}
\newtheorem{prop}{Proposition}[section]
\newtheorem{lemma}[prop]{Lemma}
\newtheorem{df}[prop]{Definition}
\newtheorem{dfprop}[prop]{Definition/Proposition}
\newtheorem{rem}[prop]{Remark}
\newtheorem{ex}[prop]{Example}
\newcommand{\A}{\mathcal{A}}
\newcommand{\Ae}{\mathcal{A}_{\mathcal{E}}}
\newcommand{\B}{\mathcal{B}}
\newcommand{\K}{\mathcal{K}}
\newcommand{\E}{\mathcal{E}}
\renewcommand{\L}{\mathcal{L}}
\newcommand{\HH}{\mathcal{H}}
\newcommand{\HHea}{\mathcal{H}_{\mathcal{E}}}
\newcommand{\N}{\mathbb{N}}
\newcommand{\Z}{\mathbb{Z}}
\newcommand{\R}{\mathbb{R}}
\newcommand{\C}{\mathbb{C}}
\newcommand{\de}{\mathrm{d}}
\newcommand{\ket}[1]{\left|#1\right>}
\newcommand{\bra}[1]{\left<#1\right|}
\newcommand{\inner}[1]{\left<#1\right>}
\newcommand{\mat}[1]{\bigg(\!\begin{array}{cc}#1\end{array}\!\bigg)}
\newcommand{\D}{\slashed{D}}
\newcommand{\tr}{\mathrm{Tr}}
\newcommand{\nint}{\int\mkern-19mu-\;}
\newcommand{\tint}{\int\mkern-16mu-\,}
\newcommand{\mc}{\mathcal}
\newcommand{\mf}{\mathfrak}
\newcommand{\id}{\textup{\textsf{id}}}
\newcommand{\za}{\diamond}
\begin{document}

\setlength{\droptitle}{-6pc}
\pretitle{\begin{flushright}\small
ICCUB-13-220
\end{flushright}\vspace*{2pc}%
\begin{center}\LARGE}
\posttitle{\par\end{center}}

\title{Matrix Geometries Emergent from a Point}

\author{~\\ ~Francesco D'Andrea$^1$, Fedele Lizzi$\hspace{1pt}^{2,3,4}$ and Pierre Martinetti$\hspace{1pt}^{2,3}$ \\[12pt]
{\footnotesize $^1$ Dipartimento di Matematica e Applicazioni, Universit\`a di Napoli {\sl Federico II}. } \\[3pt]
{\footnotesize $^2$ Dipartimento di Fisica, Universit\`a di Napoli {\sl Federico II}.}\\
{\footnotesize $^3$ I.N.F.N. -- Sezione di Napoli.} \\[3pt]
{\footnotesize $^4$ Departament de Estructura i Constituents de la Mat\`eria.} \\
{\footnotesize Institut de Ci\'encies del Cosmos, Universitat de Barcelona.}}

\date{}

\maketitle

\begin{abstract}\noindent
We describe a categorical approach to finite noncommutative geometries.
Objects in the category are spectral triples, rather than unitary equivalence classes as in other approaches.
This enables to treat fluctuations of the metric and unitary
equivalences on the same footing, as representatives of particular morphisms in this category.
We then show how a matrix geometry (Moyal plane) emerges as a fluctuation
from one point, and discuss some geometric aspects of this space.
\end{abstract}

\medskip


\section{Introduction}

In the realm of $C^*$-algebras there are two important notions of equivalence, which coincide in the commutative case: $C^*$-isomorphism and strong Morita equivalence~\cite{Rie74}.
This suggests that two spaces could be considered the same when described by Morita equivalent $C^*$-algebras, a weaker requirement than $C^*$-isomorphism.
For von Neumann algebras, using the analogue of Morita-equivalence bimodules as morphisms (or ``correspondences'') is already in \cite[\S5.B]{Con94}.

This idea led, for example, to the description of an elliptic curve with modular
parameter $\tau$ ($\tau\in\C$, $\mathrm{Im}(\tau)>0$)
using the transformation group $C^*$-algebra associated to the action of $\Z+\tau\Z$ on $\C$~\cite{Mar05}.
This algebra is Morita equivalent to the algebra of continuous functions on the torus $\C/\Z+\tau\Z$,
and the complex structure (the modular parameter) is encoded in the group action. The transformation group $C^*$-algebra has the advantage
of being defined even when $\tau=2\pi\theta$ is real, giving
a $C^*$-algebra that is the suspension of the $C^*$-algebra of the noncommutative torus $C(\mathbb{T}^2_\theta)$.
One thus interprets noncommutative tori as points on the boundary of the moduli space of elliptic curves (see
e.g.~\cite{Mar05} and references therein).
Two elliptic curves are equivalent (biholomorphic) if and only if their modular
parameters are in the same $SL(2,\Z)$ orbit. Similarly two
noncommutative tori are Morita equivalent (but not isomorphic)
if and only if their $\theta$-parameters are in the same $SL(2,\Z)$
orbit~\cite{Rie81}. Hence one can argue that Morita
equivalence is a better notion than isomorphism, at least
in the context of elliptic curves (in fact, it is also believed that Morita equivalence should have
a role in particle physics, 
for instance in relation with T-duality in string theory \cite{CDS98,lls98}).

Morita equivalence is only about topology. A natural question is whether Morita equivalent
noncommutative tori are also equivalent as ``noncommutative manifolds'' (in the sense of spectral triples).
A positive answer is given in~\cite{Ven11}.

\medskip

In this paper, we define a category of spectral triples based on Morita-equivalence, by adapting some ideas
of~\cite{Ven11} and~\cite{Mes09} to the finite-dimensional case.
Following \cite{Ven11} we work with connections associated to Connes' differential calculus, rather than universal connections
as in \cite{Mes09}. Objects in our category are spectral triples,
rather than unitary equivalence classes. As a consequence, both fluctuations of the metric and unitary
equivalences are representative of particular morphisms of the category.
Then we build a spectral triple on the $1$-point space yielding a
non-trivial differential calculus, and show how to obtain the Moyal
plane as a fluctuation.
We conclude with some considerations on the metric aspect of Moyal
plane, stressing some relations between Connes spectral distance and polarization in geometric quantization.

For a review of other categorical approaches to noncommutative geometry see \cite{BCL12}.

\medskip

The paper is organized as follows. After briefly recalling basics of noncommutative geometry in
\S\ref{sec:2.1}--\ref{sec:2.3}, we discuss fluctuations of a spectral triples in \S\ref{sec:mor} and
\S\ref{sec:2.5}. ``Being a fluctuation of'' is not an equivalence relation: it
is reflexive and transitive but not symmetric~\cite[Rem.~1.143]{CM08}. Instead, inspired by
\cite{Con94,Mes09,Ven11} we think of \emph{fluctuations} as morphisms in a suitable category of
spectral triples. Transitivity and reflexivity correspond then to the existence of a composition law
of morphisms and of the identity morphisms, while the lack of symmetry means that not every morphism is an isomorphism.
In fact, to  ensure the existence of the identity morphisms, we extend the class of maps to the composition of fluctuations 
with unitary equivalences, that we call \emph{correspondences}.

Spectral triples with morphisms given by classes of correspondences
(or ``Morita morphisms'') form a category which we will call ``Morita category''.
This has two subcategories, with the same objects
but less morphisms. A first subcategory is the one with morphisms
given by unitary equivalences only, which is
a groupoid (every morphism is an isomorphism).
A second subcategory is the one with morphisms given by inner fluctuations only
(which are not always invertible).  
If we fix the algebra and the Hilbert space, objects are simply Dirac operators and morphisms are connection $1$-forms.
We call this category the ``gauge category'', because in the spectral
action approach to field theory inner fluctuations give rise to gauge fields. It has several
interesting properties, studied in \S\ref{sec:2.5}: a morphism between $D$ and $D'$ thus exists if and only if the
corresponding first order differential calculi coincide: $\Omega^1_D=\Omega^1_{D'}$;
$\mathsf{Mor}(D,D')$ is either empty or it contains exactly one element;
there never exists a final object in this category, but there may exist an initial object.

In \S\ref{sec:3}, we recall some basic ideas on the metric aspect of
noncommutative geometry, and stress the link with polarization
in geometric quantization.

In \S\ref{sec:4.1}, we introduce a Dirac operator for $M_n(\C)$ i.e.~a universal (initial) object in the category of inner fluctuations. In \S\ref{sec:4.2} we describe
a spectral triple for Moyal plane that is unitary equivalent to the
isospectral one of~\cite{GGISV04}, and prove in \S\ref{sec:4.3} that its ``polarization''
is a fluctuation of the spectral triple in \S\ref{sec:4.1}.

In \S\ref{sec:4.4} we discuss some metric properties of Moyal plane.
In \S\ref{sec:4.4.1} we give an alternative (shorter) proof of the formula in~\cite{MT11}
for the distance between translated states (including coherent
states).
In \S\ref{sec:4.4.2} we show that
eigenstates of the quantum harmonic oscillator,
with the spectral distance, form a metric space that is
convergent (for the Gromov-Hausdorff distance) to the Euclidean half-line when
$\theta\to 0$, where $\theta$ is the deformation parameter.

\section{A categorical approach to spectral triples}

We use the following notations: $\B(\HH)$ is the algebra
of all bounded linear operators on the Hilbert space $\HH$, $\mc{L}^2(\HH)$ the ideal
of Hilbert-Schmidt operators, $\mc{K}$ the $C^*$-algebra of compact
operators on a separable infinite-dimensional Hilbert space. When we
talk about \emph{states} of a pre $C^*$-algebra $\A$ we always mean
states of its $C^*$-closure. The set of
all states is denoted by $\mc{S}(\A)$.
The inner product and trace on a Hilbert space $\HH$ are denoted
$\inner{\,,\,}_{\HH}$ and $\tr_{\HH}$ respectively; the operator
norm is denoted $\|\,.\,\|_{\B(\HH)}$, and the subscript will be
omitted if there is no risk of ambiguity.
If $\E$ is a left $\A$-module, $\E'$ is a right $\A'$-module and
$\psi:\A\to\A'$ an isomorphims, we denote by $\E'\otimes_\psi\E$ the
quotient space of $\E'\otimes\E$ by the ideal generated by elements
$\xi'\psi(a)\otimes\xi-\xi'\otimes a\xi$, with $\xi\in\E$, $\xi'\in\E'$,
and $a\in\A$ (the tensor product over $\A\simeq\A'$).

\subsection{Noncommutative differential geometry}\label{sec:2.1}

Material in this section is mainly taken from~\cite{Con94,GVF01,Lan02,Mad95}.
In the spirit of~\cite{Con94}, the central notion for the description of noncommutative metric spaces is the notion of spectral triple.

A \emph{spectral triple} $(\A,\HH,D)$ is given by:
i) a separable complex Hilbert space $\HH$;
ii) a complex associative involutive algebra $\A$ with a bounded $*$-representation $\pi:\A\to\B(\HH)$;
iii) a self-adjoint operator $D$ on $\HH$ such that $[D,\pi(a)]$ is bounded and $\pi(a)(1+D^2)^{-1/2}$ is compact, for all $a\in\A$.
It is understood that $\pi(a)\cdot\mathrm{Dom}(D)\subset\mathrm{Dom}(D)$ for all $a\in\A$.

The spectral triple is \emph{unital} if $\A$ is a unital algebra and $\pi$ a unital
representation.
Without loosing generality, we will always assume that $\pi$ is faithful
and non-degenerate (one can always replace $\A$ by $\A/\ker\pi$ and $\HH$
by $\overline{\pi(\A)\HH}$), identify $\A$ with $\pi(\A)$ and omit the representation symbol $\pi$.

If $\A$ and $\HH$ are finite-dimensional, $D$ can be any self-adjoint
operator since all the conditions are trivially satisfied.

A spectral triple is \emph{even} if there is a \emph{grading} $\gamma$ on $\HH$,
i.e.~a bounded operator satisfying $\gamma=\gamma^*$, $\gamma^2=1$,
$[\gamma,a]=0\;\forall\;a\in\A$ and $\gamma D+D\gamma=0$.

\medskip

A first notion of equivalence between spectral triples
is unitary equivalence~\cite[\S7]{Var06}.
We say that two spectral triples $(\A,\HH,D)$ and $(\A',\HH',D')$
are \emph{unitary equivalent}, and write $(\A,\HH,D)\sim_{u}(\A',\HH',D')$,
if and only if there exists a unitary map $U:\HH\to\HH'$
such that: i) the map $\mathrm{Ad}_U:\B(\HH)\to\B(\HH')$ given by $\mathrm{Ad}_U(a):=UaU^*$
is an isomorphism between $\A$ and $\A'$, ii) $UD=D'U$.
If the spectral triples are even with gradings $\gamma$ and $\gamma'$, we add the further
requirement that $U\gamma=\gamma'U$.
Note that unitary equivalent spectral triples have associated isomorphic first order differential calculi:
$\mathrm{Ad}_U$ gives indeed a map $\Omega^1_D(\A)\to \Omega^1_{D'}(\A')$,
$a[D,b]\mapsto\mathrm{Ad}_U(a[D,b])=\mathrm{Ad}_U(a)[D',\mathrm{Ad}_U(b)]$.

A second relevant notion of equivalence is Morita equivalence. Before discussing it in the next section, let us recall
some basics definitions and fix some notations.
A right pre-Hilbert $\A$-module $\E$ for a pre $C^*$-algebra $\A$ is a right $\A$-module $\E$
endowed with an $\A$-valued Hermitian structure, that is a sesquilinear map $(\,,\,)_{\E}:\E\times\E\to\A$ satisfying $\,(\eta a,\xi b)_{\E}=a^*(\eta,\xi)_{\E}b\,$ and
$\,(\eta,\xi)_{\E}^*=(\xi,\eta)_{\E}\,$ for all $\eta,\xi\in\E $ and $a,b\in\A$,
plus $\,(\eta,\eta)_{\E}>0\,$ for all $\eta\neq 0$.

If $\A$ is a $C^*$-algebra and $\E$ is complete in the norm
$\|\eta\|_{\A}:=\sqrt{\|(\eta,\eta)_{\E}\|}$,
where $\|\,.\,\|$ is the $C^*$-norm of $\A$, then $\E$
is called a right Hilbert $\A$-module. Right Hilbert $\C$-modules
are simply complex Hilbert spaces. There is an analogous
definition for left modules (but in this case $(\,,\,)_{\E}$ 
is conjugate linear in the second argument).

From
an algebraic point of view, a smooth vector bundle $E\to M$ is described
by its set $\Gamma_0^\infty(E)$ of smooth sections vanishing at infinity.
This is a (right) pre-Hilbert $C^\infty_0(M)$-module, suitable for working with connections,
but not a Hilbert module since it is not complete. The
set $\Gamma_0(E)$ of all continuous sections vanishing at infinity
is a (right) Hilbert $C_0(M)$-module, and is the one
used to prove the strong Morita equivalence between $C_0(M)$ and
the algebra of right $C_0(M)$-linear endomorphisms of $\Gamma_0(E)$, cf.~e.g.~\cite[App.~A]{Lan02}.

\subsection{Morita equivalence}\label{sec:2.2}

Two rings $\A$ and $\B$ are Morita equivalent if there is an $\A$-$\B$-bimodule $\mc{M}$ and a $\B$-$\A$-bimodule $\mc{N}$ such that $\mc{M}\otimes_{\B}\mc{N}\simeq\A$ as $\A$-$\A$-bimodules and $\mc{N}\otimes_{\A}\mc{M}\simeq\B$ as $\B$-$\B$-bimodules (here $\otimes_{\A}$ is the tensor product over the algebra $\A$).
An elementary finite-dimensional example is $\A=M_n(\C),
\B=\C$, with $\mc{M}=\C^n$, resp.~$\mc{N}=\overline{\C^n}$, the space
of column vectors, resp.~row vectors, with $\A$-module
structure given by row-by-column multiplication: one has
$\C^n\otimes_{\C}\overline{\C^n}\simeq M_n(\C)$ and $\overline{\C^n}\otimes_{M_n(\C)}\C^n\simeq\C$.

Morita equivalence of rings is not suitable for $C^*$-algebras.
Take for example $\A=\K$, $\B=\C$, $\mc{M}:=\ell^2(\N)$ and $\mc{N}:=\ell^2(\N)^*$ the Hilbert space dual.
Then the unitary map{\footnote{$\{e_1,\ldots,e_n\}$ is the canonical
orthonormal basis of $\C^n$. $e_{ij}:=e_ie_j^*$ is the matrix
with $1$ in position $(i,j)$ and zero everywhere else.}} \mbox{$\mc{M}\otimes_{\C}\mc{N}\to\L^2(\mc{M})$},
defined by $e_i\otimes e_j^*\to e_{ij}$, has image that is only dense in $\K$
(the other way round, we could take $\A=\L^2(\mc{M})$ but this is not a $C^*$-algebra).
This motivates the notion of strong Morita equivalence~\cite{Rie74}.

\medskip

Let $\A, \B$ be $C^*$-algebra. A right Hilbert $\A$-module $\E$ is called \emph{full} if the linear span of $(\eta_1,\eta_2)_{\E}$, with $\eta_1,\eta_2\in\E$, is dense in $\A$ (in the $C^*$-algebra norm).
A $\B$-$\A$ bimodule
$_{\B}\E_{\A}$ is called a \emph{Hilbert bimodule} if it is
a right Hilbert $\A$-module, with $\A$-valued Hermitian structure
$(\,,\,)_{\E,\A}$, a left Hilbert $\B$-module, with $\B$-valued Hermitian structure
$(\,,\,)_{\E,\B}$, and the Hermitian structures are compatible in the following sense:
$$
\eta\,(\xi,\zeta)_{\E,\A}=(\eta,\xi)_{\E,\B}\,\zeta
\;,\qquad\forall\;\eta,\xi,\zeta\in {_{\B}\E_{\A}} \;.
$$
A $\B$-$\A$ Hilbert bimodule is \emph{full} if it is full both as right Hilbert $\A$-module
and left Hilbert $\B$-module.

$\A$ and $\B$ are \emph{strongly Morita equivalent}
if there exists a full $\B$-$\A$ Hilbert bimodule $_{\B}\E_{\A}$, called a \emph{Morita-equivalence bimodule} between $\A$ and $\B$, such that
$$
(\eta a,\eta a)_{\E,\B}\leq \|a\|^2(\eta,\eta)_{\E,\B} \;,\qquad
(b\eta,b\eta)_{\E,\A}\leq \|b\|^2(\eta,\eta)_{\E,\A} \;,
$$
for all $a\in\A$, $b\in\B$ and $\eta\in{_{\B}\E_{\A}}$.
We will use the notation $\B\rightleftharpoons\E\rightleftharpoons\A$.

\smallskip

If $\HH$ is a separable Hilbert space, one has $\K(\HH)\rightleftharpoons\HH\rightleftharpoons\C$.
The $\C$-valued Hermitian structure is simply the inner product $\inner{\,,\,}_{\HH}$ of $\HH$, and
$$
(\varphi,\psi)_{\E,\B}:=\varphi\inner{\psi,\,.\,}_{\HH} \;,\qquad\forall\;\varphi,\psi\in\HH.
$$

Strong Morita equivalence is an equivalence relation, see
e.g.~\cite[\S2.4]{Lan98book}:
\begin{list}{}{\itemsep=3pt \leftmargin=0em}
\item {\bf Reflexivity:} for any $C^*$-algebra $\A$, one has $\A\rightleftharpoons\A\rightleftharpoons\A$, where the left and right Hermitian structures
on the free module $\A$ are given by $(a,b)\mapsto ab^*$ and $(a,b)\mapsto a^*b$.

\item {\bf Symmetry:} if $\B\rightleftharpoons\E\rightleftharpoons\A$, then
$\A\rightleftharpoons\overline{\E}\rightleftharpoons\B$, where
$\overline\E$ is conjugate to $\E$. By definition,
elements of $\overline\E$ are in bijection with elements of $\E$, and we
will denote by $\eta^*\in\overline\E$ the element corresponding to $\eta\in\E$.
The left $\A$-$\B$-bimodule structure is defined as
$a\eta^*:=(\eta a^*)^*$ and $\eta^*b:=(b^*\eta)^*$.
The Hermitian structures are given by $(\eta^*,\xi^*)_{\overline\E,\A}:=
(\eta,\xi)_{\E,\A}$ and $(\eta^*,\xi^*)_{\overline\E,\B}:=(\eta,\xi)_{\E,\B}$.

\item {\bf Transitivity:} given two Morita-equivalence bimodules
${_{\mc{C}}\E_{\B}}$ and $_{\B}\E'\!\!{_{\A}}$, their tensor product
$_{\mc{C}}\E''\!\!{_{\A}}:={_{\mc{C}}\E_{\B}}\otimes_{\B}{_{\B}\E'\!\!{_{\A}}}$,
with Hermitian structures
$$
(\eta_1\otimes_{\B}\xi_1,\eta_2\otimes_{\B}\xi_2)_{\E'',\A}
:=\bigl(\xi_1,(\eta_1,\eta_2)_{\E,\B}\cdot\xi_2\bigr){_{\E',\A}}
$$
and
$$
(\eta_1\otimes_{\B}\xi_1,\eta_2\otimes_{\B}\xi_2)_{\E'',\mc{C}}
:=\bigl(\eta_1\cdot(\xi_1,\xi_2)_{\E',\B}\,,\eta_2\bigr){_{\E,\mc{C}}}
$$
is a Morita-equivalence bimodule between $\A$ and $\mc{C}$.
\end{list}

For strongly Morita equivalent $C^*$-algebras $\A$ and $\B$, if $\mc{M}$ is a Morita-equivalence bimodule between $\A$ and $\B$
and $\mc{N}$ is a Morita-equivalence bimodule between $\B$ and $\A$, one still has the bimodule isomorphisms $\mc{M}\otimes_{\B}\mc{N}\simeq\A$
and $\mc{N}\otimes_{\A}\mc{M}\simeq\B$, except that here the completed tensor product of Hilbert modules is used. This can be indeed taken as
an equivalent definition of strong Morita equivalence, see e.g.~\cite[Def.~4.9]{GVF01}. Another useful characterization is the following
\cite[Thm.~4.26]{GVF01}: $\A$ and $\B$ are strong Morita equivalent if{}f there is a full right Hilbert $\A$-module $\mc{E}$
such that $\B\simeq\mathrm{End}^0_{\A}(\E)$ (adjointable endomorphisms
of $\E$).

\subsection{Hermitian connections and fluctuations}\label{sec:2.3}

Any \emph{unital} spectral triple $(\A,\HH,D)$ has associated a canonical first order differential calculus $(\Omega^1_D(\A),\de_D)$ over $\A$, where $\Omega^1_D(\A)$ is the linear span of elements of the form $a[D,b]$, with $a,b\in\A$, and $\de_D a:=[D,a]$. In this review we will only consider differential calculi coming from spectral triples.

A connection on a right $\A$-module $\E$
is a linear map $\nabla:\E\to\E\otimes_{\A}\Omega^1_D(\A)$ satisfying
the Leibniz rule:
\begin{equation}\label{eq:Leibniz}
\nabla(\eta a)=(\nabla\eta)a+\eta\otimes_{\A}\de_D a \;,\qquad\forall\;a\in\A,\eta\in\E \;.
\end{equation}
Connections on left modules are defined in a similar way.

If $\E$ is a right pre-Hilbert $\A$-module, with Hermitian structure $(\,,\,)_{\E}$, the connection is called itself \emph{Hermitian} if:
\begin{equation}\label{eq:compat}
(\eta,\nabla\xi)_{\E}-(\nabla\eta,\xi)_{\E}=\de_{D}(\eta,\xi)_{\E} \;.
\end{equation}

Given a spectral triple $(\A,\HH,D)$,  an
Hermitian connection $\nabla_{\!D}$ on the right
pre-Hilbert $\A$-module $\E=\A$ (with Hermitian structure
$(a,b)_{\A}=a^*b)$ is
\begin{equation}\label{eq:nablaD}
\nabla_{\!D}(a)=1\otimes_{\A}[D,a]=1\otimes_{\A} \de_D a \;.
\end{equation}
This construction extends trivially to the module $\A^n$, for any
$n\geq 1$. Projecting the Hermitian structure and trivial connection,
one then put a
Hermitian connection, called the Grassmannian connection, on any
finitely generated projective right $\A$-module $\E$.

The vector space $\HHea=\E\otimes_{\A}\HH$ is a Hilbert space with inner product
$$
\inner{\eta_1\otimes\psi_1,\eta_2\otimes\psi_2}_{\HHea}:=
\inner{\psi_1,(\eta_1,\eta_2)_{\E}\,\psi_2}_{\HH} \;.
$$
It carries a natural bounded representation of
the algebra $\Ae:=\mathrm{End}_{\A}(\E)$ of right $\A$-linear endomorphisms of $\E$.
However $D$ is not $\A$-linear, so that the operator $1\otimes D$ on $\E\otimes\HH$ does not
define a (unbounded) operator on $\HHea$.
This can be cured defining the product $1\otimes_\nabla D$ by
\begin{equation}\label{eq:Dprime}
(1\otimes_\nabla D)(\eta\otimes_{\A}\psi):=\eta\otimes_{\A}D\psi+(\nabla\eta)\psi \;,\qquad\forall\;\eta\in\E,\psi\in\HH.
\end{equation}
For any Hermitian connection $\nabla$, eq.~\eqref{eq:Dprime} gives a well defined operator on $\HHea$, since thanks to \eqref{eq:Leibniz}, one has
$$
(1\otimes_\nabla D)(\eta a\otimes\psi)=(1\otimes_\nabla D)(\eta\otimes a\psi)
$$
for all $a\in\A$.{\footnote{We stress that $\nabla\eta\in\E\otimes_{\A}\Omega^1_D(\A)$ and
elements of $\Omega^1_D(\A)$ are represented by bounded operator on $\HH$: in the expression
$(\nabla\eta)\psi$ this action of $1$-forms on $\HH$ is understood, and the result is
then an element of $\E\otimes_{\A}\HH=\HHea$.}}
The result is a spectral triple $(\Ae,\HHea,1\otimes_{\nabla}D)$, cf.~\cite{CC06} or~\cite[\S10.8]{CM08} for the details.

\begin{df}
The triple $(\Ae,\HHea,1\otimes_{\nabla}D)$ is called \emph{fluctuation} of $(\A,\HH,D)$ in the direction
of $(\E,\nabla)$.
\end{df}

If $\E$ is a pre-Hilbert module, but is not finitely generated projective,
one can still perform the above construction (in this case $\E\otimes_{\A}\HH$
is a pre-Hilbert space, and must be completed),
with $\Ae$ replaced by any subalgebra of the algebra of adjointable endomorphisms of $\E$.
But now $(\Ae,\HHea,1\otimes_{\nabla}D)$ is not necessarily a spectral triple,
since the conditions involving the Dirac operator have to be verified case by case.
A set of conditions on $\E$ that guarantees that the above construction works
is given in~\cite{Mes09}.

\begin{lemma}\label{lemma:2.1}
Let $(\A,\HH,D)$ be a unital spectral triple and $(\E,\nabla)=(\A,\nabla_{\!D})$, with
canonical Hermitian structure $(a,b)_{\A}=a^*b$. Then
$$
(\Ae,\HHea,1\otimes_{\nabla}D)\sim_u(\A,\HH,D) \;.
$$
\end{lemma}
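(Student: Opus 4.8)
The plan is to exhibit an explicit unitary map $U:\HHea\to\HH$ and check the two (or three, in the even case) defining conditions of unitary equivalence. Since $\E=\A$ as a right $\A$-module, the Hilbert space $\HHea=\A\otimes_\A\HH$ is canonically identified with $\HH$ via $U(a\otimes_\A\psi):=a\psi$ (here $a\psi$ means $\pi(a)\psi$). First I would verify that this $U$ is well defined and unitary: well-definedness is the statement that $a b\otimes_\A\psi\mapsto ab\psi$ agrees with $a\otimes_\A b\psi\mapsto ab\psi$, which is immediate from the definition of the tensor product over $\A$; unitarity follows from the definition of the inner product on $\HHea$, namely $\inner{a_1\otimes\psi_1,a_2\otimes\psi_2}_{\HHea}=\inner{\psi_1,(a_1,a_2)_{\A}\psi_2}_{\HH}=\inner{\psi_1,a_1^*a_2\psi_2}_{\HH}=\inner{a_1\psi_1,a_2\psi_2}_{\HH}$. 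Surjectivity uses that $\pi$ is non-degenerate (equivalently, unital in this setting), so $\A\HH$ is dense, hence all of $\HH$ after the completion that is implicit in forming $\HHea$.

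Next I would identify $\Ae=\mathrm{End}_\A(\A)$ with $\A$ acting by left multiplication: a right $\A$-linear endomorphism $T$ of the free rank-one module is determined by $T(1)=:a_T\in\A$, and $T(b)=a_Tb$. Under this identification I must check that $\mathrm{Ad}_U$ carries the representation of $\Ae$ on $\HHea$ to the original representation of $\A$ on $\HH$. The representation of $T\in\Ae$ on $\HHea$ sends $b\otimes_\A\psi$ to $(Tb)\otimes_\A\psi=a_Tb\otimes_\A\psi$; applying $U$ gives $a_Tb\psi=\pi(a_T)(b\psi)$, so $U\,\pi_{\HHea}(T)\,U^*=\pi(a_T)$, which is exactly condition i).

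For condition ii) I would compute $1\otimes_\nabla D$ with $\nabla=\nabla_{\!D}$. By \eqref{eq:Dprime} and \eqref{eq:nablaD},
\begin{equation*}
(1\otimes_{\nabla_{\!D}}D)(a\otimes_\A\psi)=a\otimes_\A D\psi+(\nabla_{\!D}a)\psi
=a\otimes_\A D\psi+(1\otimes_\A[D,a])\psi
=1\otimes_\A aD\psi+1\otimes_\A[D,a]\psi
=1\otimes_\A Da\psi,
\end{equation*}
where in the third equality I used $a\otimes_\A D\psi=1\otimes_\A aD\psi$ (right $\A$-linearity of the tensor product). Applying $U$ gives $U(1\otimes_{\nabla_{\!D}}D)(a\otimes_\A\psi)=Da\psi=D\,U(a\otimes_\A\psi)$, i.e.\ $U(1\otimes_{\nabla_{\!D}}D)=DU$ on the dense domain, which is condition ii). In the even case one checks similarly that the grading $\gamma_{\HHea}=1\otimes\gamma$ satisfies $U(1\otimes\gamma)=\gamma U$, which is immediate from the formula for $U$ since $1\otimes\gamma$ sends $a\otimes_\A\psi$ to $a\otimes_\A\gamma\psi$.

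The only genuinely delicate point is bookkeeping about domains and completions: one should note that $U$ intertwines $\mathrm{Dom}(1\otimes_{\nabla_{\!D}}D)$ with $\mathrm{Dom}(D)$ (the natural domain is the span of $a\otimes_\A\psi$ with $\psi\in\mathrm{Dom}(D)$, since $[D,a]$ is bounded), so that the identity $U(1\otimes_{\nabla_{\!D}}D)=DU$ is an equality of self-adjoint operators and not merely of symmetric restrictions; and that in the non-unital-representation caveat we have already reduced, by the standing assumption, to $\pi$ faithful and non-degenerate so that $U$ is onto. Everything else is a routine unwinding of the definitions of $\HHea$, $\Ae$, and $1\otimes_\nabla D$ recalled just above, so the proof is short.
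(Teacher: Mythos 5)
Your proof is correct and follows essentially the same route as the paper: the same canonical unitary $U(a\otimes_\A\psi)=a\psi$, the same verification of isometry from the Hermitian structure $(a,b)_\A=a^*b$, and the same computation showing $U(1\otimes_{\nabla_{\!D}}D)=DU$ via the Leibniz-type cancellation $aD\psi+[D,a]\psi=Da\psi$. The extra details you supply (the identification $\Ae=\mathrm{End}_\A(\A)\simeq\A$ acting by left multiplication, the grading in the even case, and the domain bookkeeping) are correct and only make explicit what the paper leaves implicit.
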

\begin{proof}
The left $\A$-module map $U:\E\otimes_{\A}\HH\to\HH$ defined by
$U(a\otimes_{\A}\psi):=a\psi:=\psi'$
is invertible with inverse $U^{-1}(\psi')=1\otimes_{\A}\psi'=a\otimes_{\A}\psi$.
It is isometric, hence unitary, since for all $a,a'\in\A$ and $\psi,\psi'\in\HH$:
\begin{align*}
\inner{a\otimes_{\A}\psi,a'\otimes_{\A}\psi'}_{\E\otimes_{\A}\HH} &=
\inner{\psi,(a,a')_{\E}\psi'}_{\HH}=
\inner{\psi,a^*a'\psi'}_{\HH} \\[2pt]
&=\inner{a\psi,a'\psi'}_{\HH}=
\inner{U(a\otimes_{\A}\psi),U(a'\otimes_{\A}\psi')}_{\HH} \;.
\end{align*}
Note that $1\otimes_{\nabla}D$ is self-adjoint on the domain
$\E\otimes_{\A}\mathrm{Dom}(D)=U^{-1}(\mathrm{Dom}(D))$,
and on this domain:
\begin{align*}
U(1\otimes_\nabla D)(a\otimes_{\A}\psi)
&= U(a\otimes_{\A}D\psi)+U\big((\nabla_{\!D}a)\psi\big) \\[2pt]
&= U(a\otimes_{\A}D\psi)+U\big(1\otimes_{\A}[D,a]\psi\big) \\[2pt]
&= aD\psi+[D,a]\psi=D(a\psi)=DU(a\otimes_{\A}\psi) \;.
\end{align*}
Hence $U(1\otimes_{\nabla}D)=DU$.
\end{proof}

The operator $\nabla_{\!D}$ makes sense only if $\A$ is unital.
If $\A$ is non-unital, one possibility is to choose a preferred unitization,
as explained in~\cite{GGISV04} (see also~\cite{Lod97} for the
corresponding problem in Hochschild homology). Since finite-dimensional pre $C^*$-algebras
are always unital, for the time being we ignore the complications of
non-unital algebras.

\subsection{Correspondences between spectral triples}\label{sec:mor}

In this section all algebras and Hilbert spaces are
finite dimensional. This has several
advantages: every finite-dimensional pre $C^*$-algebra is a
$C^*$-algebra, hence unital since by the classification
theorem $\A\simeq\bigoplus_{k=0}^mM_{n_k}(\C)$ for suitable $m,n_k$;
pre-Hilbert modules and Hilbert modules are the same (so that we can use the
same object to define connections and establish Morita equivalence).

\medskip

We want to define a category of spectral triples, called \emph{Morita category}
for its relation with Morita equivalence. Notice that from Lemma \ref{lemma:2.1} the pair
$(\E,\nabla)=(\A,\nabla_{\!D})$, that we would like to interpret as the identity morphism, transforms
$(\A,\HH,D)$ into a unitary equivalent spectral triple. This motivates the following definition.

\begin{df}
A \emph{correspondence}:
$$
(\A,\HH,D)\xrightarrow{(\E,\nabla,U)}(\A',\HH',D')
$$
between two finite-dimensional spectral triples is the datum of an $\A'$-$\A$ Morita equivalence
bimodule $\E$, a connection $\nabla:\E\to\E\otimes_{\A}\Omega^1_D(\A)$, and a unitary equivalence
$U:\E\otimes_{\A}\HH\to\HH'$ between the fluctuation $(\Ae,\HHea,1\otimes_{\nabla}D)$
and the target spectral triple $(\A',\HH',D')$.
\end{df}

Since here we assume `a priori' that the target of a correspondence is a spectral triple,
it is not necessary to explicitly require $\nabla$ to be Hermitian.
Moreover, since $\HH$ and $\HH'$ are finite-dimensional, $\E$ must be a finite-dimensional vector space as well.

\begin{df}
An \emph{inner fluctuation} is a correspondence of the form
$$
(\A,\HH,D)\xrightarrow{(\A,\nabla^\omega_{\!D},m)}(\A,\HH,D+\omega) \;,
$$
where 
$m:\A\otimes_{\A}\HH\to\HH$ is the multiplication map,
$\omega=\omega^*\in\Omega^1_D(\A)$ is the \emph{connection $1$-form} and
$$
\nabla^\omega_{\!D}(a):=1\otimes_{\A}\big\{[D,a]+\omega a\big\} \;.
$$
\end{df}

\begin{rem}
With our definition, correspondences are transformations of spectral triples that contain
fluctuations and unitary equivalences as subsets. A fluctuation $(\E,\nabla)$ is the same
as the correspondence $(\E,\nabla,\id_{\HH})$ (inner fluctuations strictly speaking are not
exactly fluctuations). A unitary equivalence $U$ is the same as the correspondence
$(\A,\nabla_{\!D},U\circ m)$.
\end{rem}

To interpret correspondences as morphisms in a category, we need to show that for each object there exists an identity morphism, that they can be composed, and that composition is associative.

\begin{prop}\label{prop:idmor}
For every finite-dimensional spectral triple there is a correspondence:
$$
(\A,\HH,D)\xrightarrow{(\A,\nabla_{\!D},m)}(\A,\HH,D)
$$
where $\nabla_{\!D}$ is given by \eqref{eq:nablaD}.
\end{prop}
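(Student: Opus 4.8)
The plan is to verify that the triple $(\A,\nabla_{\!D},m)$ satisfies all three requirements in the definition of a correspondence $(\A,\HH,D)\to(\A,\HH,D)$. First I would check that $\E=\A$, equipped with the canonical $\A$-valued Hermitian structure $(a,b)_{\A}=a^*b$, is an $\A$-$\A$ Morita-equivalence bimodule: this is precisely reflexivity of strong Morita equivalence recalled in \S\ref{sec:2.2}, where the left and right Hermitian structures on the free module $\A$ are $(a,b)\mapsto ab^*$ and $(a,b)\mapsto a^*b$, and fullness is immediate since $\A$ is unital (the span of $(a,b)_{\A}$ contains $1^*\cdot 1=1$). Second, I would note that $\nabla_{\!D}(a)=1\otimes_{\A}[D,a]$ defined in \eqref{eq:nablaD} is a connection in the sense of \eqref{eq:Leibniz}: this is the standard Leibniz-rule computation, and it is Hermitian by a one-line check against \eqref{eq:compat}.

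Third, and this is the substantive point, I would invoke Lemma \ref{lemma:2.1}, which tells us that the fluctuation $(\Ae,\HHea,1\otimes_{\nabla}D)$ in the direction of $(\A,\nabla_{\!D})$ is unitarily equivalent to $(\A,\HH,D)$ itself via the unitary $U:\A\otimes_{\A}\HH\to\HH$, $U(a\otimes_{\A}\psi)=a\psi$. The only thing to add is that this $U$ is exactly the multiplication map $m$ named in the statement, so that $(\A,\HH,D)\xrightarrow{(\A,\nabla_{\!D},m)}(\A,\HH,D)$ is a bona fide correspondence with the required target. One should also observe that $\Ae=\mathrm{End}_{\A}(\A)\simeq\A$ (acting by left multiplication) so that the target algebra is genuinely $\A$ and not merely isomorphic to it under a relabelling — but this identification is built into Lemma \ref{lemma:2.1}, whose unitary $U$ intertwines the two algebra actions.

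I do not expect any real obstacle here: the proposition is essentially a repackaging of Lemma \ref{lemma:2.1} together with the reflexivity clause of Morita equivalence, and the author's proof will almost certainly just say ``this follows from Lemma \ref{lemma:2.1}'' after noting that $U=m$. The one place demanding a moment's care is the bookkeeping of which Hermitian structure sits on which side of $\A$ viewed as a bimodule over itself, and checking the compatibility condition $\eta(\xi,\zeta)_{\E,\A}=(\eta,\xi)_{\E,\B}\zeta$ for the bimodule $_{\A}\A_{\A}$ — but with both structures given by the formulas in \S\ref{sec:2.2} this reduces to the associativity identity $\eta(\xi^*\zeta)=(\eta\xi^*)\zeta$ in $\A$, which is trivial. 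Hence the write-up is short: assemble the three ingredients, cite Lemma \ref{lemma:2.1} for the unitary-equivalence datum, and identify $U$ with $m$.
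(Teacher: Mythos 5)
Your proposal is correct and matches the paper's argument: the paper's entire proof is the one-line citation of Lemma \ref{lemma:2.1}, whose unitary $U(a\otimes_{\A}\psi)=a\psi$ is precisely the multiplication map $m$, exactly as you anticipate. The extra checks you supply (reflexivity of strong Morita equivalence for $\A$ as a bimodule over itself, the Leibniz rule for $\nabla_{\!D}$, and the identification $\Ae\simeq\A$) are the implicit bookkeeping the paper leaves to the reader, so your write-up is a correct, slightly more explicit version of the same proof.
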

\begin{proof}
It is a consequence of Lemma \ref{lemma:2.1}.
\end{proof}

To prove that the composition of correspondences is a correspondence, we first show that the natural composition of two fluctuations is still a fluctuation.

\begin{lemma}\label{lemma:2.4}
Given any two fluctuations
\begin{equation}\label{eq:flucone}
(\A,\HH,D)
\xrightarrow{(\E,\nabla)}
\bigl(\Ae,\HHea,1\otimes_{\nabla}D\bigr)
=:(\A',\HH',D') \;,
\end{equation}
and
\begin{equation}\label{eq:fluctwo}
(\A',\HH',D')
\xrightarrow{(\E',\nabla')}
\bigl(\A'_{\E'},\HH'_{\E'},1\otimes_{\nabla'}D'\bigr)
=:(\A'',\HH'',D'') \;,
\end{equation}
then:
\begin{itemize}\itemsep=0pt
\item[i)] there exists an $\A'$-$\A$ bimodule map
$$
\sigma\;:\;\Omega^1_{D'}(\A')\otimes_{\A'}\E\to \E\otimes_{\A}\Omega^1_{D}(\A)
$$
defined by
\begin{equation}\label{eq:sigma}
\sigma(a'[D',b']\otimes_{\A'}\eta)=a'\nabla(b'\eta)-a'b'\nabla(\eta) \;,
\end{equation}
for all $a',b'\in\A'$ and $\eta\in\E$;

\item[ii)] a connection $\nabla'':\E''\to\E''\otimes_{\A}\Omega^1_{D}(\A)$
on $\E'':=\E'\otimes_{\A'}\E$ is defined by:
\begin{equation}\label{eq:nabla2nd}
\nabla''(\eta'\otimes_{\A'}\eta)=(\id_{\E'}\otimes_{\A'}\sigma)\bigl\{\nabla'(\eta')\otimes_{\A'}\eta\bigr\}+\eta'\otimes_{\A'}\nabla(\eta) \;,
\end{equation}
for all $\eta'\in\E'$ and $\eta\in\E$;

\item[iii)] a fluctuation is given by $(\A,\HH,D)\xrightarrow{(\E'',\nabla'')}(\A'',\HH'',D'')$.
\end{itemize}
\end{lemma}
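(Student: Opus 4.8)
The plan is to verify the three claims in order, since each builds on the previous one. For (i), the key point is that the proposed formula for $\sigma$ on the generators $a'[D',b']\otimes_{\A'}\eta$ is well-defined: the only relations among such generators come from the $\A'$-bilinearity of the tensor product over $\A'$ and from the Leibniz rule $[D',b'c'] = [D',b']c' + b'[D',c']$ inside $\Omega^1_{D'}(\A')$. So first I would check that the right-hand side $a'\nabla(b'\eta) - a'b'\nabla(\eta)$ respects $a'[D',b']c'\otimes_{\A'}\eta = a'[D',b']\otimes_{\A'}c'\eta$ (immediate once one uses the Leibniz rule \eqref{eq:Leibniz} for $\nabla$ on $\E$ to expand $\nabla(b'c'\eta)$ and $\nabla(c'\eta)$), and then that it is compatible with the Leibniz rule on $\Omega^1_{D'}$. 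The latter is the more delicate bookkeeping step, but it reduces to the identity $\nabla((b'c')\eta) - (b'c')\nabla(\eta) = \big(\nabla(b'(c'\eta)) - b'\nabla(c'\eta)\big) + b'\big(\nabla(c'\eta) - c'\nabla(\eta)\big)$, which is just associativity of the module action together with two applications of the definition. The $\A'$-$\A$ bimodule structure on $\Omega^1_{D'}(\A')\otimes_{\A'}\E$ comes from multiplication on the left factor (left $\A'$) and on $\Omega^1_D(\A)$ inside $\E\otimes_{\A}\Omega^1_D(\A)$ on the right (right $\A$), and one checks $\sigma$ intertwines these directly from the formula.

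For (ii), the content is that $\nabla''$ defined by \eqref{eq:nabla2nd} is a well-defined connection on $\E'' = \E'\otimes_{\A'}\E$ valued in $\E''\otimes_{\A}\Omega^1_D(\A)$. Well-definedness over $\A'$ requires checking that $\nabla''(\eta'a'\otimes_{\A'}\eta) = \nabla''(\eta'\otimes_{\A'}a'\eta)$; expanding both sides, the cross-terms are exactly where $\sigma$ enters, and they match precisely because $\sigma(a'[D',b']\otimes_{\A'}\eta)$ was built to absorb the difference $\nabla(a'\eta) - a'\nabla(\eta)$ — indeed $\sigma(\de_{D'}a'\otimes_{\A'}\eta) = \nabla(a'\eta) - a'\nabla(\eta)$. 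One also uses the Leibniz rule for $\nabla'$ on $\E'$ for the first summand. Then the Leibniz rule for $\nabla''$ itself, $\nabla''(\eta''a) = (\nabla''\eta'')a + \eta''\otimes_{\A}\de_D a$ for $a\in\A$, follows because $a$ acts only on the $\E$-factor: the first term in \eqref{eq:nabla2nd} is $\A$-linear in that slot and the second term produces the $\de_D a$ contribution via the Leibniz rule for $\nabla$.

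For (iii), I would identify the fluctuation $(\A,\HH,D)\xrightarrow{(\E'',\nabla'')}(\A'',\HH'',D'')$ with the composite of the two given fluctuations by producing the canonical isomorphisms on all three pieces of the spectral triple. On algebras: $\A''_{\E''} = \mathrm{End}_{\A}(\E'\otimes_{\A'}\E) \simeq \mathrm{End}_{\A'}(\E') =: (\A')_{\E'}$ when $\E$ is a Morita-equivalence bimodule (this is where Morita invariance of endomorphism algebras is used — one can cite the characterization \cite[Thm.~4.26]{GVF01} recalled in \S\ref{sec:2.2}, or argue directly using $\overline{\E}\otimes_{\A}\E\simeq\A'$). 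On Hilbert spaces: the associativity isomorphism $(\E'\otimes_{\A'}\E)\otimes_{\A}\HH \simeq \E'\otimes_{\A'}(\E\otimes_{\A}\HH) = \E'\otimes_{\A'}\HH'$ is an isometry for the inner products defined via the Hermitian structures, using the compatibility of nested Hermitian structures (exactly the formulas for $(\,,\,)_{\E'',\A}$ recalled under ``Transitivity'' in \S\ref{sec:2.2}). On Dirac operators: under this isomorphism $1\otimes_{\nabla''}D$ must go to $1\otimes_{\nabla'}(1\otimes_{\nabla}D) = 1\otimes_{\nabla'}D'$, and checking this on elementary tensors $\eta'\otimes_{\A'}\eta\otimes_{\A}\psi$ is a direct expansion: $1\otimes_{\nabla''}D$ contributes $\eta'\otimes_{\A'}\eta\otimes_{\A}D\psi + (\nabla''(\eta'\otimes_{\A'}\eta))\psi$, and the definition \eqref{eq:nabla2nd} of $\nabla''$ together with \eqref{eq:Dprime} applied twice reproduces exactly the action of $1\otimes_{\nabla'}D'$. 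The main obstacle is the bookkeeping in (i) and (ii): making sure the map $\sigma$ is the unique natural device that makes $\nabla''$ well-defined, and keeping straight which tensor product (over $\A$ vs. over $\A'$) each symbol refers to. The verification in (iii) that $1\otimes_{\nabla''}D \cong 1\otimes_{\nabla'}D'$ is conceptually routine but the most computation-heavy; I would organize it by first establishing the three isomorphisms separately and only then checking compatibility with the Dirac operators on generators.
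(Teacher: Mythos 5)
Your proposal is correct and follows essentially the same route as the paper's proof: well-definedness of $\sigma$ over the tensor product via the Leibniz rule for $\nabla$, descent of $\nabla''$ to $\E'\otimes_{\A'}\E$ using the defining property $\sigma(\de_{D'}a'\otimes_{\A'}\eta)=\nabla(a'\eta)-a'\nabla(\eta)$, and identification of $1\otimes_{\nabla''}D$ with $1\otimes_{\nabla'}D'$ on elementary tensors via the key operator identity expressing the action of $\Omega^1_{D'}(\A')$ on $\HH'=\E\otimes_{\A}\HH$ through $\sigma$. The only cosmetic difference is that you make the Morita identification $\mathrm{End}_{\A}(\E'\otimes_{\A'}\E)\simeq\mathrm{End}_{\A'}(\E')$ explicit, which the paper leaves implicit.
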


\begin{proof}
i) We can define an $\A'$-$\A$ bimodule map
$$
\sigma\;:\;\Omega^1_{D'}(\A')\otimes\E\to \E\otimes_{\A}\Omega^1_{D}(\A)
$$
by the same formula as in \eqref{eq:sigma}.
This is by construction a left $\A'$-module map,
but it is also a right $\A$-module map since by \eqref{eq:Leibniz} one has:
\begin{align*}
\sigma(a'[D',b']\otimes\eta a) &=a'\nabla(b'\eta a)-a'b'\nabla(\eta a) \\[2pt]
&=a'\nabla(b'\eta)a +a'b'\eta\otimes_{\A}d_D(a) -a'b'\nabla(\eta)a-
a'b'\eta\otimes_{\A}d_D(a) \\[2pt]
&=\sigma(a'[D',b']\otimes\eta)a \;,
\end{align*}
for all $a\in\A$.
Since
\begin{align*}
\sigma([D',a']b'\otimes\eta) &-\sigma([D',a']\otimes b'\eta) =
\sigma([D',a'b']\otimes\eta-a'[D',b']\otimes\eta)-\sigma([D',a']\otimes b'\eta) \\[2pt]
&=
\nabla(a'b'\eta)-a'b'\nabla(\eta)
-a'\nabla(b'\eta)+a'b'\nabla(\eta)
-\nabla(a'b'\eta)+a'\nabla(b'\eta) \\[2pt]
&=0
\end{align*}
for all $a',b'\in\A'$ and $\eta\in\E$, then $\sigma$ descends to an $\A'$-$\A$ bimodule map
(which we denote by the same symbol) with domain $\Omega^1_{D'}(\A')\otimes_{\A'}\E$.

\smallskip

\noindent ii)
With \eqref{eq:sigma} we can define a linear map $\nabla'':\E'\otimes\E\to\E''\otimes_{\A}\Omega^1_{D}(\A)$ as follows:
$$
\nabla''(\eta'\otimes\eta)=(\id_{\E'}\otimes_{\A'}\sigma)\bigl\{\nabla'(\eta')\otimes_{\A'}\eta\bigr\}+\eta'\otimes_{\A'}\nabla(\eta) \;.
$$
Thanks to the presence of $\sigma$, the image is in $\E''\otimes_{\A}\Omega^1_{D}(\A)=\E'\otimes_{\A'}\E\otimes_{\A}\Omega^1_{D}(\A)$ as requested.
Now, let $J\subset\E'\otimes\E$ be the ideal spanned by elements $\eta'a'\otimes\eta-\eta'\otimes a'\eta$,
with $a'\in\A'$, $\eta\in\E$ and $\eta'\in\E'$. We need to show that $J\subset\ker\nabla''$, so that
$\nabla''$ descends to a linear map (that we denote by the same symbol):
$$
\nabla'' : \E''\to\E''\otimes_{\A}\Omega^1_{D}(\A) \;.
$$
By the Leibniz rule
\begin{multline*}
\nabla''(\eta'a'\otimes\eta-\eta'\otimes a'\eta) =
(\id_{\E'}\otimes_{\A'}\sigma)\bigl\{\nabla'(\eta')a'\otimes_{\A'}\eta\bigr\}
-(\id_{\E'}\otimes_{\A'}\sigma)\bigl\{\nabla'(\eta')\otimes_{\A'} a'\eta\bigr\} \\[2pt]
+\eta'\otimes_{\A'}\sigma([D',a']\otimes_{\A'}\eta)+\eta'a'\otimes_{\A'}\nabla(\eta)-\eta'\otimes_{\A'}\nabla(a'\eta) \\[2pt]
=\eta'\otimes_{\A'}\sigma([D',a']\otimes_{\A'}\eta)+\eta'a'\otimes_{\A'}\nabla(\eta)-\eta'\otimes_{\A'}\nabla(a'\eta)
\end{multline*}
for all $a'\in\A'$. Using the definition of $\sigma$, we get
$$
\nabla''(\eta'a'\otimes\eta-\eta'\otimes a'\eta)=\eta'a'\otimes_{\A'}\nabla(\eta)-\eta'\otimes_{\A'} a'\nabla(\eta)=0 \;.
$$
Now we want to prove that $\nabla''$ is a connection. The Leibniz rule for $\nabla''$ follows
immediately from the Leibniz rule for $\nabla$ and the fact that $\sigma$ is a bimodule map:\vspace{-5pt}
\begin{multline*}
\nabla''(\eta'\otimes_{\A'}\eta a) =
(\id_{\E'}\otimes_{\A'}\sigma)\bigl\{\nabla'(\eta')\otimes_{\A'}\eta a\bigr\}+\eta'\otimes_{\A'}\nabla(\eta a) \\[2pt]
=(\id_{\E'}\otimes_{\A'}\sigma)\bigl\{\nabla'(\eta')\otimes_{\A'}\eta \bigr\}a+\eta'\otimes_{\A'}\nabla(\eta)a
+\eta'\otimes_{\A'}\eta\otimes_{\A}d_Da \\[2pt]
=\nabla''(\eta'\otimes_{\A'}\eta)a+(\eta'\otimes_{\A'}\eta)\otimes_{\A}d_Da \;.
\end{multline*}

\smallskip

\noindent iii) We want to show that
$(\A'',\HH'',D'')\sim_u\bigl(\A'',\E''\otimes_{\A}\HH,1\otimes_{\nabla''} D\bigr)$.
We only need to show that the isomorphism
$\HH''=\E'\otimes_{\A'}(\E\otimes_{\A}\HH)\to
\E''\otimes_{\A}\HH=(\E'\otimes_{\A'}\E)\otimes_{\A}\HH$
intertwines $1\otimes_{\nabla''}D$ with $D''$. Let us identify
these two spaces, and show that $1\otimes_{\nabla''}D=D''$ under
this identification.

Let $\Psi=\eta'\otimes_{\A'}\eta\otimes_{\A}\psi\in\HH''$, with
$\eta\in\E$, $\eta'\in\E'$ and $\psi\in\HH$.
Using \eqref{eq:Dprime}:
\begin{align*}
(1\otimes_{\nabla''}D)\Psi &=\eta'\otimes_{\A'}\eta\otimes_{\A}D\psi+\nabla''(\eta'\otimes_{\A'}\eta)\psi \\
        &=\eta'\otimes_{\A'}\eta\otimes_{\A}D\psi+
        \eta'\otimes_{\A'}\nabla(\eta)\psi+
        (\id_{\E'}\otimes_{\A'}\sigma)\bigl\{\nabla'(\eta')\otimes_{\A'}\eta\bigr\}\psi \\
D''\Psi &=\eta'\otimes_{\A'}D'(\eta\otimes_{\A}\psi)+(\nabla'\eta')(\eta\otimes_{\A}\psi) \\
        &=\eta'\otimes_{\A'}\eta\otimes_{\A}D\psi+\eta'\otimes_{\A'}\nabla(\eta)\psi+(\nabla'\eta')(\eta\otimes_{\A}\psi) \;.
\end{align*}
Hence
$$
(1\otimes_{\nabla''}D-D'')\Psi =
(\id_{\E'}\otimes_{\A'}\sigma)\bigl\{\nabla'(\eta')\otimes_{\A'}\eta\bigr\}\psi-(\nabla'\eta')(\eta\otimes_{\A}\psi) \;.
$$
By definition of connection and of $\Omega^1_{D'}(\A')$, $\nabla'(\eta')$ is a finite sum
$$
\nabla'(\eta')=\sum\nolimits_{ij}\eta'_i\otimes_{\A'}a'_{ij}[D',b'_{ij}]
$$
with $\eta'_i\in\E'$ and $a'_{ij},b'_{ij}\in\A'$. But for all $\eta\otimes_{\A}\psi\in\HH'$, using
\eqref{eq:Dprime} we find
$$
[D',b_{ij}](\eta\otimes_{\A}\psi)=\big\{\nabla(b_{ij}'\eta)-b_{ij}'\nabla(\eta)\big\}\otimes_{\A}\psi
=\sigma([D',b'_{ij}]\otimes_{\A'}\eta)\otimes_{\A}\psi
 \;,
$$
where the latter equality follows from the definition \eqref{eq:sigma}. Therefore
\begin{align*}
(\nabla'\eta')(\eta\otimes_{\A}\psi) &=\sum\nolimits_{ij}\eta'_i\otimes_{\A'}a'_{ij}[D',b_{ij}](\eta\otimes_{\A}\psi) \\
&=\sum\nolimits_{ij}\eta'_i\otimes_{\A'}a'_{ij}\sigma([D',b'_{ij}]\otimes_{\A'}\eta)\otimes_{\A}\psi \\
&=(\id_{\E'}\otimes_{\A'}\sigma)\left(\sum\nolimits_{ij}\eta'_i\otimes_{\A'}a'_{ij}[D',b'_{ij}]\otimes_{\A'}\eta\right)\psi \\
&=(\id_{\E'}\otimes_{\A'}\sigma)\bigl\{\nabla'(\eta')\otimes_{\A'}\eta\bigr\}\psi \;.
\end{align*}
This concludes the proof.
\end{proof}

Note that, strictly speaking, the map at point (iii) should be called a correspondence, since
a unitary equivalence $\E'\otimes_{\A'}(\E\otimes_{\A}\HH)\to (\E'\otimes_{\A'}\E)\otimes_{\A}\HH$
is understood. In the following, this kind of ``trivial'' isomorphisms (corresponding to associativity
of the tensor product operations) will always be omitted.

\smallskip

We denote by $\nabla'\odot\nabla$ the connection \eqref{eq:nabla2nd}.
Part (iii) of last lemma tells us that fluctuations can be composed, the composition
of $(\E,\nabla)$ in \eqref{eq:flucone} and $(\E',\nabla')$ in \eqref{eq:fluctwo} being $(\E'\otimes_{\A'}\E,\nabla'\odot\nabla)$.
We now show that correspondences can be composed as well. Before that, we need a 
``commutation rule'' between unitary equivalences and fluctuations.

\begin{prop}\label{prop:combo}
The composition of transformations:
\begin{equation}\label{eq:combo}
(\A,\HH,D)\xrightarrow{U}(\A',\HH',D')\xrightarrow{(\E',\nabla')}(\A'_{\E'},\HH'_{\E'},1\otimes_{\nabla'}D')
\end{equation}
is a correspondence. More precisely:
\begin{enumerate}
\item let $\E$ be the right $\A$-module given by $\E'$ as a vector space, with module structure 
$(\xi,a)\mapsto \xi\za a:=\xi\!\cdot\!\mathrm{Ad}_U(a)\;\forall\;\xi\in\E=\E'$ and $a\in\A$, and with $\A$-valued Hermitian structure
$(\xi,\eta)_{\E}:=\mathrm{Ad}_{U^*}(\xi,\eta)_{\E'}$ for all $\xi,\eta\in\E=\E'$;
\item let $\,\id\otimes U:\E\otimes_{\A}\HH\to\E'\otimes_{\A'}\HH'$ be the map defined by $\,\xi\otimes\psi\mapsto\xi\otimes U\psi$;
\item let $\nabla:\E\to\E\otimes_{\A}\Omega^1_D(\A)$ be the connection defined by $\nabla=(\id\otimes\mathrm{Ad}_{U^*})\nabla'$.
\end{enumerate}
Then, the correspondence $(\E,\nabla,\id\otimes U)$ has the same target as \eqref{eq:combo}:
\begin{equation}\label{eq:res}
(\A,\HH,D)\xrightarrow{(\E,\nabla,\id\otimes U)}(\A'_{\E'},\HH'_{\E'},1\otimes_{\nabla'}D') \;.
\end{equation}
\end{prop}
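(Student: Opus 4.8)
The strategy is to verify, one bullet at a time, that the data $(\E,\nabla,\id\otimes U)$ defined in items (1)--(3) actually constitute a correspondence with the stated target, i.e.\ that $\E$ is an $\A'_{\E'}$-$\A$ Morita equivalence bimodule, that $\nabla$ is a connection $\E\to\E\otimes_{\A}\Omega^1_D(\A)$, and that $\id\otimes U$ is a unitary equivalence between the fluctuation $(\A_{\E},\HH_{\E},1\otimes_\nabla D)$ and $(\A'_{\E'},\HH'_{\E'},1\otimes_{\nabla'}D')$. First I would check that $\E$, defined as $\E'$ with the right $\A$-action $\xi\za a:=\xi\cdot\mathrm{Ad}_U(a)$ and Hermitian structure $(\xi,\eta)_{\E}:=\mathrm{Ad}_{U^*}(\xi,\eta)_{\E'}$, is a right Hilbert $\A$-module: since $\mathrm{Ad}_U:\A\to\A'$ is a $*$-isomorphism, all the module and sesquilinearity axioms transport verbatim, and fullness of $\E$ over $\A$ follows from fullness of $\E'$ over $\A'$ because $\mathrm{Ad}_{U^*}$ is an isometric $*$-isomorphism. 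The left action of $\A'_{\E'}=\mathrm{End}_{\A'}(\E')$ is unchanged, and a right $\A'$-linear endomorphism of $\E'$ is automatically right $\A$-linear for the twisted action (again because the action is precomposed with $\mathrm{Ad}_U$), so $\A_{\E}=\mathrm{End}_{\A}(\E)$ contains $\A'_{\E'}$; the compatibility of the two Hermitian structures and the norm bounds in the definition of Morita equivalence likewise pass through $\mathrm{Ad}_{U^*}$. Hence $\E$ is an $\A'_{\E'}$-$\A$ Morita equivalence bimodule.

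Next I would check that $\nabla=(\id\otimes\mathrm{Ad}_{U^*})\nabla'$ lands in $\E\otimes_{\A}\Omega^1_D(\A)$ and satisfies the Leibniz rule \eqref{eq:Leibniz}. Here one uses that unitary equivalence gives an isomorphism of first order calculi $\mathrm{Ad}_{U^*}:\Omega^1_{D'}(\A')\to\Omega^1_D(\A)$ sending $a'[D',b']$ to $\mathrm{Ad}_{U^*}(a')[D,\mathrm{Ad}_{U^*}(b')]$, as recorded in \S\ref{sec:2.1}, so $\id\otimes\mathrm{Ad}_{U^*}$ maps $\E'\otimes_{\A'}\Omega^1_{D'}(\A')$ to $\E\otimes_{\A}\Omega^1_D(\A)$ (one must observe that the balancing over $\A'$ on the left matches the twisted balancing over $\A$ on the right, which is exactly the definition of $\za$). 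For the Leibniz rule, $\nabla'$ satisfies $\nabla'(\xi a')=(\nabla'\xi)a'+\xi\otimes_{\A'}[D',a']$; applying $\id\otimes\mathrm{Ad}_{U^*}$ and rewriting $a'=\mathrm{Ad}_U(a)$, the first term becomes $(\nabla\xi)\za a$ and the second becomes $\xi\otimes_{\A}[D,a]=\xi\otimes_{\A}\de_D a$, as required. (Hermiticity of $\nabla$ need not be checked separately, since by the remark after the definition of correspondence the target is assumed to be a spectral triple.)

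The core computation is showing that $\id\otimes U$ is a unitary equivalence intertwining the two fluctuated Dirac operators; this is the step I expect to be the main obstacle, though it is more bookkeeping than difficulty. First, $\id\otimes U:\E\otimes_{\A}\HH\to\E'\otimes_{\A'}\HH'$ is well defined because $\xi\za a\otimes\psi=\xi\cdot\mathrm{Ad}_U(a)\otimes\psi\mapsto\xi\mathrm{Ad}_U(a)\otimes U\psi$ while $\xi\otimes a\psi\mapsto\xi\otimes U(a\psi)=\xi\otimes\mathrm{Ad}_U(a)U\psi$, and these agree after balancing over $\A'$; it is unitary because $\langle\xi\otimes\psi,\eta\otimes\phi\rangle_{\HH_\E}=\langle\psi,(\xi,\eta)_\E\phi\rangle_\HH=\langle\psi,\mathrm{Ad}_{U^*}(\xi,\eta)_{\E'}\phi\rangle_\HH=\langle U\psi,(\xi,\eta)_{\E'}U\phi\rangle_{\HH'}$, using $UD=D'U$'s companion fact that $U$ is unitary and intertwines the algebra actions. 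It conjugates the representation of $\A_\E\supseteq\A'_{\E'}$ to the representation of $\A'_{\E'}$ in the obvious way. Finally, to see $(\id\otimes U)(1\otimes_\nabla D)=(1\otimes_{\nabla'}D')(\id\otimes U)$, I apply both sides to $\xi\otimes_{\A}\psi$: by \eqref{eq:Dprime} the left side is $(\id\otimes U)\bigl(\xi\otimes_{\A}D\psi+(\nabla\xi)\psi\bigr)$, and since $U D=D'U$ and $\nabla\xi=(\id\otimes\mathrm{Ad}_{U^*})\nabla'\xi$ with the $1$-form $\mathrm{Ad}_{U^*}(\omega')$ acting on $\HH$ as $U^*\omega' U$, the term $(\id\otimes U)\bigl((\nabla\xi)\psi\bigr)$ equals $(\nabla'\xi)(U\psi)$; thus the left side becomes $\xi\otimes_{\A'}D'(U\psi)+(\nabla'\xi)(U\psi)$, which by \eqref{eq:Dprime} is precisely $(1\otimes_{\nabla'}D')(\xi\otimes_{\A'}U\psi)=(1\otimes_{\nabla'}D')(\id\otimes U)(\xi\otimes_{\A}\psi)$. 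This identifies the fluctuation $(\A_\E,\HH_\E,1\otimes_\nabla D)$ with $(\A'_{\E'},\HH'_{\E'},1\otimes_{\nabla'}D')$ via $\id\otimes U$, which is exactly the assertion \eqref{eq:res}; the equality of targets with \eqref{eq:combo} is then immediate since \eqref{eq:combo} has the same target by construction.
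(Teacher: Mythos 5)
Your proposal is correct and follows essentially the same route as the paper's proof: well-definedness of $\id\otimes U$ on the balanced tensor product, unitarity via $(\xi,\eta)_{\E}=U^*(\xi,\eta)_{\E'}U$, well-definedness of $\nabla$ and the Leibniz rule via the isomorphism $\mathrm{Ad}_{U^*}:\Omega^1_{D'}(\A')\to\Omega^1_D(\A)$, and the final intertwining computation using \eqref{eq:Dprime}. The only extra content is your explicit verification of the Morita-bimodule axioms for $\E$ (where, note, $\mathrm{End}_{\A}(\E)$ actually \emph{equals} $\A'_{\E'}$, not merely contains it, since $\mathrm{Ad}_U$ is an isomorphism), which the paper states without proof.
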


\begin{proof}
In this proof, let us call $U'=\id\otimes U:\E\otimes\HH\to\E'\otimes\HH'$ the map $\xi\otimes\psi\mapsto \xi\otimes U\psi$.
Then for any $a\in\A$:
$$
U'(\xi\za a\otimes \psi-\xi\otimes a\psi)=
\xi\mathrm{Ad}_{U}(a)\otimes U\psi-\xi\otimes Ua\psi=
\xi a'\otimes\psi'-\xi\otimes a'\psi' \;,
$$
where $a'=\mathrm{Ad}_U(a)\in\A'$ and $\psi'=U\psi\in\HH'$. Thus, $U'$ defines a map $\E\otimes_{\A}\HH\to\E'\otimes_{\A'}\HH'$
that we denote by the same symbol. Similarly $U'^*$ defines a map $\E'\otimes_{\A'}\HH'\to\E\otimes_{\A}\HH$
that we denote by the same symbol. Clearly $U'$ is unitary, indeed:
$$
\inner{\xi\otimes U\psi,\eta\otimes U\varphi}_{\E'\otimes_{\A'}\HH'}
=\inner{U\psi,(\xi,\eta)_{\E'}U\varphi}_{\HH'}=
\inner{\psi,U^*(\xi,\eta)_{\E'}U\varphi}_{\HH} \;.
$$
But $U^*(\xi,\eta)_{\E'}U=(\xi,\eta)_{\E}$, hence
$
\inner{\xi\otimes U\psi,\eta\otimes U\varphi}_{\E'\otimes_{\A'}\HH'}=
\inner{\xi\otimes\psi,\eta\otimes\varphi}_{\E\otimes_{\A}\HH} .
$
This proves that the target Hilbert space in \eqref{eq:res} is indeed $U'(\E\otimes_{\A}\HH)=\E'\otimes_{\A'}\HH'=\HH'_{\E'}$
(with equality, not just isomorphism). Furthermore, clearly $\A_{\E}:=\mathrm{End}_{\A}(\E)=\mathrm{End}_{\A'}(\E')=:\A'_{\E'}$,
so that in \eqref{eq:combo} and \eqref{eq:res} the target algebra is also the same (we stress again that we have
equality, not just isomorphism). It remains to show that $U'(1\otimes_{\nabla}D)=(1\otimes_{\nabla'}D')U'$.

Firstly, we show that $\nabla$ has image in $\E\otimes_{\A}\Omega^1_D(\A)$ and satisfies the Leibniz rule, so that the
connection is well defined. Since $\mathrm{Ad}_{U^*}$ maps $\Omega^1_{D'}(\A')$ to $\Omega^1_D(\A)$, then
$\id\otimes\mathrm{Ad}_{U^*}$ maps $\E'\otimes\Omega^1_{D'}(\A')$ to $\E\otimes\Omega^1_D(\A)$.
For all $\xi\in\E=\E'$, $\omega\in\Omega^1_{D'}(\A')$ and $a'=\mathrm{Ad}_{U}(a)\in\A'$,
$$
(\id\otimes\mathrm{Ad}_{U^*})(\xi a'\otimes\omega-\xi \otimes a'\omega)
=\xi\za a\otimes\mathrm{Ad}_{U^*}(\omega)-\xi \otimes a \mathrm{Ad}_{U^*}(\omega) \;.
$$
Thus, $\id\otimes\mathrm{Ad}_{U^*}$ gives a well-defined map $\E'\otimes_{\A'}\Omega^1_{D'}(\A')\to\E\otimes_{\A}\Omega^1_D(\A)$.
Furthermore,
\begin{multline*}
\nabla(\xi\za a)=(\id\otimes\mathrm{Ad}_{U^*})\nabla'(\xi a') \\[3pt]
=(\id\otimes\mathrm{Ad}_{U^*})\big\{(\nabla'\xi)a'+\xi\otimes_{\A'}[D',a']\big\}
=(\nabla\xi)a+\xi\otimes_{\A}[D,a] \;,
\end{multline*}
where as before $\xi\in\E=\E'$ and $a=\mathrm{Ad}_{U^*}(a')\in\A$. Hence the Leibniz rule
is satisfied.

Finally, for all $\xi\in\E$ and $\psi\in\HH$:
\begin{multline*}
U'(1\otimes_{\nabla}D)(\xi\otimes_{\A}\psi)=
U'(\xi\otimes_{\A}D\psi+(\nabla\xi)\psi) \\
=\xi\otimes_{\A}UD\psi+(\id\otimes\mathrm{Ad}_U)(\nabla\xi)\cdot U\psi
=\xi\otimes_{\A}D'U\psi+(\nabla'\xi)\cdot U\psi \\
=(1\otimes_{\nabla'}D')(\xi\otimes_{\A}U\psi)
=(1\otimes_{\nabla'}D')U'(\xi\otimes_{\A}\psi) \;,
\end{multline*}
hence $U'(1\otimes_{\nabla}D)=(1\otimes_{\nabla'}D')U'$.
\end{proof}

We are now ready to prove that the composition of two correspondences is again a correspondence.

\begin{dfprop}
The composition of two correspondences
$$
(\A,\HH,D)\xrightarrow{(\E,\nabla,U)}(\A',\HH',D')\xrightarrow{(\E',\nabla',U')}(\A'',\HH'',D'')
$$
is the correspondence
$$
(\A,\HH,D)\xrightarrow{(\E'',\nabla'',U'')}(\A'',\HH'',D'')
$$
given by
\begin{equation}\label{eq:correscomp}
\E'':=\E'\otimes_{\mathrm{Ad}_U}\E \;,\qquad
\nabla'':=\{(\id\otimes\mathrm{Ad}_{U^*})\nabla'\}\odot\nabla \;,\qquad
U'':=U'(\id\otimes U) \;.
\end{equation}
\end{dfprop}

\begin{proof}
Let's decompose each correspondence into a fluctuation plus a unitary equivalence:
\begin{multline*}
(\A,\HH,D)\xrightarrow{(\E,\nabla)}(\A_{\E},\HH_{\E},1\otimes_\nabla D)\xrightarrow{U}(\A',\HH',D')\xrightarrow{(\E',\nabla')}
\\[2pt]
\longrightarrow (\A'_{\E'},\HH'_{\E'},1\otimes_{\nabla'}D')\xrightarrow{U'}(\A'',\HH'',D'')
\end{multline*}
From Prop.~\ref{prop:combo}, denoting for obvious reasons still by $\E'$ the module at point 1 of the proposition,
it follows that the composition above produces the same result as
$$
(\A,\HH,D)\xrightarrow{(\E,\nabla)}\ldots
\xrightarrow{(\E',(\id\otimes\mathrm{Ad}_{U^*})\nabla')}\ldots\xrightarrow{\id\otimes U}\ldots
\xrightarrow{U'}(\A'',\HH'',D'')
$$
Since the composition of two fluctuations is a fluctuation, and the composition of two unitary
equivalences is a unitary equivalence, the composition above is a correspondence.
From Lemma \ref{lemma:2.4}, the relations \eqref{eq:correscomp} immediately follow.
\end{proof}

In order to have a category, morphisms must be defined as equivalence classes of correspondences.
We need then to specify when two correspondences are ``equivalent''.

\begin{df}\label{df:similarity}
Two correspondences
$$
(\A,\HH,D)\xrightarrow[(\E',\nabla',U')]{(\E,\nabla,U)}(\A',\HH',D')
$$
with the same source and target are \emph{similar} if there exists a unitary right $\A$-linear map
$V:\E\to\E'$ such that:
\begin{equation}\label{eq:similarity}
U'(V\otimes_{\A}\id_{\HH})=U \;.
\end{equation}
\end{df}

\begin{lemma}\label{lemma:similarity}
In the notations of previous definition, if $V$ is such a similarity, then the two connections
are related by $(V\otimes_{\A}\id_{\Omega^1_D(\A)})\nabla=\nabla'V$.
\end{lemma}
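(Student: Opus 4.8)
\medskip

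The plan is to recover the relation between $\nabla$ and $\nabla'$ by stripping the Dirac operators off the intertwining relations that the two correspondences carry. The first step is to observe that the similarity datum $V$ induces a unitary $V\otimes_{\A}\id_{\HH}\colon\E\otimes_{\A}\HH\to\E'\otimes_{\A}\HH$: it is well defined since $V$ is right $\A$-linear, and isometric since $V$ preserves the $\A$-valued Hermitian structures, so that
\[
\inner{V\eta\otimes_{\A}\psi_1,V\xi\otimes_{\A}\psi_2}_{\E'\otimes_{\A}\HH}=\inner{\psi_1,(V\eta,V\xi)_{\E'}\psi_2}_{\HH}=\inner{\psi_1,(\eta,\xi)_{\E}\psi_2}_{\HH}=\inner{\eta\otimes_{\A}\psi_1,\xi\otimes_{\A}\psi_2}_{\E\otimes_{\A}\HH} \;;
\]
likewise $V\otimes_{\A}\id_{\Omega^1_D(\A)}$ is well defined on $\E\otimes_{\A}\Omega^1_D(\A)$. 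Since everything is finite-dimensional, $\mathrm{Dom}(D)=\HH$ and there are no domain subtleties.

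The second step uses that, by definition of a correspondence, $U$ is a unitary equivalence of the fluctuation $(\Ae,\E\otimes_{\A}\HH,1\otimes_{\nabla}D)$ with $(\A',\HH',D')$, hence $U(1\otimes_{\nabla}D)=D'U$, and similarly $U'(1\otimes_{\nabla'}D)=D'U'$. Plugging $U=U'(V\otimes_{\A}\id_{\HH})$ from \eqref{eq:similarity} into the first relation and invoking the second yields $U'(V\otimes_{\A}\id_{\HH})(1\otimes_{\nabla}D)=U'(1\otimes_{\nabla'}D)(V\otimes_{\A}\id_{\HH})$; cancelling the invertible $U'$ gives
\[
(V\otimes_{\A}\id_{\HH})(1\otimes_{\nabla}D)=(1\otimes_{\nabla'}D)(V\otimes_{\A}\id_{\HH}) \;.
\]
Evaluating both sides on a simple tensor $\eta\otimes_{\A}\psi$ ($\eta\in\E$, $\psi\in\HH$) with the help of \eqref{eq:Dprime}, the terms $V\eta\otimes_{\A}D\psi$ coincide and cancel, leaving
\[
\bigl((V\otimes_{\A}\id_{\Omega^1_D(\A)})\nabla\eta\bigr)\psi=\bigl(\nabla'(V\eta)\bigr)\psi\qquad\text{for all }\eta\in\E,\ \psi\in\HH,
\]
where $(\,\cdot\,)\psi$ denotes the action of the $\Omega^1_D(\A)$-component on $\HH$, with values in $\E'\otimes_{\A}\HH$.

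The last step, which I expect to be the main obstacle, is to deduce the claimed equality of $\E'\otimes_{\A}\Omega^1_D(\A)$-valued maps from the evaluated identity above; equivalently, to show that $\zeta\in\E'\otimes_{\A}\Omega^1_D(\A)$ with $\zeta\psi=0$ for all $\psi\in\HH$ forces $\zeta=0$ (then take $\zeta=(V\otimes_{\A}\id_{\Omega^1_D(\A)})\nabla\eta-\nabla'(V\eta)$). Here I would use that $\A$ is finite-dimensional, hence semisimple, so the right $\A$-module $\E'$ is projective and $\E'\otimes_{\A}(-)$ is exact; applied to the inclusion of left $\A$-modules $\Omega^1_D(\A)\hookrightarrow\B(\HH)$ (with left action $a\cdot T=aT$) it produces an injection $\E'\otimes_{\A}\Omega^1_D(\A)\hookrightarrow\E'\otimes_{\A}\B(\HH)$. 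Writing $\B(\HH)\cong\HH\otimes_{\C}\HH^*$ as left $\A$-modules, with $\A$ acting only on the $\HH$-factor, gives $\E'\otimes_{\A}\B(\HH)\cong(\E'\otimes_{\A}\HH)\otimes_{\C}\HH^*$, under which $\zeta$ goes to $\sum_{\alpha}(\zeta\psi_{\alpha})\otimes\psi_{\alpha}^*$ for an orthonormal basis $\{\psi_{\alpha}\}$ of $\HH$ with dual basis $\{\psi_{\alpha}^*\}$; by hypothesis this vanishes, hence so does $\zeta$. One can also argue more abstractly: $\E'$ being a Morita-equivalence bimodule, $\E'\otimes_{\A}(-)$ is an equivalence of categories, in particular faithful and exact, which delivers the required injectivity directly. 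Either way, applying this to the chosen $\zeta$ gives $(V\otimes_{\A}\id_{\Omega^1_D(\A)})\nabla=\nabla'V$.
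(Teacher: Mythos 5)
Your proof is correct and follows essentially the same route as the paper's: derive $(V\otimes_{\A}\id_{\HH})(1\otimes_{\nabla}D)=(1\otimes_{\nabla'}D)(V\otimes_{\A}\id_{\HH})$ from \eqref{eq:similarity} together with the two unitary intertwining relations, then evaluate on simple tensors via \eqref{eq:Dprime} and cancel the $D\psi$ terms. The only difference is that you explicitly justify the final step --- that an element of $\E'\otimes_{\A}\Omega^1_D(\A)$ acting as zero on every $\psi\in\HH$ must vanish, via flatness (or Morita-faithfulness) of $\E'\otimes_{\A}(-)$ applied to $\Omega^1_D(\A)\hookrightarrow\B(\HH)\cong\HH\otimes_{\C}\HH^*$ --- a point the paper's proof leaves implicit in the phrase ``since this is valid for all $\eta,\psi$, we get the thesis.''
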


\begin{proof}
By hypothesis
\begin{align*}
1\otimes_{\nabla'}D
&=U'^*D'U'=(V\otimes_{\A}\id_{\HH})(U^*D'U)(V^*\otimes_{\A}\id_{\HH})
\\[2pt]
&=(V\otimes_{\A}\id_{\HH})(1\otimes_{\nabla}D)(V^*\otimes_{\A}\id_{\HH}) \;.
\end{align*}
From \eqref{eq:Dprime}, for all $\eta\in\E'$ and $\psi\in\HH$ one has
$(\nabla'\eta)\psi=(V\otimes_{\A}\id_{\HH})(\nabla V^*\eta)\psi$.
Since this is valid for all $\eta,\psi$, we get the thesis.
\end{proof}

\begin{prop}
Similarity of correspondences is an equivalence relation.
\end{prop}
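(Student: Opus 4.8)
The plan is to verify the three defining properties of an equivalence relation---reflexivity, symmetry, transitivity---directly from Definition \ref{df:similarity}, using only the unitarity and right $\A$-linearity of the intertwiner $V$. Throughout, fix two correspondences with common source $(\A,\HH,D)$ and target $(\A',\HH',D')$, written $(\E,\nabla,U)$ and $(\E',\nabla',U')$; a similarity is a unitary right $\A$-linear $V:\E\to\E'$ with $U'(V\otimes_{\A}\id_{\HH})=U$.

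For \textbf{reflexivity}, I would take $V=\id_{\E}$, which is trivially unitary and right $\A$-linear, and observe that $U(\id_{\E}\otimes_{\A}\id_{\HH})=U$. For \textbf{symmetry}, suppose $V:\E\to\E'$ is a similarity from $(\E,\nabla,U)$ to $(\E',\nabla',U')$. Since $V$ is unitary it is invertible, and $V^{-1}=V^*:\E'\to\E$ is again unitary; it is right $\A$-linear because the inverse of a right $\A$-linear bijection is right $\A$-linear (apply $V^{-1}$ on both sides of $V(\eta a)=(V\eta)a$). From $U'(V\otimes_{\A}\id_{\HH})=U$ one gets $U'=U(V^{-1}\otimes_{\A}\id_{\HH})$, i.e.\ $U(V^{-1}\otimes_{\A}\id_{\HH})=U'$, which is exactly the similarity condition exhibiting $V^{-1}$ as a similarity from $(\E',\nabla',U')$ to $(\E,\nabla,U)$. (One does not need Lemma \ref{lemma:similarity} here, but it guarantees consistency: the connections are automatically matched by $V$, hence by $V^{-1}$ as well.)

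For \textbf{transitivity}, given a third correspondence $(\E'',\nabla'',U'')$ with the same source and target, and similarities $V:\E\to\E'$ (from the first to the second) and $W:\E'\to\E''$ (from the second to the third), I would set $X:=WV:\E\to\E''$. It is unitary as a composite of unitaries and right $\A$-linear as a composite of right $\A$-linear maps. Using functoriality of $(-)\otimes_{\A}\id_{\HH}$, namely $(WV)\otimes_{\A}\id_{\HH}=(W\otimes_{\A}\id_{\HH})(V\otimes_{\A}\id_{\HH})$, we compute
\begin{align*}
U''\bigl(X\otimes_{\A}\id_{\HH}\bigr)
&=U''\bigl(W\otimes_{\A}\id_{\HH}\bigr)\bigl(V\otimes_{\A}\id_{\HH}\bigr)
=U'\bigl(V\otimes_{\A}\id_{\HH}\bigr)=U \;,
\end{align*}
where the middle equality is the similarity condition for $W$ and the last is that for $V$. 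Hence $X$ is a similarity from $(\E,\nabla,U)$ to $(\E'',\nabla'',U'')$.

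There is no serious obstacle: the only point requiring a moment's care is checking that the construction $V\mapsto V\otimes_{\A}\id_{\HH}$ respects inverses and composition---i.e.\ that it is a well-defined functor on the relevant modules---and that the inverse of a unitary right $\A$-linear module map is again of that type, so that symmetry genuinely holds. These are standard facts about balanced tensor products over $\A$, and once they are invoked the three verifications are immediate.
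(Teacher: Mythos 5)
Your proof is correct and follows essentially the same route as the paper: the identity for reflexivity, the adjoint/inverse $V^*$ for symmetry, and the composite $V'V$ for transitivity. You merely spell out the routine checks (right $\A$-linearity of inverses and functoriality of $-\otimes_{\A}\id_{\HH}$) that the paper leaves implicit.
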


\begin{proof}
\textit{Reflexivity}: $V=\id_{\E}$ is a self-similarity of $(\E,\nabla,U)$.
\textit{Symmetry}: if $V$ is a similarity between $(\E,\nabla,U)$ and $(\E',\nabla',U')$, then $V^*$
is a similarity between $(\E',\nabla',U')$ and $(\E,\nabla,U)$.
\textit{Transitivity}: if $V$ is a similarity between $(\E,\nabla,U)$ and $(\E',\nabla',U')$, and $V'$
is a similarity between $(\E',\nabla',U')$ and $(\E'',\nabla'',U'')$, then $V'V$ is a similarity between
$(\E,\nabla,U)$ and $(\E'',\nabla'',U'')$.
\end{proof}

\begin{rem}
Two fluctuations are similar if and only if they are equal.
Indeed, the unitary $V$ in Def.~\ref{df:similarity}, if it exists, is uniquely determined by $V\otimes_{\A}\id_{\HH}=U'^*U$.
But if the two correspondences are fluctuations, then $U=U'=\id_{\HH}$, and so $V$ is the identity as well.
\end{rem}

\begin{rem}
Two unitary equivalences $(\A,\nabla_D,U_1\circ m)$ and $(\A,\nabla_D,U_2\circ m)$ are similar if and only if 
$U_2^*U_1$ commutes with both $\A$ and $D$.
Indeed, the unitary $V:\A\to\A$ in Def.~\ref{df:similarity}, which now always exists, is given by $V(a)=\mathrm{Ad}_{U_2^*U_1}(a)$;
this is right $\A$-linear if and only if $U_2^*U_1$ commutes with all $a\in\A$, that is $V(a)=a$.
Since $D'U_i=U_iD$ for $i=1,2$, clearly $U_2^*U_1$ commutes with $D$.
\end{rem}

\noindent
An equivalence class of correspondences will be called \emph{M-morphism}\footnote{We borrow the terminology from \cite{BCL12}, although these are not the same morphisms.} (M for ``Morita'').

\begin{prop}
Equation \eqref{eq:correscomp} gives a well-defined composition rule for morphisms.
\end{prop}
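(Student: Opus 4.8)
The plan is to verify that the composition formula \eqref{eq:correscomp} is independent of the choice of representatives within each similarity class, i.e.\ that if $(\E_1,\nabla_1,U_1)$ is similar to $(\E_2,\nabla_2,U_2)$ via a unitary $V:\E_1\to\E_2$, and $(\E'_1,\nabla'_1,U'_1)$ is similar to $(\E'_2,\nabla'_2,U'_2)$ via a unitary $V':\E'_1\to\E'_2$, then the composed correspondences obtained from the two pairs are similar. I would first build the natural candidate intertwiner $W:\E''_1=\E'_1\otimes_{\mathrm{Ad}_{U_1}}\E_1\to\E''_2=\E'_2\otimes_{\mathrm{Ad}_{U_2}}\E_2$, namely $W:=V'\otimes V$ on elementary tensors, $W(\eta'\otimes\eta):=V'\eta'\otimes V\eta$. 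The first thing to check is that $W$ is well-defined on the balanced tensor product: one must use that $V$ is right $\A$-linear and that $V'$ intertwines the $\A'$-actions correctly through $\mathrm{Ad}_{U_1}$ and $\mathrm{Ad}_{U_2}$ — here the defining relation $U'_2(V'\otimes_{\A'}\id)=U'_1$ of the similarity, together with $U_2(V\otimes_{\A}\id)=U_1$, pins down how $V'$ relates the two module structures so that $\eta'a'\otimes\eta-\eta'\otimes a'\eta$ is mapped consistently.

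Next I would check that $W$ is a \emph{unitary} right $\A$-linear map between the Hermitian $\A$-modules $\E''_1$ and $\E''_2$. Right $\A$-linearity is immediate from the definition of $W$ on elementary tensors and the right $\A$-linearity of $V$. For the isometry property, I would compute $(W(\eta'_1\otimes\eta_1),W(\eta'_2\otimes\eta_2))_{\E''_2}$ using the explicit Hermitian structure on a tensor product of Morita bimodules (the transitivity formula from \S\ref{sec:2.2}, with $\mathrm{Ad}_{U}$ in place of a genuine identification), and reduce it to $(\eta'_1\otimes\eta_1,\eta'_2\otimes\eta_2)_{\E''_1}$ by invoking that $V$ and $V'$ preserve the respective $\A$- and $\A'$-valued inner products (the latter up to the $\mathrm{Ad}$ twist, as in the definition of $\E$ in Prop.~\ref{prop:combo}).

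Then comes the key verification: that $W$ implements a similarity of the composed correspondences, i.e.\ $U''_2(W\otimes_{\A}\id_{\HH})=U''_1$, where $U''_i=U'_i(\id\otimes U_i)$ from \eqref{eq:correscomp}. Unwinding the definitions, $(W\otimes\id_{\HH})(\eta'\otimes\eta\otimes\psi)=V'\eta'\otimes V\eta\otimes\psi$; applying $\id\otimes U_2$ and then $U'_2$ and comparing with the result of applying $\id\otimes U_1$ then $U'_1$, the equality reduces term by term to the two given similarity relations $U_2(V\otimes_{\A}\id_{\HH})=U_1$ and $U'_2(V'\otimes_{\A'}\id_{\HH'})=U'_1$, used in sequence. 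By Lemma \ref{lemma:similarity} the connections then automatically match, so no separate check of $\nabla''_2 W=\bigl(W\otimes\id_{\Omega^1_D(\A)}\bigr)\nabla''_1$ is needed. Finally I would note that associativity of the composition follows from associativity of $\otimes$ and of the twisted tensor products (and of composition of unitaries), so that together with Prop.~\ref{prop:idmor} one obtains a bona fide category.

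The main obstacle I anticipate is the bookkeeping around the twisted tensor product $\otimes_{\mathrm{Ad}_U}$: one must be careful that the $\A'$-module structure on $\E$ (as it appears inside $\E'\otimes_{\mathrm{Ad}_U}\E$ through $\mathrm{Ad}_U$) is matched with the correspondingly twisted structure on the second representative, and that the unitary $V'$ — which a priori is only $\A'$-linear for the \emph{original} $\A'$-module structures on $\E'_1,\E'_2$ — is compatible with the pulled-back structures. Checking that $W$ descends to the balanced tensor product, and that it is isometric for the transitivity-Hermitian-structure, is where all the definitions from \S\ref{sec:2.2}–\S\ref{sec:mor} have to be threaded together correctly; everything after that is a direct substitution into the two similarity identities.
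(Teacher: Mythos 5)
Your proposal is correct and follows essentially the same route as the paper: both construct the unitary $V''=V'\otimes V$ on the balanced tensor product, use the two similarity relations to check it is well defined (the only delicate point being that $\widetilde U(V\otimes_{\A}\id)=U$ forces $\mathrm{Ad}_{\widetilde U}\circ\mathrm{Ad}_V=\mathrm{Ad}_U$, so the twisted relations are respected), and then verify $\widetilde U''(V''\otimes_{\A}\id_{\HH})=U''$ directly from the two given identities. Your added observation that the connections need no separate check, by Lemma~\ref{lemma:similarity}, is consistent with the paper's definition of similarity; associativity is treated there as a separate proposition but your remark does no harm.
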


\begin{proof}
Consider two correspondences
$$
(\A,\HH,D)\xrightarrow{(\E,\nabla,U)}(\A',\HH',D')\xrightarrow{(\E',\nabla',U')}(\A'',\HH'',D'')
$$
and suppose $V:\E\to\widetilde{\E}$ is a similarity between $(\E,\nabla,U)$ and $(\widetilde\E,\widetilde\nabla,\widetilde U)$,
and $V':\E'\to\widetilde{\E}'$ is a similarity between $(\E',\nabla',U')$ and $(\widetilde\E',\widetilde\nabla',\widetilde U')$.
We now show that the composition of the correspondences with and without tildas produces the same result, modulo similarity.

The composition of correspondences without tildas is the correspondence $(\E'',\nabla'',U'')$ given by \eqref{eq:correscomp},
and similar for the `tilda'-correspondences.

Since $V'$ is right $\A'$-linear, and $V$ intertwines the action of $\A_{\E}=\mathrm{End}_{\A}(\E)$ with the one of
$\A_{\widetilde\E}=\mathrm{End}_{\A}(\widetilde\E)$, there is a well-defined unitary right $\A$-module map
$$
V''=V'\otimes V:\E''=\E'\otimes_{\Ae}\E\to\widetilde\E''=\widetilde\E'\otimes_{\A_{\widetilde{\E}}}\widetilde\E \;.
$$
From $\widetilde U(V\otimes_{\A}\id_{\HH})=U$ and $\widetilde U'(V'\otimes_{\A'}\id_{\HH'})=U'$ we get
$\widetilde{U}''(V''\otimes_{\A}\id_{\HH})=U''$, thus proving \eqref{eq:similarity}.
\end{proof}

\begin{prop}
The correspondence in Prop.~\ref{prop:idmor} is a representative of the identity morphism of the object $(\A,\HH,D)$.
\end{prop}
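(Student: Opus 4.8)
The plan is to verify the two unit laws at the level of M-morphisms, using the composition rule \eqref{eq:correscomp}; since the composition of M-morphisms has already been shown to be well defined, this suffices to conclude that the class of $(\A,\nabla_{\!D},m)$ is the identity morphism of $(\A,\HH,D)$. Explicitly, for an arbitrary correspondence $(\E,\nabla,U):(\A,\HH,D)\to(\A',\HH',D')$ I would show that the composite
$$
(\A,\HH,D)\xrightarrow{(\A,\nabla_{\!D},m)}(\A,\HH,D)\xrightarrow{(\E,\nabla,U)}(\A',\HH',D')
$$
is similar to $(\E,\nabla,U)$, and, symmetrically, for an arbitrary correspondence $(\mathcal{F},\nabla,W):(\B,\K,D_\B)\to(\A,\HH,D)$, that the composite obtained by following $(\mathcal{F},\nabla,W)$ with $(\A,\nabla_{\!D},m)$ is similar to $(\mathcal{F},\nabla,W)$. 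Either of the two unit laws also specializes to $\mathrm{id}_{(\A,\HH,D)}\circ\mathrm{id}_{(\A,\HH,D)}=\mathrm{id}_{(\A,\HH,D)}$.

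The computation hinges on one remark: since $m$ is the multiplication map, $\mathrm{Ad}_m:\mathrm{End}_{\A}(\A)\to\A$ is nothing but the canonical isomorphism, sending a left multiplication operator $L_a$ to $a$ (which is exactly how $\A$ acts on $\HH$). Hence, in \eqref{eq:correscomp} with $(\A,\nabla_{\!D},m)$ in the first (source) slot, the twisted tensor product $\E'':=\E\otimes_{\mathrm{Ad}_m}\A$ is simply $\E\otimes_{\A}\A$, and the natural candidate similarity is the canonical module isomorphism $V:\E\to\E\otimes_{\mathrm{Ad}_m}\A$, $\eta\mapsto\eta\otimes 1$. It is right $\A$-linear by construction, and it is unitary because, by the transitivity formula for the Hermitian structures of Morita bimodules, $(\eta_1\otimes 1,\eta_2\otimes 1)_{\E'',\A}=(\eta_1,\eta_2)_{\E,\A}$. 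Since \eqref{eq:correscomp} gives $U''=U\circ(\id\otimes m)$, one gets
$$
U''(V\otimes_{\A}\id_{\HH})(\eta\otimes_{\A}\psi)=U(\id\otimes m)(\eta\otimes 1\otimes_{\A}\psi)=U(\eta\otimes_{\A}\psi)\;,
$$
that is $U''(V\otimes_{\A}\id_{\HH})=U$, which is precisely condition \eqref{eq:similarity} of Definition \ref{df:similarity}. The other unit law is entirely analogous: with $(\A,\nabla_{\!D},m)$ in the second (target) slot, $\E''=\A\otimes_{\mathrm{Ad}_W}\mathcal{F}\cong\mathrm{End}_{\B}(\mathcal{F})\otimes_{\mathrm{End}_{\B}(\mathcal{F})}\mathcal{F}\cong\mathcal{F}$ (surjectivity of $\mathrm{Ad}_W$ being what makes $\eta\mapsto 1\otimes\eta$ onto), the Hermitian structure again restricting to that of $\mathcal{F}$, and $U''=m\circ(\id\otimes W)$ yields $U''(V'\otimes_{\B}\id_{\K})=W$. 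No separate check of the connections is needed: Definition \ref{df:similarity} constrains only the unitary part of a correspondence, and Lemma \ref{lemma:similarity} then forces $\nabla''$ to be identified with the connection of $(\E,\nabla,U)$, respectively $(\mathcal{F},\nabla,W)$.

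I expect the main obstacle to be bookkeeping rather than content. One must match $(\A,\nabla_{\!D},m)$ and the arbitrary correspondence to the correct slots $(\E,\nabla,U)$, $(\E',\nabla',U')$ in \eqref{eq:correscomp} — the identity sits in the first slot for one unit law and in the second for the other — and then carry the left/right module structures through the twisted tensor product carefully enough to recognise that the Morita-bimodule Hermitian structure of $\E''$ restricts, along $V$ (resp.\ $V'$), to that of $\E$ (resp.\ $\mathcal{F}$). Once the identification $\mathrm{Ad}_m\simeq\id$ is made explicit, everything reduces to the canonical isomorphism $\E\cong\E\otimes_{\A}\A$ together with the observation that $\id\otimes m$ is undone by inserting the unit of $\A$.
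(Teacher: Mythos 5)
Your proof is correct and follows essentially the same route as the paper: both unit laws are verified by exhibiting the canonical isomorphisms $\E\cong\E\otimes_{\A}\A$ and $\mathcal{F}\cong\A\otimes_{\mathrm{Ad}_W}\mathcal{F}$ (the paper phrases these as the multiplication maps going the other way, which is equivalent since similarity is symmetric) and by invoking Lemma \ref{lemma:similarity} to avoid any separate check on the connections. You simply spell out the details that the paper declares ``straightforward to check.''
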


\begin{proof}
We have to prove the commutativity of the following diagram (modulo similarity):

\medskip

 \begin{center}
 \begin{tikzpicture}[description/.style={fill=white,inner sep=4pt},normal line/.style={->},>=stealth]
 \matrix (m) [matrix of math nodes, row sep=4em, column sep=6em, text height=1.5ex, text depth=0.25ex]
 { (\A,\HH,D) & (\A',\HH',D') \\ (\A,\HH,D) & (\A',\HH',D') \\ };
 \path[normal line,font=\footnotesize] (m-1-1) edge node[above] {$ (\E,\nabla,U) $} node[below=2pt] { $1$ } (m-1-2);
 \path[normal line,font=\footnotesize] (m-1-2) edge node[right] {$ (\A',\nabla_{\!D'},m) $} node[left=1pt] { $2$ } (m-2-2);
 \path[normal line,font=\footnotesize] (m-1-1) edge node[left] {$ (\A,\nabla_{\!D},m) $} node[right=1pt] { $3$ } (m-2-1);
 \path[normal line,font=\footnotesize] (m-2-1) edge node[below] {$ (\E,\nabla,U) $} node[above=2pt] { $4$ } (m-2-2);
 \path[normal line,font=\footnotesize] (m-1-1) edge node[description] {$ (\E,\nabla,U) $} (m-2-2);
 \end{tikzpicture}
 \end{center}

\smallskip

\noindent
It is straightforward to check that a similarity between the composition of the arrows 1 and 2 and
the diagonal arrow is given by the multiplication map $m:\A'\otimes_{\A'}\E\to\E$,
while a similarity between the composition of the arrows
3 and 4 and the diagonal arrow is given by the multiplication map $m:\E\otimes_{\A}\A\to\E$.
\end{proof}

\begin{prop}\label{prop:ass}
The composition of M-morphisms is associative.
\end{prop}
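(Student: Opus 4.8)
The plan is to verify that composition of M-morphisms is associative by chasing through the explicit composition formula \eqref{eq:correscomp}. Given three composable correspondences
$$
(\A,\HH,D)\xrightarrow{(\E_1,\nabla_1,U_1)}(\A_1,\HH_1,D_1)\xrightarrow{(\E_2,\nabla_2,U_2)}(\A_2,\HH_2,D_2)\xrightarrow{(\E_3,\nabla_3,U_3)}(\A_3,\HH_3,D_3),
$$
I would compute both $(\E_3,\nabla_3,U_3)\circ\bigl((\E_2,\nabla_2,U_2)\circ(\E_1,\nabla_1,U_1)\bigr)$ and $\bigl((\E_3,\nabla_3,U_3)\circ(\E_2,\nabla_2,U_2)\bigr)\circ(\E_1,\nabla_1,U_1)$, and exhibit an explicit similarity $V$ between them. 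For the bimodule part, both bracketings yield $\E_3\otimes_{\mathrm{Ad}_{U_2}}\E_2\otimes_{\mathrm{Ad}_{U_1}}\E_1$ once one uses the standard associativity isomorphism of the tensor product over an algebra (which, as the paper notes after Lemma~\ref{lemma:2.4}, is one of the ``trivial'' isomorphisms always suppressed); so $V$ can be taken to be precisely this canonical associator, which is manifestly a unitary right $\A$-linear map.

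The key steps, in order, are: first, unwind the unitary part. From \eqref{eq:correscomp} the composite unitary for one bracketing is $U_3\circ(\id\otimes U_2)\circ\bigl(\id\otimes(\id\otimes U_1)\bigr)$ and for the other it is $U_3\circ\bigl(\id\otimes(U_2\circ(\id\otimes U_1))\bigr)$; these agree after inserting the associator $V$, i.e.\ one checks $U''_{\text{left}}(V\otimes_\A\id_\HH)=U''_{\text{right}}$, which reduces to the compatibility of the maps $\id\otimes U$ with associativity — a routine bookkeeping check on elementary tensors $\xi_3\otimes\xi_2\otimes\xi_1\otimes\psi$. Second, having established the similarity $V$ on the level of correspondences, invoke Lemma~\ref{lemma:similarity}: a similarity automatically identifies the two connections, via $(V\otimes_\A\id_{\Omega^1_D(\A)})\nabla''_{\text{left}}=\nabla''_{\text{right}}V$. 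This means I do \emph{not} need to separately grind through the nested $\odot$-and-$\mathrm{Ad}_{U^*}$ expressions for the two triple connections and show they match by hand — the similarity on the Hilbert-space side forces it. (If a direct check is wanted as a sanity step, it amounts to verifying that $\odot$ is associative up to the same associator, using that $\sigma$ from Lemma~\ref{lemma:2.4} is a bimodule map and that $\mathrm{Ad}$ is functorial, $\mathrm{Ad}_{U^*V^*}=\mathrm{Ad}_{U^*}\circ\mathrm{Ad}_{V^*}$.) Third, conclude that the two triple composites are similar, hence equal as M-morphisms.

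The main obstacle I expect is purely notational: keeping the bookkeeping straight across the two decompositions ``correspondence $=$ fluctuation $+$ unitary'' when the unitaries get pushed past the fluctuations via Prop.~\ref{prop:combo}, and making sure the associator $V$ is compatible with all three structures (module action, Hermitian structure, connection) simultaneously. There is no genuine mathematical difficulty — associativity of $\otimes_\A$ and functoriality of $\mathrm{Ad}$ do all the work — but the proof must be organized so that Lemma~\ref{lemma:similarity} is deployed at exactly the right moment to avoid an unenlightening computation with the connections. I would therefore present it as: (1) reduce to the fluctuation-plus-unitary normal form using Prop.~\ref{prop:combo}; (2) on that normal form, the composite is literally a chain of fluctuations followed by a chain of unitaries, and both triple bracketings give the \emph{same} such chain up to reassociating tensor factors; (3) that reassociation is the similarity $V$, and Lemma~\ref{lemma:similarity} takes care of the connections.
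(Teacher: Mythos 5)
Your proposal is correct and follows essentially the same route as the paper: associativity of the bimodule tensor product and of the unitaries is routine, and the connections need not be compared by hand because a connection is determined by the rest of the data of the correspondence. The paper rederives this last point inline (from $1\otimes_{\nabla_L}D=U'''^*D'''U'''=1\otimes_{\nabla_R}D$ and \eqref{eq:Dprime}) after suppressing the associator, whereas you make the associator explicit and invoke Lemma~\ref{lemma:similarity}; these are the same argument.
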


\begin{proof}
We will show that the composition of correspondences is associative, from which Prop.~\ref{prop:ass} follows.
Consider three correspondences:
$$
(\A,\HH,D) \xrightarrow{(\E,\nabla,U)} (\A',\HH',D') \xrightarrow{(\E',\nabla',U')} (\A'',\HH'',D'')
 \xrightarrow{(\E'',\nabla'',U'')} (\A''',\HH''',D''')
$$
The tensor product of bimodules is associative: there is a canonical isomorphism between
$\E''\otimes_{\A''}(\E'\otimes_{\A'}\E)$ and $(\E''\otimes_{\A''}\E')\otimes_{\A'}\E$ that we omit,
and we simply denote by $\E''':=\E''\otimes_{\A''}\E'\otimes_{\A'}\E$ the resulting bimodule.
The composition of unitary operators is associative too, the resulting unitary being
$U''':=U''\big(\id\otimes U'(\id\otimes U)\big):\E''\otimes_{\A''}\E'\otimes_{\A'}\E\otimes_{\A}\HH\to\HH'''$.
We now show that the also composition of connections is associative. This is in fact a simple consequence
of the fact that the connection is fixed by the rest of the data (source and target spectral triple, bimodule and
unitary).

Let $\nabla_U'=(\id\otimes\mathrm{Ad}_{U^*})\nabla'$ and 
$\nabla_{U'}''=(\id\otimes\mathrm{Ad}_{U'^*})\nabla''$.
Let
$\nabla_L:=\nabla''_{U'}\odot (\nabla'_U\odot\nabla)$
and
$\nabla_R:=(\nabla''_{U'}\odot\nabla'_U)\odot\nabla$.
We have two morphisms
$$
(\A,\HH,D)\xrightarrow[(\E''',\nabla_R,U''')]{(\E''',\nabla_L,U''')}(\A''',\HH''',D''')
$$
with the same source and target. From $1\otimes_{\nabla_L}D=U'''^*D'''U'''=
1\otimes_{\nabla_R}D$ and \eqref{eq:Dprime} one easily deduces $\nabla_L=\nabla_R$.
\end{proof}

\begin{df}\label{df:Morcat}
We call ``Morita category'' the category whose objects are spectral triples, and whose
morphisms are M-morphisms.
\end{df}

\begin{ex}\label{rem:2.8}
Not every M-morphism is an isomorphism. Consider the spectral triple
$(M_2(\C),\C^2,D)$, with
$$
D:=\bigg(\!\begin{array}{cc}
0 & 1 \\ 1 & 0
\end{array}\!\bigg) \;.
$$
Since $\Omega^1_D(M_2(\C))=M_2(\C)$
(one easily finds $a,b\in M_2(\C)$ such that $a[D,b]$ is the identity matrix), then 
$D'=D+\omega$ is an inner fluctuation of $D$ for any $\omega=\omega^*\in M_2(\C)$.
So in particular, for $\omega=-D$ one gets a fluctuation
$(M_2(\C),\C^2,D)\to (M_2(\C),\C^2,0)$. But this is not an isomorphism,
since $\Omega^1_0(M_2(\C))=\{0\}$ and no non-zero Dirac operator can be
obtained as an inner fluctuation of $0$.
\end{ex}

From the category in Def.~\ref{df:Morcat} one can obtain two subcategories, with the same objects but
less morphisms. A first subcategory is the one with morphisms given by unitary equivalences only:
every morphism is an isomorphism, hence it is a groupoid. Note that the converse is
not true: not every isomorphism of the category \ref{df:Morcat} is represented by a unitary
equivalence. For example in \S\ref{sec:4.3} we will construct spectral triples $(M_n(\C),\C^n\otimes\C^2,D_n)$
that, for $n\geq 2$, are all isomorphic in the category \ref{df:Morcat}, but they are not unitary equivalent
since the Dirac operators have different spectra.

A second subcategory is the one with morphisms given by inner fluctuations only
(and these are not always invertible).  It is studied in the next section,
where we show that the composition of two inner fluctuations is indeed an
inner fluctuation. Since every inner fluctuation has the same pair $(\A,\HH)$ as
source and target (only the Dirac operator changes), we can fix the algebra and
the Hilbert space. Moreover, in the spectral action approach to field theory, inner
fluctuations give rise to gauge fields. We thus refer to this category
as the \emph{gauge category} of $(\A,\HH)$.

\subsection{Inner fluctuations}\label{sec:2.5}

Fix finite-dimensional $\A$ and $\HH$. It is well known that inner fluctuations form a semigroup
extending the group of unitary elements of $\A$. This has been proved also for
real spectral triples, even when the first order condition is not satisfied \cite{CCvS13}.
We reproduce here the argument of \cite[Prop.~5(ii)]{CCvS13}, simply dropping the real structure, which plays no role in this paper.

We define a category $\mf{C}(\A,\HH)$ whose objects are self-adjoint operators $D\in\B(\HH)$, and $\mathsf{Mor}(D,D')$ is the set of $A=A^*\in\Omega^1_D(\A)$ such that $D'=D+A$. Notice that, if $A=\sum_ia^i[D,b_i]$, with $a^i,b_i\in\A$, then
\begin{multline*}
[D',c]=[D,c]+\sum\nolimits_i[a^i[D,b_i],c]
\\[4pt]
=[D,c]+\sum\nolimits_i\Big\{
[a^i,c][D,b_i]+a^i[D,b_ic]
-a^ib_i[D,c]-a^ic[D,b_i]
\big\}
\end{multline*}
for any $c\in\A$. Hence $\Omega^1_{D'}(\A)\subset\Omega^1_D(\A)$.
Therefore, if $A\in\mathsf{Mor}(D,D')$
and $A'\in\mathsf{Mor}(D',D'')$, clearly
$$
D''=D'+A'=D+A+A'
$$
and since $A+A'\in \Omega^1_D(\A)+\Omega^1_{D'}(\A)=\Omega^1_D(\A)$,
then $A+A'\in\mathsf{Mor}(D,D'')$, so that the composition of inner fluctuations is
still an inner fluctuation. It is associative since the sum of operators is associative,
and $0\in\mathsf{Mor}(D,D)$ is the identity morphism of the object $D$.
Note that, for any $D$ and $D'$,
$\mathsf{Mor}(D,D')$ is either the empty set or the set with
only one element $A=D'-D$, depending on whether $D'-D$ belongs
or not to $\Omega^1_D(\A)$.

\begin{rem}
Since $\Omega^1_{D'}(\A)\subset\Omega^1_D(\A)$, inner fluctuations cannot
increase the size of the differential calculus. Example \ref{rem:2.8}
provides an example where the latter inclusion is proper, as in that
case $\Omega^1_{D'}(\A)=\{0\}$ and $\Omega^1_{D}(\A)=M_2(\C)$.
\end{rem}

\begin{prop}
$A\in\mathsf{Mor}(D,D')$ is an isomorphism if and only if $\Omega^1_{D'}(\A)=\Omega^1_D(\A)$;
in this case, the inverse is $A':=-A$.
\end{prop}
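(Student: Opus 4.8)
The plan is to prove both implications directly, using the facts already established in this section: that $\mathsf{Mor}(D,D')$ is a singleton $\{A\}$ with $A=D'-D$ whenever nonempty, that $\Omega^1_{D'}(\A)\subset\Omega^1_D(\A)$ whenever $A\in\mathsf{Mor}(D,D')$, and that composition of morphisms is given by addition of connection $1$-forms with identity morphism $0\in\mathsf{Mor}(D,D)$.

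First I would prove the ``only if'' direction. Suppose $A\in\mathsf{Mor}(D,D')$ is an isomorphism, so there exists $A'\in\mathsf{Mor}(D',D)$ with $A'+A=0\in\mathsf{Mor}(D,D)$ and $A+A'=0\in\mathsf{Mor}(D',D')$; hence $A'=-A$, and in particular $-A=D-D'\in\Omega^1_{D'}(\A)$. From the existence of $A\in\mathsf{Mor}(D,D')$ we get $\Omega^1_{D'}(\A)\subset\Omega^1_D(\A)$, and from the existence of $A'\in\mathsf{Mor}(D',D)$ we get the reverse inclusion $\Omega^1_D(\A)\subset\Omega^1_{D'}(\A)$; therefore $\Omega^1_{D'}(\A)=\Omega^1_D(\A)$.

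Next I would prove the ``if'' direction. Assume $\Omega^1_{D'}(\A)=\Omega^1_D(\A)$. Since $A\in\mathsf{Mor}(D,D')$ means $A=A^*\in\Omega^1_D(\A)$ and $D'=D+A$, we have $-A=-A^*=(D-D')\in\Omega^1_D(\A)=\Omega^1_{D'}(\A)$, so $A':=-A$ satisfies $A'=A'^*\in\Omega^1_{D'}(\A)$ and $D=D'+A'$, i.e. $A'\in\mathsf{Mor}(D',D)$. Then $A'+A=0$ and, by the composition rule, $A'$ composed with $A$ is the identity morphism of $D$, and $A$ composed with $A'$ is the identity morphism of $D'$; hence $A$ is an isomorphism with inverse $A'=-A$.

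There is essentially no obstacle here: every ingredient has been set up in the preceding paragraphs, and the proof is just a matter of assembling the inclusions $\Omega^1_{D'}(\A)\subset\Omega^1_D(\A)$ in both directions and invoking that morphisms compose by addition. The only point requiring a moment's care is checking that $-A$ lies in the correct differential calculus: in the forward direction this comes for free from $A'\in\mathsf{Mor}(D',D)$, while in the backward direction it is exactly where the hypothesis $\Omega^1_{D'}(\A)=\Omega^1_D(\A)$ is used.
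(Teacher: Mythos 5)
Your proof is correct and follows essentially the same route as the paper's: both directions rest on the previously established inclusion $\Omega^1_{D'}(\A)\subset\Omega^1_{D}(\A)$ for any morphism $D\to D'$, applied in both directions to get equality, together with the fact that composition is addition of $1$-forms with identity morphism $0$. Your version merely spells out a few steps (such as $A'=-A$ forced by $A+A'=0$, and the self-adjointness of $-A$) that the paper leaves implicit.
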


\begin{proof}
If $\Omega^1_{D'}(\A)=\Omega^1_D(\A)$, then $-A$ is an element of $\Omega^1_{D'}(\A)$ and
it is inverse to $A$. Viceversa, by the considerations above, if there exists an inner fluctuation
$A'\in\mathsf{Mor}(D',D)$, we have both $\Omega^1_{D'}(\A)\subset\Omega^1_D(\A)$
and $\Omega^1_{D'}(\A)\supset\Omega^1_D(\A)$, proving that these two sets coincide.
Moreover, if $A'$ is inverse to $A$, clearly $A+A'$ must be zero, hence $A'=-A$.
\end{proof}

\begin{rem}\label{prop:2.11}
The operator $D$ in Ex.~\ref{rem:2.8} is an initial object in the
category $\mf{C}(M_2(\C),\C^2)$.
Indeed, since $\Omega^1_D(M_2(\C))=M_2(\C)$, for any
$D'=D'^*\in M_2(\C)$ we can find an inner fluctuation $A=D'-D\in\mathsf{Mor}(D,D')$.
\end{rem}

The initial object does not always exists. Consider the example
of $\A=\C$ and $\HH=\C$. Dirac operators are real numbers, so that
the set of object of $\mf{C}(\C,\C)$ is $\R$, and they all commute
with the algebra $\C$. Therefore $\Omega^1_D(\C)=\{0\}$ for any
$D$, and the only inner fluctuation is the trivial one.
This category has no initial object, and no final object.

\smallskip

While in some examples the initial object may exists, the
final object never exists.

\begin{prop}
Let $\HH\neq\{0\}$ be any non-zero finite-dimensional Hilbert space and $\A\subset\B(\HH)$
a $C^*$-algebra. Then, $\mf{C}(\A,\HH)$ has no final object.
\end{prop}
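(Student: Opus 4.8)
The plan is to show that no object $D$ in $\mf{C}(\A,\HH)$ can be terminal by exhibiting, for each candidate $D$, another object $D'$ admitting no morphism into $D$ (or at least showing the morphism set is not a singleton for \emph{every} source, which is what "final object" requires). The key observation, already established in \S\ref{sec:2.5}, is that $\mathsf{Mor}(D',D)$ is nonempty only if $D-D'\in\Omega^1_{D'}(\A)$, and in that case $\Omega^1_D(\A)\subset\Omega^1_{D'}(\A)$. So a final object $D$ would have to receive a morphism from every self-adjoint $D'\in\B(\HH)$; in particular its differential calculus $\Omega^1_D(\A)$ would be contained in $\Omega^1_{D'}(\A)$ for \emph{all} $D'$. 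Taking $D'=0$ gives $\Omega^1_0(\A)=\{0\}$, so we would need $\Omega^1_D(\A)=\{0\}$, i.e.\ $D$ commutes with $\A$.

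So the first step is to reduce to the case $D\in\A'\cap\B(\HH)$ (the commutant), equivalently $\Omega^1_D(\A)=\{0\}$. The second step is then to derive a contradiction: if $D$ were final, then for \emph{every} $D'$ with $\Omega^1_{D'}(\A)=\{0\}$ — i.e.\ every $D'$ in the commutant of $\A$ — there must exist a (necessarily unique) morphism $D'\to D$, which forces $D-D'\in\Omega^1_{D'}(\A)=\{0\}$, hence $D'=D$. Thus there can be at most one self-adjoint operator commuting with $\A$. But the commutant $\A'$ of a $C^*$-subalgebra $\A\subset\B(\HH)$ with $\HH\neq\{0\}$ always contains the scalars $\C\cdot 1$, hence infinitely many self-adjoint elements ($\lambda\cdot 1$ for $\lambda\in\R$), a contradiction. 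One could even avoid invoking $D'=0$ by noting directly that among the objects of $\mf{C}(\A,\HH)$ we find $D$ and $D+1$, both in the commutant if $D$ is, and no morphism connects two distinct commuting operators.

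The main obstacle — really the only subtlety — is being careful about the logical structure of "final object": one must check that the requirement fails against \emph{some} source object, and the natural choice of source is another operator in the commutant of $\A$ (or simply $0$). Everything else is immediate from the $\Omega^1_{D'}(\A)\subset\Omega^1_D(\A)$ inclusion proved earlier and the triviality that $\C\cdot 1$ sits inside every commutant. I would write the proof in two or three lines: suppose $D$ is final; then there is a morphism $0\to D$, forcing $D\in\Omega^1_0(\A)=\{0\}$, so $D=0$; but there is also no morphism $1\to 0$ since $-1=0-1\notin\Omega^1_1(\A)=\{0\}$ (as $1$ commutes with $\A$), contradicting finality of $D=0$.
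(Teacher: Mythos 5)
Your proof is correct and is essentially the paper's own argument: the paper likewise rules out $D=0$ as a final object because $\mathsf{Mor}(\id_{\HH},0)=\emptyset$ (since $\Omega^1_{\id_{\HH}}(\A)=\{0\}$), and rules out any $D\neq 0$ because $\mathsf{Mor}(0,D)=\emptyset$ (since $\Omega^1_0(\A)=\{0\}$). Your compressed two-line version at the end uses exactly these two test objects, just phrased contrapositively; the preliminary discussion of the commutant is sound but not needed.
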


\begin{proof}
Since the identity operator $\id_{\HH}$ commutes with $\A$,
$\Omega^1_{\id_{\HH}}(\A)=\{0\}$ and the only fluctuation of $\id_{\HH}$ is $\id_{\HH}$ itself. Hence $\mathsf{Mor}(\id_{\HH},0)=\emptyset$ and $0$ is not a final object.
On the other hand, no non-zero $D$ can be a final object, since in this case $\mathsf{Mor}(0,D)=\emptyset$.
\end{proof}

In particular, the example in last proof shows that not all finite-dimensional
Dirac operators can be fluctuated to zero.

\smallskip

$\mf{C}(\C,\C)$ is the category with set of objects
equal to $\R$, $\mathsf{Mor}(x,y)=\emptyset$ if $x\neq y$
and $\mathsf{Mor}(x,y)=\{0\}$ if $x=y$.
We can give a more concrete description of $\mf{C}(M_n(\C),\C^n)$ as well.
Let us call $D$ \emph{trivial} if proportional to the identity.

\begin{prop}
$\Omega^1_D(M_n(\C))=M_n(\C)$ for any non-trivial object $D$ in $\mf{C}(M_n(\C),\C^n)$.
\end{prop}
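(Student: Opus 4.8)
The plan is to show that if $D \in M_n(\C)$ is not a scalar multiple of the identity, then the span of all $a[D,b]$ with $a,b \in M_n(\C)$ is all of $M_n(\C)$. First I would reduce to the case where $D$ is diagonal: since $D$ is self-adjoint we can write $D = U \Lambda U^*$ with $\Lambda$ diagonal and real entries, and conjugating by $U$ is a unitary equivalence carrying $\Omega^1_D(M_n(\C))$ isomorphically onto $\Omega^1_\Lambda(M_n(\C))$ via $\mathrm{Ad}_U$; since $\mathrm{Ad}_U$ is an algebra automorphism of $M_n(\C)$, it suffices to prove $\Omega^1_\Lambda(M_n(\C)) = M_n(\C)$. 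Non-triviality of $D$ means $\Lambda = \mathrm{diag}(\lambda_1,\ldots,\lambda_n)$ has at least two distinct entries.

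Next I would compute commutators explicitly in the matrix-unit basis $\{e_{ij}\}$. One has $[\Lambda, e_{ij}] = (\lambda_i - \lambda_j) e_{ij}$, so $e_{ii}[\Lambda, e_{ij}] = (\lambda_i - \lambda_j) e_{ij}$. Hence whenever $\lambda_i \neq \lambda_j$, the off-diagonal unit $e_{ij}$ lies in $\Omega^1_\Lambda(M_n(\C))$. To also capture $e_{ij}$ when $\lambda_i = \lambda_j$, and to capture the diagonal units $e_{ii}$, I would use products of two such off-diagonal units: if $\lambda_i \neq \lambda_k$ then $e_{ik}$ and $e_{ki}$ are both available, and $e_{ik} \cdot e_{kj} = e_{ij}$ shows $e_{ij} \in \Omega^1_\Lambda(M_n(\C))$ for any $j$ once we left-multiply appropriately — more precisely, $\Omega^1_\Lambda(M_n(\C))$ is a two-sided ideal-like object? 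No: it is only a left module over $M_n(\C)$ (it is closed under left multiplication by $\A$ since $a(a'[D,b]) = (aa')[D,b]$), so given $e_{ik} \in \Omega^1_\Lambda$ with $i \neq k$, for any $j$ we get $e_{ij} = e_{ij} \cdot e_{ik} \cdot (\text{something})$... The clean statement is: $\Omega^1_\Lambda(M_n(\C))$ is a left $M_n(\C)$-submodule of $M_n(\C)$, and the only such submodules are $0$ and $M_n(\C)$ (since $M_n(\C)$ is simple, or concretely since left-multiplying $e_{ik}$ by $e_{ji}$ gives $e_{jk}$, generating everything as soon as one nonzero element is present). Since we exhibited a nonzero element $e_{ij}$ with $\lambda_i \neq \lambda_j$, we conclude $\Omega^1_\Lambda(M_n(\C)) = M_n(\C)$.

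The main obstacle, such as it is, is organizing the argument cleanly rather than any genuine difficulty: one must be careful that $\Omega^1_D(\A)$ is a priori only a left module (not obviously closed under right multiplication) and invoke simplicity of $M_n(\C)$ — equivalently, that any nonzero left submodule of the left regular module $M_n(\C)$ is everything — to bootstrap from a single off-diagonal matrix unit to the full algebra. A self-contained version just checks $e_{ji} \cdot e_{ik} = e_{jk}$ to see that the left module generated by one $e_{ik}$ contains all $e_{jk}$ for fixed $k$, and then notes that $\{e_{jk} : j\}$ together with left multiplication by arbitrary matrices spans $M_n(\C)$. Either way the computation is routine; the content is entirely in the first reduction to diagonal $D$ and the observation that non-triviality forces some $\lambda_i \neq \lambda_j$.
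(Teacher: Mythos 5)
Your reduction to diagonal $D$ and the computation $e_{ii}[\Lambda,e_{ij}]=(\lambda_i-\lambda_j)e_{ij}$ are fine, but the final bootstrapping step contains a genuine error. You assert that the only left $M_n(\C)$-submodules of $M_n(\C)$ are $0$ and $M_n(\C)$, ``since $M_n(\C)$ is simple''. This confuses simplicity as a ring (no proper two-sided ideals) with simplicity as a left module over itself: the left regular module $M_n(\C)$ decomposes as a direct sum of $n$ copies of $\C^n$, and $M_n(\C)e_{kk}$ (the matrices supported on column $k$) is a proper nonzero left submodule. Concretely, the left module generated by a single $e_{ik}$ is exactly $\{Ae_{ik}:A\in M_n(\C)\}$, the $k$-th column space, so your ``self-contained version'' --- generate all $e_{jk}$ for fixed $k$ and then left-multiply by arbitrary matrices --- never leaves column $k$ and does not span $M_n(\C)$.

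There are two easy repairs. (a) $\Omega^1_D(\A)$ is in fact a bimodule, not just a left module: by the Leibniz identity $a[D,b]\hspace{1pt}c=a[D,bc]-ab[D,c]$ it is closed under right multiplication as well. This is what the paper exploits: taking any nonzero $\omega\in\Omega^1_D(M_n(\C))$ with $\omega_{kl}\neq 0$, one has $(\omega_{kl})^{-1}e_{ik}\,\omega\,e_{lj}=e_{ij}$ for all $i,j$, with no diagonalization needed. (b) Staying within your left-module-only route: since the $\lambda_i$ are not all equal, for \emph{every} column index $j$ there is some $i$ with $\lambda_i\neq\lambda_j$, so every column already contains a matrix unit $e_{ij}\in\Omega^1_\Lambda(M_n(\C))$ directly from your commutator computation; left multiplication by $e_{ki}$ then fills each column separately, and the union over $j$ gives all of $M_n(\C)$. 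Either fix is one line, but as written the argument does not go through.
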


\begin{proof}
Let $e_{ij}\in M_n(\C)$ be the matrix with $1$ in position $(i,j)$ and zero everywhere else.
If $D$ is non-trivial, $\Omega^1_D(M_n(\C))$ has at least one non-zero element $\omega$. Suppose
the matrix element $\omega_{kl}$ is not zero. Then
$(\omega_{kl})^{-1}e_{ik}\cdot\omega\cdot e_{lj}=e_{ij}$ belongs to $\Omega^1_D(M_n(\C))$ for
all $i,j$, thus concluding the proof.
\end{proof}

In the latter example, $\Omega^1_D(M_n(\C))$ is either $\{0\}$, if $D$ is trivial, or the whole
$M_n(\C)$. As in Remark \ref{prop:2.11}, any non-trivial $D$ is a initial object in the category,
and any two non-trivial Dirac operators are isomorphic.

\section{Matrix geometries emergent from a point}

Here we discuss how a matrix geometry emerges from a non-trivial
differentiable structure on the space with one point.
One finds a spectral triple based on the algebra $\mc{S}(\N^2)$ of
rapid decay matrices, which is equivalent to the isospectral spectral
triple of Moyal plane, by means of the matrix basis.

\subsection{Spectral distance and polarization}\label{sec:3}

We begin with a general construction, which allows us to treat any normal
state as a vector state, and is compatible with the metric aspect of
noncommutative geometry. 

Given a (not necessary unital, nor finite-dimensional)
spectral triple $(\A, \HH, D)$, the set of states $\mc{S}(\A)$ is an extended metric
space\footnote{By extended metric space we mean a pair $(X,d)$ with $X$ a set
and $d:X\times X\to[0,\infty]$ a symmetric map satisfying the
triangle inequality and such that $d(x,y)=0$ if{}f $x=y$.
The only difference with an ordinary metric space is that the
value $+\infty$ for the distance is allowed.} with distance 
$$
d_D(\varphi,\varphi')=\sup_{a=a^*\in\A}\big\{\varphi(a)-\varphi'(a)\,:\,
\|[D,a]\|_{\B(\HH)}\leq 1 \big\}
$$
for all $\varphi,\varphi'\in\mc{S}(\A)$. This is usually called \emph{Connes metric} or \emph{spectral distance}.

\smallskip

A state $\varphi$ is
normal if it admits a (non-necessarily unique) density matrix $\rho$, that is a positive operator with
trace $1$ such that
$$
\varphi(a):=\tr_{\HH}(\rho\hspace{1pt}a) \;,\qquad a\in\A.
$$
A vector state is a normal state whose density matrix $\rho$ has rank
$1$, that is $\rho=\psi\psi^\dag$ for some vector $\psi\in\HH$. Then
\begin{equation}
\varphi(a)=\inner{\psi,a\psi}_{\HH} \;,\qquad a\in\A. \label{vectorstate}
\end{equation}

Any normal states is a vector state in the representation of $\A$ on the
Hilbert space $\mathcal{L}^2(\HH)$ of Hilbert-Schmidt operators, with inner product:
$$
\inner{A,B}_{\mc{L}^2(\HH)}=\tr_{\HH}(A^*B) \;.
$$
Indeed $\mathcal{L}^2(\HH)$ is a two-sided
ideal in $\B(\HH)$, so that $\A\subset\B(\HH)$ has a natural representation on $\mathcal{L}^2(\HH)$
given by the composition of operators.  Moreover, since $\mathcal{L}^2(\HH)$ is
isomorphic to $\HH\otimes\HH^*$, for any density matrix
$\rho\in\B(\HH)$ there exists an Hilbert-Schmidt operator $\eta$ such
that $\rho=\eta\eta^*$. By cyclicity of the trace
$$
\tr_{\HH}(\rho\,.\,)=\inner{\eta,\,.\,\eta}_{\mc{L}^2(\HH)}.
$$

Thus by replacing the original spectral triple $(\A,\HH,D)$ with a new
one with Hilbert space $\mathcal{L}^2(\HH)$, any
 normal state is a vector state. To guarantee that the new 
 triple is  metrically equivalent to the initial one, a possibility is to take $(\A,\mathcal{L}^2(\HH),\mathcal{D})$
where $\mathcal{D}$ is the operator
$$
\mathcal{D}(A):=[D,A] \;.
$$
Under the identification $\mathcal{L}^2(\HH)\simeq\HH\otimes\HH^*$
the operator $\mathcal{D}$ is self-adjoint on the domain
$\mathrm{Dom}(D)\otimes\mathrm{Dom}(D)^*$ and  we have
\begin{equation}\label{eq:calDD}
[\mathcal{D},a]=[D,a]\otimes\id_{\HH^*}
\end{equation}
for any $a\in\A$. Hence $\mathcal{D}$ has bounded commutators with $\A$.
The resolvent condition is not necessarily satisfied
(it must be checked case by case), so that $(\A,\mathcal{L}^2(\HH),\mathcal{D})$
is not necessarily a spectral triple (it is if $\HH$ is
finite-dimensional). However the spectral distance is still well
defined and as an immediate consequence of \eqref{eq:calDD} one has for any $\varphi,\varphi'\in \mc{S}(\A)$
$$
d_D(\varphi,\varphi')=d_{\mc{D}}(\varphi,\varphi').
$$

\smallskip

If the original spectral triple is even, with $\HH=\HH_0\otimes\C^2$, standard grading and
$$
D=\mat{0 & D_+ \\ D_- & 0} \;,
$$
a similar construction yields a triple $(\A,\mc{L}^2(\HH_0)\otimes\C^2,\mc{D})$,
where now
$$
\mc{D}=\mat{0 & \mc{D}_+ \\ \mc{D}_- & 0} \;,
$$
and $\mc{D}_\pm(A)=[D_\pm,A]$. If the resolvent condition is
satisfied, with the canonical grading we get a new even spectral triple.

\begin{lemma}\label{lemma:1}
For both $(\A,\HH_0\otimes\C^2,D)$ and $(\A,\mathcal{L}^2(\HH_0)\otimes\C^2,\mathcal{D})$, the
spectral distance between $\varphi$ and $\varphi'\in \mc{S}(\A)$ is the supremum of
$|\varphi(a)-\varphi'(a)|$ over all $a\in\A$ satisfying
\begin{equation}\label{eq:SDpiu}
\|[D_+,a]\|_{\B(\HH_0)}\leq 1\quad\mathrm{and}\quad\|[D_-,a]\|_{\B(\HH_0)}\leq 1  \;.
\end{equation}
Equivalently, one can take the supremum over self-adjoint elements, with only one condition
$$
\|[D_+,a]\|_{\B(\HH_0)}\leq 1 \;.
$$
\end{lemma}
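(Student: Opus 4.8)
The plan is to reduce both assertions to the block structure of the commutators together with the elementary fact that a block-off-diagonal operator $\mat{0 & X \\ Y & 0}$ on $\HH_0\oplus\HH_0$ has norm $\max\{\|X\|,\|Y\|\}$. First I would write $a\in\A$ in the diagonal form $a\otimes\id_{\C^2}$ dictated by the even structure (so that $[\gamma,a]=0$), obtaining
\[
[D,a]=\mat{0 & [D_+,a] \\ [D_-,a] & 0},\qquad\text{hence}\qquad \|[D,a]\|_{\B(\HH_0\otimes\C^2)}=\max\bigl\{\|[D_+,a]\|_{\B(\HH_0)},\,\|[D_-,a]\|_{\B(\HH_0)}\bigr\},
\]
so that the constraint $\|[D,a]\|\le1$ in the definition of $d_D$ is exactly \eqref{eq:SDpiu}. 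For the Hilbert--Schmidt realization, the graded version of \eqref{eq:calDD} gives $[\mathcal{D}_\pm,a]=[D_\pm,a]$ acting by left multiplication on $\mathcal{L}^2(\HH_0)$, and, as already noted around \eqref{eq:calDD}, the operator norm of left multiplication by $b\in\B(\HH_0)$ on $\mathcal{L}^2(\HH_0)$ equals $\|b\|_{\B(\HH_0)}$; hence $\|[\mathcal{D},a]\|=\max\{\|[D_+,a]\|,\|[D_-,a]\|\}=\|[D,a]\|$. Therefore the two triples impose the same admissible set \eqref{eq:SDpiu} and define the same distance, and it suffices to argue for $(\A,\HH_0\otimes\C^2,D)$.

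Next I would use that self-adjointness of $D$ forces $D_-=D_+^*$, whence for every $x\in\A$
\[
[D_+,x]^*=x^*D_+^*-D_+^*x^*=-[D_-,x^*].
\]
In particular, if $a=a^*$ then $\|[D_+,a]\|=\|[D_-,a]\|$, so on self-adjoint elements the two inequalities in \eqref{eq:SDpiu} collapse to the single condition $\|[D_+,a]\|\le1$. Inserting this into the definition $d_D(\varphi,\varphi')=\sup\{\varphi(a)-\varphi'(a):a=a^*,\ \|[D,a]\|\le1\}$ and using the previous paragraph, one gets at once that $d_D(\varphi,\varphi')$ equals the supremum of $\varphi(a)-\varphi'(a)$ over self-adjoint $a$ with $\|[D_+,a]\|_{\B(\HH_0)}\le1$, i.e.\ the claimed ``equivalent'' form.

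Finally I would check that enlarging the supremum to all $a$ obeying \eqref{eq:SDpiu}, with $|\varphi(a)-\varphi'(a)|$ in place of $\varphi(a)-\varphi'(a)$, leaves the value unchanged. One direction is immediate: self-adjoint $a$ with $\|[D_+,a]\|\le1$ satisfy \eqref{eq:SDpiu}, the set \eqref{eq:SDpiu} is stable under $a\mapsto-a$, and there $\varphi(a)-\varphi'(a)$ is real, so the two self-adjoint suprema (with and without modulus) agree and are bounded by the supremum over all $a$. For the reverse inequality, given any $a$ satisfying \eqref{eq:SDpiu} pick a phase with $\mathrm{e}^{\ii\theta}(\varphi(a)-\varphi'(a))=|\varphi(a)-\varphi'(a)|\ge0$ and set $b:=\tfrac12\bigl(\mathrm{e}^{\ii\theta}a+\mathrm{e}^{-\ii\theta}a^*\bigr)=b^*$; then $\varphi(b)-\varphi'(b)=|\varphi(a)-\varphi'(a)|$, while the adjoint identity above gives $\|[D_+,\mathrm{e}^{-\ii\theta}a^*]\|=\|[D_-,a]\|\le1$, so $\|[D_+,b]\|\le\tfrac12\bigl(\|[D_+,a]\|+\|[D_-,a]\|\bigr)\le1$; hence $|\varphi(a)-\varphi'(a)|\le d_D(\varphi,\varphi')$, closing the argument. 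The only genuinely delicate point is this last phase-rotation/real-part manoeuvre and the bookkeeping of how $*$ interchanges $[D_+,\cdot]$ with $[D_-,\cdot]$; the block-norm identities and the equality of the two Hilbert-space realizations are routine.
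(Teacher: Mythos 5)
Your proof is correct and follows essentially the same route as the paper's: the block computation showing $\|[D,a]\|=\max\{\|[D_+,a]\|,\|[D_-,a]\|\}$ together with the relation $[D_-,a]=-[D_+,a]^*$ for $a=a^*$, the $\mathcal{L}^2$ case being handled by the identity \eqref{eq:calDD} as in the paper. The only difference is that you spell out explicitly the reduction from arbitrary $a$ satisfying \eqref{eq:SDpiu} to self-adjoint $a$ via the phase-rotation $b=\tfrac12(\mathrm{e}^{\ii\theta}a+\mathrm{e}^{-\ii\theta}a^*)$, a standard point the paper leaves implicit.
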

\begin{proof}
The proof is a simple observation. Since $D_-=D_+^*$ one has
$$
[D,a]^*[D,a]=
\mat{[D_-,a]^*[D_+,a] & 0 \\ 0 & [D_+,a]^*[D_+,a]} \;.
$$
It follows that $\|[D,a]\|_{\B(\HH)}$ is the maximum between $\|[D_-,a]\|_{\B(\HH_0)}$ and $\|[D_+,a]\|_{\B(\HH_0)}$.
If $a=a^*$, then $[D_+,a]$ and $[D_-,a]=-[D_+,a]^*$ have the same norm.
\end{proof}

We will see in Moyal example that if $T_S= (\A,\HH_0\otimes\C^2,D)$ is the spectral triple constructed with
the irreducible (Schr{\"o}dinger) representation, then $(\A,\mathcal{L}^2(\HH_0)\otimes\C^2,\mathcal{D})$
is unitary equivalent to the isospectral spectral triple $T_W$ constructed with the GNS (Wigner) representation
associated to the trace. The passage from $T_S$ to $T_W$ is the opposite of polarization in geometric
quantization (see e.g.~\cite{Wei97}).

\medskip

Let us conclude with some basic definition about dimension and integrals (see e.g.~\cite{Con94,GVF01,Lan02}).
If there exists $t\in\R$ such that $\pi(a)(1+D^2)^{-t/2}\in\mathcal{L}^{(1,\infty)}(\HH)$ is in the Dixmier ideal
for all $a\in\A$, we say that the spectral triple is finite-dimensional; we  call \emph{metric dimension} the inf
of such $t$'s; if $\A$ is unital, a necessary and sufficient condition is that $(1+D^2)^{-t/2}\in\mathcal{L}^{(1,\infty)}(\HH)$;
in both cases, one defines a cyclic integral by
$$
\nint a:=\tr_\omega(\pi(a)(1+D^2)^{-t/2})
$$
where $\tr_\omega$ is the Dixmier's trace. If $D$ is invertible, $(1+D^2)^{-t/2}$ can be replaced by $|D|^{-t}$.

In the canonical example of a non-compact manifold $M$, the metric dimension coincides with the dimension
of $M$ if we use the algebra $\mc{S}(M)$ of Schwartz functions, which unlike $C^\infty_0(M)$ is made of integrable functions.

Assume that the noncommutative space has finite dimension.
If $\A$ is non-unital, we say that the space has a \emph{finite volume} $V>0$ when
$$
\nint 1=V<\infty \;.
$$
Note that $\tint a$ is finite for all $a\in\A$, thus if $\A$ is unital previous
condition is always satisfied.


\subsection{On the geometry of the $1$-point space}\label{sec:4.1}

To put a non-trivial geometrical structure on a single point one can proceed as follows.
Consider a space with $n$ indistinguishable points. More precisely, 
let us consider the set $I_n=\{1,\ldots,n\}$ with equivalence relation $j\sim k$ for all $j,k\in I_n$,
so that the quotient space is the space with one point.
The point of view pioneered in \cite{Con94} is that geometric informations about the equivalence
relation are captured by the groupoid algebra $\C\mc{G}$ of the graph $\mc{G}$ of the equivalence
relation. In the above example, $\mc{G}$ is the groupoid of pairs of elements of $I_n$, and
$\C\mc{G}$ is the crossed product $C^*$-algebra $C(I_n)\rtimes \Z_n$, which is isomorphic to $M_n(\C)$.\footnote{More generally,
$C(G)\rtimes G\simeq M_n(\C)$ for any group $G$ with $n$ elements acting on itself by left multiplication \cite[Lemma 2.50]{Wil07}.}

This point of view allows to put non-trivial differentiable structures on the space with one point. If $n=1$ we
are forced to use a degenerate representation, but for $n\geq 2$ we can define non-trivial differentiable structures
(i.e.~spectral triples whose associated differential calculus is not identically zero) using the standard representation
of the algebra. Requiring the spectral distance to be finite, so that we have a compact quantum metric space, one is
forced to double the Hilbert space. We can then consider a even spectral triple of the form
$(\A_n,\HH_n,D_n,\gamma_n)$, where $\A_n=M_n(\C)$ is represented on $\HH_n=\C^n\otimes\C^2$ by row-by-column
multiplication on the first factor, and let
\begin{equation}\label{eq:Dn}
D_n=\sqrt{\frac{2}{\theta}}\mat{0 & X_n^* \\ X_n & 0} \;,\qquad \gamma_n=\mat{1 & \;\,0 \\ 0 & \!-1} \;,
\end{equation}
with $X_n\in M_n(\C)$. A possible choice is
$$
X_n=\begin{pmatrix}
0 & 0 & 0 & \ldots&0 \\
1 & 0 & 0 & \ldots &0\\
0 &\sqrt{2} & 0 & \ldots&0 \\
\vdots & \ddots & \vdots & \vdots &\vdots\\
0 & \ldots& 0 & \sqrt{n-1}& 0
\end{pmatrix} \;,
$$
with $\theta>0$ a parameter, leading to the spectral triple studied in \cite[Sec.~4.2]{CDMW09}.\footnote{The parameter $\theta$,
and the factor $\sqrt{2/\theta}$ in \eqref{eq:Dn}, are introduced to have the same normalization used in~\cite[\S4]{CDMW09} (although there
the factor $1/\sqrt{\theta}$ is included in the definition of $X_n$).}

Let $e_{ij}\in M_n(\C)$ be the matrix with $1$ in position $(i,j)$ and zero everywhere else.
The set of $1$-forms $\Omega^1_{D_n}(\A_n)$ is the rank $2$
free $M_n(\C)$-module with basis elements the two matrices
\begin{equation}\label{eq:omegauno}
\begin{split}
\sqrt{\frac{\theta}{2}}\sum_{k=0}^{n-1}\,e_{k0}\otimes 1_2\,[D_n,e_{1k}\otimes 1_2]=
\bigg(\!\begin{array}{ll}0 & 1_n\!\! \\ 0 & 0\end{array}\!\bigg)
 \;,\\
\sqrt{\frac{\theta}{2}}\sum_{k=0}^{n-1}\,e_{k1}\otimes 1_2\,[D_n,e_{0k}\otimes 1_2]=
\bigg(\!\begin{array}{ll}0 & 0 \\ 1_n\! & 0\end{array}\!\bigg)
 \;,
\end{split}
\end{equation}
i.e.~the set of matrices {\footnotesize$\bigg(\!\begin{array}{cc}0 & a \\ b & 0\end{array}\!\bigg)
$}, with $a,b\in M_n(\C)$.
This in particular means that any self-adjont operator anticommuting with $\gamma$ can be obtained as an inner fluctuation of $D_n$.

In the language of \S\ref{sec:2.5}, $D_n$ is a universal (initial) object in the category
of even Dirac operators on $\HH_n$.

\subsection{Moyal plane as a matrix geometry}\label{sec:4.2}

The spectral triple $(\mc{S}(\R^2),L^2(\R^2)\otimes\C^2,\D)$, with $\D$
the Dirac operator of $\R^2$, can be quantized replacing the pointwise product with Moyal star product.
This yields the isospectral spectral triple of Moyal plane studied
in~\cite{GGISV04}, here and in the following denoted $T_W$,
which is unitary equivalent to the spectral triple
$(\mc{S}(\N^2),\mathcal{L}^2(\ell^2(\N))\otimes\C^2,\mc{D})$,
where
$\mc{S}(\N^2)$ is the Fr\'echet pre $C^*$-algebra of rapid decay matrices
(with natural seminorms, recalled for example in~\cite{CDMW09}),
$\mathcal{L}^2(\ell^2(\N))$ is the space of Hilbert-Schmidth operators
on $\ell^2(\N)$,
$$
\mc{D}=\sqrt{\frac{2}{\theta}}\mat{0 & \mc{D}_+ \\ \mc{D}_- & 0} \;,
$$
and
$$
\mc{D}_-(A):=[\mf{a},A] \;,\qquad
\mc{D}_+(A):=[\mf{a}^\dag,A] \;.
$$
Here $\mf{a}^\dag$ and $\mf{a}$ denote the creation and
annihilation operators:
$$
\mf{a}^\dag\ket{n}=\sqrt{n+1}\ket{n+1} \;,\qquad
\mf{a}\ket{n}=\sqrt{n}\ket{n-1} \;,
$$ 
with $\ket{n}$, $n\geq 0$, the canonical orthonormal basis of
$\ell^2(\N)$.
This spectral triple has metric dimension $2$ \cite{GGISV04} and infinite
volume, as $\,\tint 1=\int_{\R^2}1\,\de^2x =\infty$.

A polarization is given by the spectral
triple $T_S = (\mc{S}(\N^2),\ell^2(\N)\otimes\C^2,D)$, where
\begin{equation}\label{eq:Dirred}
D=\sqrt{\frac{2}{\theta}}\,\bigg(\!\begin{array}{cc}
0\, & \mathfrak{a} \\ \mathfrak{a}^\dag\! & 0
\end{array}\!\bigg) \;.
\end{equation}
The domain of $D$ is given by vectors ${^t}(v_1,v_2)$, with $v_i\in\ell^2(\N)$
such that $\mf{a}\hspace{1pt}v_i$ and $\mf{a}^\dag\hspace{1pt}v_i$ are
square summable, for any $i=1,2$.
The spectrum of $D^2$ is $\{\frac{2}{\theta}n\}_{n\in\N}$.
On the orthogonal complement of $\ket{0}\otimes\binom{0}{1}$ (the kernel of $D$),
$|D|^{-s}$ is trace-class for any $s\geq 2$ and
$$
\nint 1=\mathrm{Res}_{z=2}\tr(|D|^{-z})
=\theta\,\mathrm{Res}_{z=2}\zeta(z/2)
=2\theta\,\mathrm{Res}_{z=1}\zeta(z)
=2\theta \;,
$$
where $\zeta$ is the Riemann's zeta-function (that has a simple pole
at $z=1$ with residue equal to $1$). So $T_S$ has metric dimension $2$
and finite volume, although it is metrically equivalent to $T_W$. Also
$T_S$ is not defined for $\theta=0$ (due to the normalization of the Dirac operator).
The idea of studying non-unital spectral triples of finite volume for the Moyal
plane was originally proposed in \cite{GW11}.
By analogy with quantum mechanics, we call  $T_W$ the \emph{Wigner spectral triple},
and $T_S$ the \emph{Schr{\"o}dinger spectral triple}.

\subsection{On the relation between the Moyal plane and the $1$-point space}\label{sec:4.3}

In this section we show that the Schr{\"o}dinger spectral triple
of previous section is isomorphic to any of its truncations $(M_n(\C),\HH_n,D_n)$, with $n\geq 2$ (meaning
that they are related by an invertible correspondence).

It is well known that $\ell^2(\N)\otimes \overline{\C^n}$ is a Morita equivalence bimodule between $M_n(\C)$ and the $C^*$-algebra $\K$ of compact operators on $\ell^2(\N)$.
Here $\overline{\C^n}$ denotes row vectors, and the right action of $M_n(\C)$ is given by row-by-column multiplication.

In \S\ref{sec:mor}, we discussed correspondences $(\A,\HH,D) \xrightarrow{(\E,\nabla,U)} (\A',\HH',D')$ of finite dimensional spectral triples, and studied the corresponding category. Let us adapt the construction to the infinite-dimensional case. Clearly, $\E$ should be a Morita-equivalence bimodule between the $C^*$-completions of $\A$ and $\A'$, and $\nabla$ should be densely defined and its domain related to the domains of $D$ and $D'$.

A general theory is beyond the scope of this paper (one can see e.g.~\cite{Mes09}): we discuss here only the case of our interest.
A connection on a right Hilbert $A'$-module $\E'$ (with $A'$ the $C^*$-completion of $\A'$) will be then a map with dense domain $\E'_1\subset\E'$
and image in $\E'_1\otimes_{\A'}\Omega^1_{D'}(\A')$.
If $\HH$ is finite-dimensional, $\E_1\subset\E$ is the domain of the connection $\nabla$ and $\HH'_1\subset\HH'$ the domain of self-adjointness of $D'$,
a natural request is that $U$ maps $\E_1\otimes_{\A}\HH$ to $\HH'_1$, and $\mathrm{Ad}_U$ maps $\mathrm{End}^0_{\A}(\E)$ into the norm-closure of $\A'$.

If $\A'$ is non-unital, a further consideration is necessary:
the connection \eqref{eq:nablaD} is a map $\nabla_{D'}:\A'^+\to\A'^+\otimes_{\A'}\Omega^1_{D'}(\A')$,
where $\A'^+=\A'\oplus\C$ is the minimal unitization of $\A'$,
so $(A'^+,\nabla_{D'},m)$ might be a natural candidate for the identity morphism.
A counterexample is given by Moyal plane, where $\A'=\mc{S}(\N^2)$ has completion $A'=\K$,
and the neutral element for the tensor product of Morita-equivalence bimodules is $\K$, and not $\K^+$.

\begin{rem}
Recall that $\Omega^1_{D_n}(\A_n)$ is the rank $2$ free $M_n(\C)$-module with basis
given by the two elements \eqref{eq:omegauno}. A similar statement holds for Moyal.
Let us denote by the same symbol $e_{ij}$ the infinite-dimensional matrix with $1$ in position $(i,j)$ and zero everywhere else, $i,j\geq 0$. From
$$
\sqrt{\frac{\theta}{2}}\,e_{j0}\!\otimes\! 1_2\,[D,e_{1k}\!\otimes\! 1_2]=\mat{0 & e_{jk} \\ 0 & 0} \;,\qquad
\sqrt{\frac{\theta}{2}}\,e_{j1}\!\otimes\! 1_2\,[D,e_{0k}\!\otimes\! 1_2]=\mat{0 & 0 \\  e_{jk} & 0} \;,
$$
it follows that $\Omega_D(\mc{S}(\N^2))$ is the subspace of elements of $\mc{S}(\N^2)\otimes M_2(\C)$
with zeros on the blocks on the main diagonal.
\end{rem}

\begin{prop}\label{prop:4.2}
A correspondence:
$$
(\A_n:=M_n(\C),\HH_n,D_n)\xrightarrow{\;(\E,\nabla,U)\;} (\mc{S}(\N^2),\ell^2(\N)\otimes\C^2,D)
$$
is given as follows.
Let $\E:=\ell^2(\N)\otimes\overline{\C^n}$,
$\E_1:=\mc{S}(\N)\otimes \overline{\C^n}$,
and
$$
U:\ell^2(\N)\otimes\overline{\C^n}\otimes_{M_n(\C)}\C^n\otimes\C^2\to\ell^2(\N)\otimes\C^2
$$
be given by $(m\otimes\id)(\id\otimes m\otimes \id)$, with $m$ the multiplication map.
The connection is
\begin{gather}
\nabla :\E_1\to\E_1\otimes_{M_n(\C)}\Omega^1_{D_n}(M_n(\C))=\mat{0 & \E_1\otimes_{\A_n}\A_n \\ \E_1\otimes_{\A_n}\A_n & 0}
\notag \\[3pt]
\nabla\eta :=  \sqrt{\frac{2}{\theta}}\mat{0 & (\mf{a}^\dag\eta-\eta X_n^*)\otimes_{\A_n}1_n \\ (\mf{a}\eta-\eta X_n)\otimes_{\A_n}1_n & 0} \;.\label{eq:nabla}
\end{gather}
Here $\mf{a}$ and $\mf{a}^\dag$ act on the first factor of $\E_1$.
\end{prop}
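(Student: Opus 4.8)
The plan is to verify in turn the three ingredients of a correspondence (\S\ref{sec:mor}), in the form adapted above to the infinite-dimensional case: that $\E$, together with the dense submodule $\E_1$, is a Morita-equivalence bimodule; that $\nabla$ is a well-defined connection on $\E_1$; and that $U$ is a unitary equivalence between the fluctuation $\bigl(\Ae,\E\otimes_{M_n(\C)}\HH_n,1\otimes_\nabla D_n\bigr)$ and the Schr\"odinger spectral triple $(\mc S(\N^2),\ell^2(\N)\otimes\C^2,D)$.

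First, the bimodule and the connection. The $C^*$-closures of $\mc S(\N^2)$ and $M_n(\C)$ are $\K$ and $M_n(\C)$, and $\E=\ell^2(\N)\otimes\overline{\C^n}$ is the standard $\K$--$M_n(\C)$ equivalence bimodule recalled above: left $\K$-action on $\ell^2(\N)$, right $M_n(\C)$-action by row-by-column multiplication on $\overline{\C^n}$, with $M_n(\C)$-valued Hermitian structure $(\xi\otimes\alpha,\xi'\otimes\alpha')_\E=\inner{\xi,\xi'}_{\ell^2}\,\alpha^*\alpha'$. One checks that $\E_1=\mc S(\N)\otimes\overline{\C^n}$ is a dense right pre-Hilbert $M_n(\C)$-submodule, stable under $\mf a$ and $\mf a^\dag$ acting on the first leg, and that $\mc S(\N^2)$, acting on $\ell^2(\N)$ and hence on $\E$ through that leg, sits among the adjointable $M_n(\C)$-linear endomorphisms; this copy of $\mc S(\N^2)$ is the fluctuation algebra $\Ae$. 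Since $\Omega^1_{D_n}(M_n(\C))$ is free of rank $2$ on the elementary $1$-forms \eqref{eq:omegauno}, one has $\E_1\otimes_{M_n(\C)}\Omega^1_{D_n}(M_n(\C))\cong\E_1\oplus\E_1$, displayed as the off-diagonal matrix in \eqref{eq:nabla}, and $\nabla$ has image there because $\mf a^\dag\eta-\eta X_n^*$ and $\mf a\eta-\eta X_n$ again belong to $\E_1$. For the Leibniz rule one substitutes $\eta a$, $a\in M_n(\C)$, into \eqref{eq:nabla}, uses that $\mf a,\mf a^\dag$ commute with the right $M_n(\C)$-action, and compares with $(\nabla\eta)a+\eta\otimes_{M_n(\C)}[D_n,a]$: the off-diagonal form $[D_n,a]$, with entries proportional to $X_n^*a-aX_n^*$ and $X_na-aX_n$, supplies exactly the last term, the $\pm(\eta X_n^*)a$ and $\pm(\eta X_n)a$ pieces cancelling. (Hermiticity of $\nabla$ also holds but is not needed, the target being assumed a spectral triple.)

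The heart of the proof is $U$ and the intertwining. That $U=(m\otimes\id)(\id\otimes m\otimes\id)$ descends to $\E\otimes_{M_n(\C)}\HH_n$ is the routine verification that the $M_n(\C)$-balancing relations become associativity of the multiplications; $U$ is unitary because on $\xi\otimes\alpha\otimes_{M_n(\C)}(w\otimes s)$ it returns $(\alpha w)(\xi\otimes s)$, and the inner product of two such images equals $\inner{\xi,\xi'}_{\ell^2}\,\overline{\alpha w}\,(\alpha'w')\,\inner{s,s'}_{\C^2}$, which is exactly the inner product on $\E\otimes_{M_n(\C)}\HH_n$ built from the Hermitian structure of $\E$. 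Moreover $\mathrm{Ad}_U$ carries $T\in\Ae$ to $T\otimes1_{\C^2}$, which is the representation of $\mc S(\N^2)$ in $T_S$, and $U$ maps $\E_1\otimes_{M_n(\C)}\HH_n$ onto $\mc S(\N)\otimes\C^2$, a core for $D$. It then remains to verify $U(1\otimes_\nabla D_n)=D\,U$ on this domain: expanding $(1\otimes_\nabla D_n)(\eta\otimes_{M_n(\C)}\psi)=\eta\otimes_{M_n(\C)}D_n\psi+(\nabla\eta)\psi$ and pushing the matrix entries $X_n,X_n^*$ of $D_n$ across $\otimes_{M_n(\C)}$ so as to meet the $\eta X_n,\eta X_n^*$ terms contained in $(\nabla\eta)\psi$, those truncated pieces cancel, leaving only $\mf a^\dag\eta$ and $\mf a\eta$; applying $U$ turns these into $\mf a^\dag$ and $\mf a$ acting on the $\ell^2(\N)$-leg, i.e.\ into $D$. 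Self-adjointness of $1\otimes_\nabla D_n$ on $U^{-1}(\mathrm{Dom}\,D)$ then follows from that of $D$.

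The main obstacle is exactly this last cancellation: one must keep straight the several tensor-product identifications --- above all $\E_1\otimes_{M_n(\C)}\Omega^1_{D_n}(M_n(\C))\cong\E_1\oplus\E_1$ and the passage through $U$ --- while exhibiting the precise way the truncated ladder operators $X_n,X_n^*$ of $D_n$ annihilate against the full $\mf a,\mf a^\dag$ supplied by the connection; in effect, fluctuating in the direction of $(\E,\nabla)$ ``un-truncates'' $D_n$ into the Moyal Dirac operator. A secondary difficulty is the non-unitality of $\mc S(\N^2)$, which means the constructions of \S\ref{sec:2.3}--\S\ref{sec:mor} cannot be quoted verbatim but must be reproduced at the level of the dense submodule $\E_1$ and of adjointable endomorphisms, as indicated in the discussion preceding the proposition.
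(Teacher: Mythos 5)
Your proposal is correct and follows essentially the same route as the paper, which simply verifies the Leibniz rule by the commutator computation $\nabla(\eta a)-(\nabla\eta)a=\eta\otimes_{\A_n}[D_n,a]$ and then asserts $U(1\otimes_\nabla D_n)=DU$ ``by construction''. You merely make explicit what the paper leaves implicit — the unitarity of $U$, the identification of $\Ae$ with $\mc S(\N^2)$, and the cancellation of the truncated pieces $\eta X_n$, $\eta X_n^*$ against the entries of $D_n$ pushed across $\otimes_{\A_n}$ — all of which is consistent with the paper's intent.
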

\begin{proof}
The Leibniz rule for $\nabla$ is easy to check. From \eqref{eq:nabla}, one gets
$$
\nabla(\eta a) = \nabla(\eta) a+\sqrt{\frac{2}{\theta}}\mat{0 & \eta\otimes_{\A_n}[X_n^*,a] \\ \eta\otimes_{\A_n}[X_n,a] & 0}=\nabla(\eta) a+\eta\otimes_{\A_n}[D_n,a] \;.
$$
for any $\eta\in\E_1$ and $a\in\A_n=M_n(\C)$. Furthermore, by construction $U(1\otimes_\nabla D_n)=DU$.
\end{proof}

If $A$ is a unital $C^*$-algebra and $E$ a right Hilbert $A$-module, we can identify $E\otimes_AA$ with 
$E$. In particular, to simplify the notations, from now on we drop the several ``$\otimes_{\A_n}\A_n$''
and ``$\otimes_{\A_n}1_n$'', for example in \eqref{eq:nabla}.

For a non-unital algebra $A$ and a right $A$-module $E$, there is in general no isomorphism
$E\otimes_AA\to E$. For example, if $A=E$ is the ideal in $\C[x]$ generated by $x$, then
$E\otimes_AA$ is isomorphic to the ideal in $\C[x]$ generated by $x^2$. On the other hand, if
$A=\K$ and $E$ is the module $\overline\E=\C^n\otimes\overline{\ell^2(\N)}$, the multiplication map
$\E\otimes_{\K}\K\to\E$ has inverse defined on a basis by $e_i\otimes\bra{k}\mapsto
(e_i\otimes\bra{k})\otimes_{\K}e_{kk}$. In the following, we will identify $\overline\E\otimes_{\K}\K$
and $\overline\E$.

Note that the map $m:\overline{\ell^2(\N)}\otimes_{\K}\ell^2(\N)\to\C$ given by the inner product (row-by-column multiplication)
is invertible with inverse $m^*:1\mapsto\eta^*\otimes_{\K}\eta$ for any unit vector $\eta$.\footnote{For any two unit vectors $\xi$
and $\eta$, since $p:=\xi\inner{\eta,\,.\,}$ is a compact operator, one has $\eta^*\otimes_{\K}\eta=\xi^*p\otimes_{\K}\eta=
\xi^*\otimes_{\K}p\eta=\xi^*\otimes_{\K}\xi$.}

\begin{prop}\label{prop:4.3}
A correspondence
$$
(\A:=\mc{S}(\N^2),\ell^2(\N)\otimes\C^2,D)\xrightarrow{\;(\overline\E,\overline\nabla,\overline U)\;}  (M_n(\C),\HH_n,D_n)
$$
is given by the complex conjugate bimodule $\overline\E=\C^n\otimes\overline{\ell^2(\N)}$ of $\E$, with connection
\begin{gather}
\overline\nabla :\overline{\E}_1\to
\mat{0 & \overline{\E}_1 \\ \overline{\E}_1 & 0}
\;,\qquad\quad
\overline\nabla\xi :=\sqrt{\frac{2}{\theta}} \mat{0 & X_n^*\xi-\xi\mf{a}^\dag \\ X_n\xi-\xi\mf{a} & 0}  \;. \label{eq:nablaprime}
\end{gather}
Here we think of $\overline{\ell^2(\N)}$ and $\overline{\mc{S}(\N)}$ as row vectors and,
for $\xi\in\overline{\E}_1=\C^n\otimes\overline{\mc{S}(\N)}$, the operators $\mf{a},\mf{a}^\dag$ multiply from the right on the second factor, and $X_n,X_n^*\in M_n(\C)$ from the left on the first factor. The unitary $\overline U:\C^n\otimes\overline{\ell^2(\N)}\otimes_{\K}\ell^2(\N)\otimes\C^2\to\HH_n$ given by
$(m\otimes\id)(\id\otimes\inner{\,,\,}\otimes\id)$.
\end{prop}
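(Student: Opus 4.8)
The plan is to verify directly that the triple $(\overline\E,\overline\nabla,\overline U)$ satisfies the three defining conditions of a correspondence from Definition in \S\ref{sec:mor}, adapted to the infinite-dimensional setting as described just before Proposition~\ref{prop:4.2}: namely that $\overline\E$ is a Morita-equivalence bimodule between the relevant $C^*$-completions, that $\overline\nabla$ is a well-defined connection with the stated domain, and that $\overline U$ is a unitary equivalence between the fluctuated triple and $(M_n(\C),\HH_n,D_n)$. As observed in the text, $\overline\E=\C^n\otimes\overline{\ell^2(\N)}$ is the conjugate of the known Morita-equivalence bimodule $\E=\ell^2(\N)\otimes\overline{\C^n}$ between $M_n(\C)$ and $\K$; by the symmetry of strong Morita equivalence recalled in \S\ref{sec:2.2}, it is a Morita-equivalence bimodule between $\K$ and $M_n(\C)$, and since $\mc{S}(\N^2)$ has $C^*$-completion $\K$ this is the right object. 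So the first point is essentially free.

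Next I would check the Leibniz rule for $\overline\nabla$, exactly as in the proof of Proposition~\ref{prop:4.2}: for $\xi\in\overline\E_1$ and $a\in\mc{S}(\N^2)$ acting on the right via the module structure (which, on row vectors $\C^n\otimes\overline{\mc{S}(\N)}$, means multiplication from the right on the $\overline{\mc{S}(\N)}$ factor), one computes
\[
\overline\nabla(\xi a)=\overline\nabla(\xi)\,a+\sqrt{\tfrac{2}{\theta}}\mat{0 & -\xi[\mf{a}^\dag,a] \\ -\xi[\mf{a},a] & 0}
=\overline\nabla(\xi)\,a-\xi\otimes_{\mc{S}(\N^2)}[\D,a] \,,
\]
where one must be careful with the sign conventions coming from the conjugate module and the fact that the Dirac operator of $T_S$ is built from $\mf{a},\mf{a}^\dag$; matching against $\de_{\D}a=[\D,a]$ and against the identification $\Omega^1_{\D}(\mc{S}(\N^2))$ as off-diagonal rapid-decay matrices (from the Remark preceding Proposition~\ref{prop:4.2}) fixes the formula. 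One also checks that $\overline\nabla$ maps $\overline\E_1=\C^n\otimes\overline{\mc{S}(\N)}$ into $\overline\E_1\otimes_{M_n(\C)}\Omega^1_{D_n}(M_n(\C))\cong\mat{0 & \overline\E_1 \\ \overline\E_1 & 0}$, using that $X_n,X_n^*$ preserve $\mc{S}(\N)$ and $\mf{a},\mf{a}^\dag$ preserve $\overline{\mc{S}(\N)}$, so the domains are respected.

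The substantive step is the unitarity and intertwining property of $\overline U=(m\otimes\id)(\id\otimes\inner{\,,\,}\otimes\id)$. Unitarity follows from the fact, recalled in the text, that $m:\overline{\ell^2(\N)}\otimes_{\K}\ell^2(\N)\to\C$ given by the inner product is invertible (with inverse $1\mapsto\eta^*\otimes_{\K}\eta$ for any unit vector $\eta$), together with the identification $\overline\E\otimes_{\K}\K\simeq\overline\E$ and the elementary unitary $\C^n\otimes\C\otimes\C^2\simeq\C^n\otimes\C^2=\HH_n$; one chases the $\K$-valued and $M_n(\C)$-valued Hermitian structures through these isomorphisms exactly as in the proof of Lemma~\ref{lemma:2.1}. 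For the intertwining relation $\overline U(1\otimes_{\overline\nabla}D)=D_n\overline U$, I would plug a simple tensor $\xi\otimes_{\K}\psi\in\overline\E_1\otimes_{\mc{S}(\N^2)}(\ell^2(\N)\otimes\C^2)$ into the definition \eqref{eq:Dprime} of $1\otimes_{\overline\nabla}D$, apply $\overline U$, and compare with $D_n$ acting on the image; the terms $\xi\otimes_{\K}D\psi$ collapse under $\overline U$ to the off-diagonal action of $\sqrt{2/\theta}(\mf{a},\mf{a}^\dag)$ contracted against $\xi$, while the $(\overline\nabla\xi)\psi$ term supplies precisely the $\sqrt{2/\theta}(X_n,X_n^*)$ part, and the pieces $-\xi\mf{a}^\dag,-\xi\mf{a}$ in $\overline\nabla\xi$ cancel against the contributions from $D\psi$ after the contraction $\inner{\,,\,}$ — this cancellation, which is really the statement that $\overline U$ is an intertwiner, is where the precise signs and the placement of the annihilation/creation operators in \eqref{eq:nablaprime} are forced. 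I expect the main obstacle to be bookkeeping: keeping straight the left/right conventions on the conjugate module $\overline\E$, the rapid-decay domains, and the several tensor-product identifications that are being silently suppressed, so that the cancellation producing $D_n$ comes out with the correct normalization $\sqrt{2/\theta}$ rather than merely up to sign. Once these are pinned down, the verification is mechanical and parallels Proposition~\ref{prop:4.2}, so I would likely present it by saying ``the Leibniz rule and the relation $\overline U(1\otimes_{\overline\nabla}D)=D_n\overline U$ are checked exactly as in Proposition~\ref{prop:4.2}, using that $m$ is invertible on $\overline{\ell^2(\N)}\otimes_{\K}\ell^2(\N)$.''
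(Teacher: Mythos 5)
Your overall strategy is exactly the paper's: the published proof consists of precisely the two checks you describe (the Leibniz rule for $\overline\nabla$, and the relation $\overline U(1\otimes_{\overline\nabla}D)=D_n\overline U$, the whole thing declared ``analogous to Prop.~\ref{prop:4.2}''). But your central displayed computation carries the wrong sign, and taken literally it would disprove the statement rather than prove it. With the right $\mc{S}(\N^2)$-action on $\overline\E_1$ given by multiplying the row vectors from the right, the upper entry of $\overline\nabla(\xi b)-(\overline\nabla\xi)b$ is
$$
\bigl(X_n^*\xi b-\xi b\,\mf{a}^\dag\bigr)-\bigl(X_n^*\xi b-\xi\,\mf{a}^\dag b\bigr)=+\,\xi\,[\mf{a}^\dag,b]\,,
$$
and similarly $+\,\xi\,[\mf{a},b]$ in the other entry, so that $\overline\nabla(\xi b)-(\overline\nabla\xi)b=\xi\otimes[D,b]$, which is exactly the Leibniz rule \eqref{eq:Leibniz}. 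Your version $\overline\nabla(\xi a)=(\overline\nabla\xi)a-\xi\otimes[D,a]$ asserts that $\overline\nabla$ is \emph{not} a connection; no appeal to ``sign conventions on the conjugate module'' can repair a minus sign here --- the computation simply has to come out with a plus, and it does.

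A second, related slip: you state that $\overline\nabla$ lands in $\overline\E_1\otimes_{M_n(\C)}\Omega^1_{D_n}(M_n(\C))$. This is the wrong codomain (indeed the tensor product over $M_n(\C)$ on the right of $\overline\E_1$ is not even defined, $\overline\E_1$ being a right $\mc{S}(\N^2)$-module). By the definition of correspondence, the connection attached to a morphism \emph{out of} $(\mc{S}(\N^2),\ell^2(\N)\otimes\C^2,D)$ takes values in $\overline\E_1\otimes_{\mc{S}(\N^2)}\Omega^1_{D}(\mc{S}(\N^2))$, the first-order calculus of the \emph{source} triple; this is why the Leibniz rule must be checked against $[D,b]$ for $b\in\mc{S}(\N^2)$ (using the Remark identifying $\Omega^1_D(\mc{S}(\N^2))$ with the off-diagonal part of $\mc{S}(\N^2)\otimes M_2(\C)$), and not against $\Omega^1_{D_n}$ as in Prop.~\ref{prop:4.2}, where source and target are interchanged. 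Your own display uses $\otimes_{\mc{S}(\N^2)}$ and $[D,a]$, so the two halves of your argument contradict each other. The remaining ingredients of your outline --- Morita equivalence of $\overline\E$ by symmetry, unitarity of $\overline U$ from the invertibility of $m$ on $\overline{\ell^2(\N)}\otimes_{\K}\ell^2(\N)$, and the verification of the intertwining relation on simple tensors --- agree with what the paper does or leaves implicit.
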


\begin{proof}
The proof is analogous to the one of Prop.~\ref{prop:4.2}.
For all $b\in\A$:
$$
\overline\nabla(\xi b)-(\overline\nabla\xi)b=
\sqrt{\frac{2}{\theta}}\mat{0 & \xi[\mf{a}^\dag,b] \\ \xi[\mf{a},b] & 0}
=\sqrt{\frac{2}{\theta}}\mat{0 & \xi\,D_+(b) \\ \xi\,D_-(b) & 0}=\xi\,[D,b] \;,
$$
where we used the fact that $[\mf{a}^\dag,b]=D_+(b)$ and $[\mf{a},b]=D_-(b)$ belongs to $\A$.
With a straightforward computation one checks that $\overline U(1\otimes_{\overline\nabla}D)=D_n\overline U$.
\end{proof}

To prove that the morphisms in Prop.~\ref{prop:4.2} and \ref{prop:4.3} are one the inverse of the other,
we need to understand what is the identity morphism for Moyal. Since $\K$ is the neutral element for the
tensor product of Hilbert modules, the identity morphism must be represented by a correspondence
$(\K,\nabla_D,m)$, where $\nabla_D$ is a suitable connection densely defined on $\K$.
Since $\mc{S}(\N^2)\otimes_{\mc{S}(\N^2)}\Omega^1_D(\mc{S}(\N^2))\simeq\mat{0 & \mc{S}(\N^2) \\ \mc{S}(\N^2) & 0}$, we
define $\nabla_D:\mc{S}(\N^2)\to\mc{S}(\N^2)\otimes_{\mc{S}(\N^2)}\Omega^1_D(\mc{S}(\N^2))$ as the composition
of $d_D$ with the isomorphism above.

\begin{prop}
The morphisms in Prop.~\ref{prop:4.2} and \ref{prop:4.3} are one the inverse of the other.
More precisely, the composition in one order gives $(\A_n,\nabla_{D_n},m)$ (modulo similarity),
and in the other order gives $(\K,\nabla_D,m)$.
\end{prop}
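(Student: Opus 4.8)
The plan is to compute both composites explicitly using the composition rule for correspondences from Definition/Proposition following Proposition~\ref{prop:combo}, namely $\E'':=\E'\otimes_{\mathrm{Ad}_U}\E$, $\nabla'':=\{(\id\otimes\mathrm{Ad}_{U^*})\nabla'\}\odot\nabla$, $U'':=U'(\id\otimes U)$, and then exhibit similarities to the two candidate identity morphisms. The key observation that makes everything tractable is that the connection is completely determined by the rest of the data (source and target spectral triple, bimodule, unitary), exactly as used in the proof of Proposition~\ref{prop:ass}: once the bimodule and the unitary are shown to match those of $(\A_n,\nabla_{D_n},m)$ (resp.~$(\K,\nabla_D,m)$), the connections automatically agree because $1\otimes_{\nabla''}D=U''^*D''U''$ pins down $\nabla''$ via \eqref{eq:Dprime}. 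So the real work is bimodule bookkeeping plus checking the unitaries.

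First I would compose Proposition~\ref{prop:4.2} followed by Proposition~\ref{prop:4.3}, landing back at $(M_n(\C),\HH_n,D_n)$. The resulting bimodule is $\overline\E\otimes_{\mathrm{Ad}_U}\E=(\C^n\otimes\overline{\ell^2(\N)})\otimes_{\K}(\ell^2(\N)\otimes\overline{\C^n})$; using the standard Morita equivalence $\overline{\ell^2(\N)}\otimes_{\K}\ell^2(\N)\simeq\C$ via the inner product (row-by-column multiplication), this collapses to $\C^n\otimes\overline{\C^n}\simeq M_n(\C)$ as an $M_n(\C)$-bimodule, which is exactly the module of the identity correspondence $(\A_n,\nabla_{D_n},m)$. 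The composed unitary $U'''=\overline U(\id\otimes U)$ should reduce, after unwinding the several multiplication and inner-product maps, to the multiplication map $m:M_n(\C)\otimes_{M_n(\C)}\HH_n\to\HH_n$. I would then invoke Lemma~\ref{lemma:similarity} (or directly the rigidity argument from Proposition~\ref{prop:ass}): the similarity $V$ is forced to be the isomorphism $\C^n\otimes\overline{\C^n}\xrightarrow{\sim}M_n(\C)$, $e_i\otimes e_j^*\mapsto e_{ij}$, and one checks $m\circ(V\otimes_{M_n(\C)}\id_{\HH_n})=U'''$, which gives similarity to $(\A_n,\nabla_{D_n},m)$; the connection then matches for free.

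For the other order I would compose Proposition~\ref{prop:4.3} followed by Proposition~\ref{prop:4.2}, landing at $(\mc{S}(\N^2),\ell^2(\N)\otimes\C^2,D)$. Now the bimodule is $\E\otimes_{\mathrm{Ad}_{\overline U}}\overline\E=(\ell^2(\N)\otimes\overline{\C^n})\otimes_{M_n(\C)}(\C^n\otimes\overline{\ell^2(\N)})\simeq\ell^2(\N)\otimes\overline{\ell^2(\N)}\simeq\K$ (the finite-dimensional Morita equivalence $\overline{\C^n}\otimes_{M_n(\C)}\C^n\simeq\C$ does the contraction), which is the neutral element for the tensor product of Hilbert modules and hence the underlying module of the candidate identity $(\K,\nabla_D,m)$. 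Again the composed unitary should reduce to the multiplication map $\K\otimes_{\mc{S}(\N^2)}(\ell^2(\N)\otimes\C^2)\to\ell^2(\N)\otimes\C^2$, and the same rigidity argument forces the similarity and then forces the connection to be $\nabla_D$ — using the definition of $\nabla_D$ as $d_D$ transported through $\mc{S}(\N^2)\otimes_{\mc{S}(\N^2)}\Omega^1_D(\mc{S}(\N^2))\simeq\bigl(\begin{smallmatrix}0 & \mc{S}(\N^2)\\ \mc{S}(\N^2) & 0\end{smallmatrix}\bigr)$.

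The main obstacle is the non-unitality of $\mc{S}(\N^2)$ (completion $\K$): one must be careful that all the module identifications $E\otimes_A A\simeq E$ used in the finite-dimensional case survive, and the excerpt has already flagged that $\overline\E\otimes_{\K}\K\simeq\overline\E$ and $\overline{\ell^2(\N)}\otimes_{\K}\ell^2(\N)\simeq\C$ hold precisely because the relevant rank-one operators are compact — so I would lean on exactly those facts (and the footnoted computation $\eta^*\otimes_{\K}\eta=\xi^*\otimes_{\K}\xi$) to justify the contractions, and check that the connection identifications are compatible with these maps on the dense subalgebra $\mc{S}(\N^2)$. The domains ($\E_1=\mc{S}(\N)\otimes\overline{\C^n}$, $\overline\E_1$, and $\mathrm{Dom}(D)$) must be tracked so that the intertwining identities hold where everything is defined; since $\HH_n$ is finite-dimensional and the rapid-decay conditions are preserved under the maps in Propositions~\ref{prop:4.2}--\ref{prop:4.3}, this is routine but needs to be stated.
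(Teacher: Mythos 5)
Your proposal is correct and follows essentially the same route as the paper: compute both composites via the composition rule, exhibit the row-by-column multiplication maps as similarities to $(\A_n,\nabla_{D_n},m)$ and (after completion) to $(\K,\nabla_D,m)$, and let the connections match automatically since similarity only requires the unitary condition \eqref{eq:similarity} (Lemma~\ref{lemma:similarity}). The paper's proof is just a terser version of the same bookkeeping, including the same care about the non-unital completion $\overline{\E}\otimes_{\K}\K\simeq\overline{\E}$.
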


\begin{proof}
Using the notations of Prop.~\ref{prop:4.2} and \ref{prop:4.3}, the composition of correspondences
in one order gives
$$
\E':=\overline\E\otimes_{\K}\E \;,\qquad
\nabla':=\{(\id\otimes\mathrm{Ad}_{U^*})\overline\nabla\}\odot\nabla \;,\qquad
U':=\overline U(\id\otimes U) \;,
$$
while in the other order gives
$$
\E'':=\E\otimes_{\A_n}\overline\E \;,\qquad
\nabla'':=\{(\id\otimes\mathrm{Ad}_{\overline U^*})\nabla\}\odot\overline\nabla \;,\qquad
U'':=U(\id\otimes\overline U) \;,
$$
where in $\E''$ the completed tensor product is understood. An $\A_n$-bimodule map
$$
V':\E'=\C^n\otimes\overline{\ell^2(\N)}\otimes_{\K}\ell^2(\N)\otimes\overline{\C^n}\to M_n(\C)=\A_n
$$
is given by row-by-column multiplication.
By construction $m(V'\otimes_{\A_n}\id_{\HH_n})=U'$,
establishing the equivalence between $(\E',\nabla',U')$ and $(\A_n,\nabla_{D_n},m)$.

Similarly, row-by-column multiplication gives a unitary $\K$-bimodule map:
$$
V'':\ell^2(\N)\otimes\overline{\C^n}\otimes_{M_n(\C)}\C^n\otimes\overline{\ell^2(\N)}\to\mc{L}^2(\ell^2(\N)) \;.
$$
By completion, we get a unitary map $V'':\E''\to\K$, which is a similarity between $(\E'',\nabla'',U'')$ and $(\K,\nabla_D,m)$.
\end{proof}

We can conclude that the Schr{\"o}dinger spectral triple is ``isomorphic'' to any of its truncations,
although not metrically equivalent. On the other hand, it is metrically equivalent to
the Wigner spectral triple, but not isomorphic. The situation is summarized in Fig.~\ref{fig}.

\medskip

\begin{figure}[th]
\begin{scriptsize}
\begin{tikzpicture}[
    scale=0.75,
    grow=right,
    level 1/.style={sibling distance=6.6cm,level distance=9.2cm},
    edge from parent/.style={very thick,draw=blue!40!black!60,
        shorten >=5pt, shorten <=5pt},
    edge from parent path={(\tikzparentnode.east) -- (\tikzchildnode.west)},
    kant/.style={text width=6cm, text centered, sloped},
    every node/.style={text ragged, inner sep=2mm},
    punkt/.style={rectangle, rounded corners, shade, top color=white,
    bottom color=white,
    draw=blue!40!black!60,
    very thick }
    ]

\node[punkt, text width=5.5em] { \ One point space $I_n/\mathbb{Z}_n$}
    child {
        node[punkt] [rectangle split, rectangle split, rectangle split parts=3,
         text ragged] (A) {
            \textbf{Schr{\"o}dinger spectral triple}
                  \nodepart{second}
            Irreducible representation
                  \nodepart{third}
            Finite volume $V=2\theta$
        }
        edge from parent
            node[kant, below, pos=.5] { Same smooth structure. \\[2pt] Different metrics. }
    }
    child {
        node[punkt] [rectangle split, rectangle split, rectangle split parts=3,
         text ragged] (B) {
            \textbf{Wigner spectral triple}
                  \nodepart{second}
            GNS representation
                  \nodepart{third}
            Infinite volume
        }
        edge from parent
            node[kant, above, pos=.5] { Different smooth structures. \\[2pt] Different metrics. }
    };

\draw[edge from parent] (A)-- node[kant,below] { Different smooth \\ structures. \\[2pt] Same metric. } (B) ;

\coordinate (C) at ($(A)!5.1cm!270:(B)$);
\coordinate (D) at ($(A)!7cm!270:(B)$);

\coordinate (E) at ($(B)!5.3cm!90:(A)$);
\coordinate (F) at ($(B)!7cm!90:(A)$);

\draw[edge from parent,->,shorten >=10pt] (A) -- (C) node[kant, above, pos=.5] {\mbox{$\hspace{-7pt}\theta=0$}};

\draw (D) node[punkt] {No commutative limit};

\draw[edge from parent,->,shorten >=5pt] (B) -- (E) node[kant, above, pos=.5] {\mbox{$\theta=0$}};

\draw (F) node[punkt,rectangle split, rectangle split, rectangle split parts=3,
         text ragged] {
            \textbf{Euclidean plane}
                  \nodepart{second}
            Canonical S.T.
                  \nodepart{third}
            Infinite volume
            };

\end{tikzpicture}
\end{scriptsize}
\vspace{-3mm}
\caption{Entity-relationship diagram.}
\label{fig}
\end{figure}
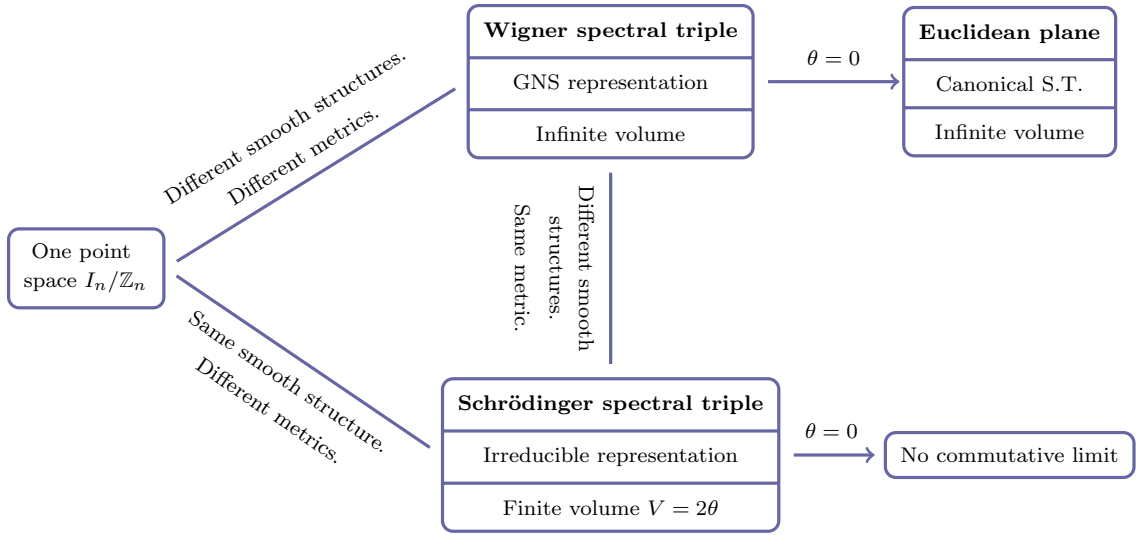

\section{Metric aspects of the Moyal plane}\label{sec:4.4}
In this section we provide a short proof of the formula for the distance between coherent states \cite{CDMW09,MT11,DLM13}, using entirely the matrix notation.
We then show that states corresponding to basis vectors of $\ell^2(\N)$, with the spectral distance, form a metric space that converge to
$\R^+_0$ for $\theta\to 0$.

\subsection{Euclidean planes inside Moyal plane: generalized coherent states}\label{sec:4.4.1}
Coherent states are the states associated to the vectors
$$
\ket{z}=e^{-\frac{1}{4\theta}|z|^2}\sum_{n=0}^\infty \frac{1}{\sqrt{n!}}\left(\frac{z}{\sqrt{2\theta}}\right)^n\ket{n} \;,
$$
with $z\in\C$. They can be generalized as follows.	

For $z\in\C$ and $\alpha\in[0,2\pi[$, consider the unitary operators
$$
T(z)=\exp\big\{\tfrac{1}{\sqrt{2\theta}}(z\mf{a}^\dag-\bar z\mf{a})\big\} \;,\qquad
R(\alpha)=e^{i\alpha\,\mf{a}^\dag\mf{a}} \;,
$$
giving a projective representation of the Galilean group $G:=SO(2)\ltimes\R^2$. More precisely,
they give a representation of the semidirect product $SO(2)\ltimes H_3(\R)$,
where $H_3(\R)$ is the Heisenberg group:
$$
T(z)T(w)=e^{\frac{i}{2\theta}\Im(z\bar w)}T(z+w) \;,\quad
R(\alpha)R(\beta)=R(\alpha+\beta) \;,\quad
R(\alpha)T(z)=T(e^{i\alpha}z)R(\alpha) \;.
$$
In fact, we can replace $\alpha$ by any complex number $\tau$ with $\Im(\tau)\geq 0$, and the above
relations are still valid, but $R(\tau)$ is no longer unitary: for $\Im(\tau)>0$ it is a compact operator (in fact,
of rapid decay) and has no bounded inverse. The norm of $R(\tau)$ is still $1$ for any $\tau$.

Coherent states can be generalized by fixing a ground state $\ket{\psi_0}$ (normalized to $1$), not
necessarily $\ket{0}$, and defining
$$
\ket{\psi_z}=T(z)\ket{\psi_0} \;.
$$
The corresponding state will be denoted $\Psi_z$:
$$
\Psi_z(a)=\inner{\psi_z|a|\psi_z} \;.
$$
Note that they transform under $G$ according to the law:
$$
T(w)\ket{\psi_z}=e^{\frac{i}{2\theta}\Im(z\bar w)}\ket{\psi_{z+w}} \;,\qquad
R(\alpha)\ket{\psi_z}=\ket{\psi_{e^{i\alpha}z}} \;.
$$
When considering the corresponding states, the phase factor $e^{\frac{i}{2\theta}\Im(z\bar w)}$
simplifies and the projective representation of $G$ becomes an actual representation:
$$
\mathrm{Ad}^*_{T(w)}\Psi_z=\Psi_{z+w} \;,\qquad
\mathrm{Ad}^*_{R(\alpha)}\Psi_z=\Psi_{e^{i\alpha}z} \;.
$$
Here for $U$ a unitary operator on $\ell^2(\N)$,
$\mathrm{Ad}^*_U$ denotes the pull-back to normal states of the adjoint representation:
$$
\mathrm{Ad}^*_U\inner{\psi|a|\psi}=\inner{\psi|U^*aU|\psi} \;,\qquad\forall\;\psi\in\HH,\;a\in\B(\HH)\,.
$$

\begin{lemma}\label{lemma:three}
Let $N\geq 1$ and
\begin{equation}\label{eq:aN}
a_N=\mf{a}^\dag R(\tfrac{i}{N})+R(\tfrac{i}{N})\,\mf{a} \;.
\end{equation}
Then, the element
$$
b_N=\frac{a_N}{1+(e^{\frac{1}{N}}-1)N}
$$
satisfies $\|[D,b_N]\|\leq 1$, with $D$ as in \eqref{eq:Dirred}.
\end{lemma}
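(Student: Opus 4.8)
The plan is to reduce the statement, via Lemma~\ref{lemma:1}, to a single operator-norm bound on $\ell^2(\N)$, and to prove that bound by writing the relevant commutator explicitly in the Fock basis $\{\ket{n}\}_{n\geq 0}$. Note first that $R(\tfrac{i}{N})=e^{-\mf{a}^\dag\mf{a}/N}$ is self-adjoint and diagonal in this basis, so $a_N$ in \eqref{eq:aN} — and hence $b_N$ — is self-adjoint; by Lemma~\ref{lemma:1} it then suffices to control $\|[\mf{a},b_N]\|$, and the core of the proof is the estimate
$$
\big\|\,[\mf{a},a_N]\,\big\|\;\leq\;1+(e^{1/N}-1)N \;,
$$
after which dividing by the normalising constant gives $\|[\mf{a},b_N]\|\leq 1$, hence $\|[D,b_N]\|\leq 1$.

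To prove the displayed bound I would compute, with $q:=e^{-1/N}\in(0,1)$, that $a_N\ket{n}=q^{n}\sqrt{n+1}\,\ket{n+1}+q^{n-1}\sqrt{n}\,\ket{n-1}$, hence
$$
[\mf{a},a_N]\ket{n}=d_n\ket{n}+s_n\ket{n-2}\;,\qquad d_n:=q^{n-1}\big(q-n(1-q)\big)\,,\quad s_n:=q^{n-2}(q-1)\sqrt{n(n-1)}\;.
$$
Thus $[\mf{a},a_N]$ is the sum of the diagonal operator with entries $d_n$ and the weighted shift $\ket{n}\mapsto s_n\ket{n-2}$; their operator norms are $\sup_n|d_n|$ and $\sup_n|s_n|$, so by the triangle inequality it suffices to show $\sup_n|d_n|\leq 1$ and $\sup_n|s_n|\leq(e^{1/N}-1)N$. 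For the first I would distinguish the sign of $q-n(1-q)$: if $n(1-q)\leq q$ then $0\leq d_n\leq q^{n-1}q=q^{n}\leq 1$, while if $n(1-q)>q$ then $|d_n|<q^{n-1}n(1-q)\leq\tfrac1N q^{n-1}n=\tfrac1N e^{1/N}q^{n}n\leq e^{1/N-1}\leq 1$, using $1-q\leq 1/N$, the elementary bound $\max_{x\geq 0}xe^{-x/N}=N/e$ (so $q^{n}n\leq N/e$) and $N\geq 1$. For the second, $\sqrt{n(n-1)}\leq n$ gives $|s_n|\leq q^{n-2}(1-q)n=(e^{1/N}-1)q^{n-1}n=(e^{1/N}-1)e^{1/N}q^{n}n\leq(e^{1/N}-1)Ne^{1/N-1}\leq(e^{1/N}-1)N$, again by $q^{n}n\leq N/e$ and $N\geq 1$. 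Combining the two estimates gives the displayed bound.

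The step that I expect to require care is the order in which the triangle inequality is used: one must decompose $[\mf{a},a_N]$ \emph{itself} into its diagonal and shift parts and only then estimate. Pulling $R(\tfrac{i}{N})$ out first, writing $[\mf{a},a_N]=R(\tfrac{i}{N})+(q-1)\,a_N\mf{a}$ and bounding the two terms separately — or bounding $\|a_N\mf{a}\|$ by its own diagonal and shift pieces — is too lossy: it discards the cancellation inside $d_n$ (which is a difference, not a sum) and already fails at $N=1$. Everything else reduces to the two elementary inequalities $1-e^{-t}\leq t$ and $\max_{x\geq 0}xe^{-x/N}=N/e$, together with $N\geq 1$.
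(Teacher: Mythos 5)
Your proof is correct, and at bottom it takes the same route as the paper's: reduce to a single operator-norm bound via Lemma~\ref{lemma:1} for self-adjoint elements (like the paper, you quietly drop the $\sqrt{2/\theta}$ prefactor of $D$ in \eqref{eq:Dirred}), split the commutator into a diagonal piece and a $2$-shift piece, and estimate each using only $1-e^{-t}\leq t$ and $\sup_{x\geq 0}xe^{-x/N}=N/e$. You work with $[\mf{a},a_N]$ in matrix elements where the paper works with the adjoint commutator $[\mf{a}^\dag,a_N]$ via the relation $[\mf{a}^\dag,R(\tfrac{i}{N})]=(e^{1/N}-1)R(\tfrac{i}{N})\mf{a}^\dag$; these have the same norm, so the difference is cosmetic. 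The one substantive difference is exactly the point you flag at the end: the paper applies the triangle inequality in the cruder form
$$
\|[\mf{a}^\dag,a_N]\|\leq\|R(\tfrac{i}{N})\|+(e^{1/N}-1)\bigl(\|\mf{a}^\dag R(\tfrac{i}{N})\mf{a}^\dag\|+\|R(\tfrac{i}{N})\mf{a}^\dag\mf{a}\|\bigr)\leq 1+(e^{1/N}-1)\bigl(Ne^{1/N-1}+Ne^{-1}\bigr)\;,
$$
and the right-hand side only falls below $1+(e^{1/N}-1)N$ once $e^{1/N-1}+e^{-1}\leq 1$, i.e.\ for $N\geq 2$; at $N=1$ this chain does not deliver the stated constant. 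Your grouping of the two diagonal contributions into the single coefficient $d_n=q^{n-1}(q-n(1-q))$ retains the cancellation between $R(\tfrac{i}{N})$ and $(e^{1/N}-1)R(\tfrac{i}{N})\mf{a}^\dag\mf{a}$ and gives $\sup_n|d_n|\leq 1$ outright, so your version actually establishes the lemma for every $N\geq 1$ as stated. The only small item you skip that the paper checks explicitly is that $a_N$ belongs to $\A=\mc{S}(\N^2)$ (its entries decay like $\sqrt{n}\,e^{-n/N}$, so it is a rapid decay matrix); this is not part of the norm inequality itself but is needed when the lemma is fed into the lower bound for the spectral distance.
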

\begin{proof}
Firstly, $R(\tfrac{i}{N})\,\mf{a}$ is a rapid decay matrix, hence $a_N=a_N^*$ belongs to $\A$.

On the other hand, for $N\geq 1$ 
$$
R(\tfrac iN)\ket{n}=e^{-\frac{n}{N}}\ket{n}
$$ 
Since $[\mf{a}^\dag,R(\tfrac{i}{N})]=(e^{\frac{1}{N}}-1)R(\tfrac{i}{N})\mf{a}^\dag$, by the Leibniz rule we have
$$
[\mf{a}^\dag,a_N]=(e^{\frac{1}{N}}-1)(\mf{a}^\dag R(\tfrac{i}{N})\mf{a}^\dag+R(\tfrac{i}{N})\mf{a}^\dag\mf{a})-R(\tfrac{i}{N}) \;.
$$
The function $f(x)=xe^{-\frac{x}{N}}$ has its maximum at $x=N$, where it is equal to $f(N)=Ne^{-1}$. Then
\begin{align*}
\|\mf{a}^\dag R(\tfrac{i}{N})\mf{a}^\dag\| &=\sup_n\big\{e^{-\frac{n+1}{N}}\sqrt{(n+1)(n+2)}\big\}\leq e^{\frac{1}{N}}
\sup_n\big\{(n+2)e^{-\frac{n+2}{N}}\big\}=Ne^{\frac{1}{N}-1}\leq N \;,
\\
\|R(\tfrac{i}{N})\mf{a}^\dag\mf{a}\| &=\sup_n\big\{ne^{-\frac{n}{N}}\big\}=Ne^{-1}\leq N \;.
\end{align*}
Hence
$$
\|[\mf{a}^\dag,a_N]\|\leq \|R(\tfrac{i}{N})\|+(e^{\frac{1}{N}}-1)(\|\mf{a}^\dag R(\tfrac{i}{N})\mf{a}^\dag\|+\|R(\tfrac{i}{N})\mf{a}^\dag\mf{a}\|)
\leq 1+(e^{\frac{1}{N}}-1)N \;.
$$
By Lemma \ref{lemma:1}, this concludes the proof.
\end{proof}

\begin{lemma}\label{lemma:four}
Let $a_N$ be the element in \eqref{eq:aN}. Then
$$
\lim_{N\to\infty}\big\{
\Psi_r(a_N)-\Psi_0(a_N)
\big\}=\sqrt{\tfrac{2}{\theta}}\,r \;,
$$
for any $r>0$.
\end{lemma}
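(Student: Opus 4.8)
The plan is to read $a_N$ as a regularisation of the (unbounded) operator $\mf{a}^\dag+\mf{a}$. Since $R(\tfrac{i}{N})\ket{n}=e^{-n/N}\ket{n}$, the operators $R(\tfrac{i}{N})$ are self-adjoint, of norm one, and converge strongly to the identity on $\ell^2(\N)$ as $N\to\infty$ (dominated convergence). Hence, for any vector $\psi$ in the domain of $\mf{a}$, moving the bounded self-adjoint operator $R(\tfrac{i}{N})$ and the creation operator onto the bra gives
\[
\inner{\psi|a_N|\psi}=\inner{\mf{a}\psi, R(\tfrac{i}{N})\psi}+\inner{R(\tfrac{i}{N})\psi, \mf{a}\psi}\;\xrightarrow[N\to\infty]{}\;\inner{\mf{a}\psi, \psi}+\inner{\psi, \mf{a}\psi}=\inner{\psi|(\mf{a}^\dag+\mf{a})|\psi}\,,
\]
because $R(\tfrac{i}{N})\psi\to\psi$ in norm while $\mf{a}\psi$ is fixed. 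I will apply this to $\psi=\ket{\psi_0}$ and to $\psi=\ket{\psi_r}=T(r)\ket{\psi_0}$, which lies again in $\mathrm{Dom}(\mf{a})$ since $\mf{a}\,T(r)\ket{\psi_0}=T(r)\big(\mf{a}+\tfrac{r}{\sqrt{2\theta}}\big)\ket{\psi_0}$.

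It then remains to evaluate $\inner{\psi_r|(\mf{a}^\dag+\mf{a})|\psi_r}-\inner{\psi_0|(\mf{a}^\dag+\mf{a})|\psi_0}$. Here $T(z)$ is, up to the normalisation $c=z/\sqrt{2\theta}$, the displacement operator $e^{c\mf{a}^\dag-\bar c\mf{a}}$, so the commutation relations recalled above give $T(r)^*\mf{a}\,T(r)=\mf{a}+\tfrac{r}{\sqrt{2\theta}}$ and, $r$ being real, $T(r)^*\mf{a}^\dag T(r)=\mf{a}^\dag+\tfrac{r}{\sqrt{2\theta}}$. Therefore
\[
\inner{\psi_r|(\mf{a}^\dag+\mf{a})|\psi_r}=\inner{\psi_0\big|T(r)^*(\mf{a}^\dag+\mf{a})T(r)\big|\psi_0}=\inner{\psi_0|(\mf{a}^\dag+\mf{a})|\psi_0}+\tfrac{2r}{\sqrt{2\theta}}\,\|\psi_0\|^2\,.
\]
Since $\|\psi_0\|=1$ and $\tfrac{2}{\sqrt{2\theta}}=\sqrt{2/\theta}$, the terms $\inner{\psi_0|(\mf{a}^\dag+\mf{a})|\psi_0}$ cancel and the limit equals $\sqrt{\tfrac{2}{\theta}}\,r$, as claimed.

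The only slightly delicate point is the interchange of the limit $N\to\infty$ with the unbounded operators $\mf{a}$ and $\mf{a}^\dag$, i.e.\ checking that $\ket{\psi_0}$ — and hence $\ket{\psi_r}$ — lies in their common domain so that the rewriting of $\inner{\psi|a_N|\psi}$ above is legitimate and the limits exist; once this is granted, the rest is just the elementary fact that a translation by $r$ shifts the expectation value of the ``position'' operator $\mf{a}^\dag+\mf{a}$ by $2r/\sqrt{2\theta}$.
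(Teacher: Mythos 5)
Your computation is correct \emph{under the additional hypothesis that} $\ket{\psi_0}\in\mathrm{Dom}(\mf{a})=\mathrm{Dom}(\mf{a}^\dag)$, i.e.\ $\sum_n n|c_n|^2<\infty$ for $\ket{\psi_0}=\sum_n c_n\ket{n}$. You flag this as ``the only slightly delicate point'' and then grant it, but it cannot be granted: the lemma is stated for an \emph{arbitrary} normalized ground state, and for a general $\ket{\psi_0}$ your strategy of passing to the limit in $\Psi_0(a_N)$ and $\Psi_r(a_N)$ \emph{separately} genuinely fails. Concretely, $\Psi_0(a_N)=2\,\mathrm{Re}\sum_{n\geq0}\bar c_{n+1}c_n\sqrt{n+1}\,e^{-n/N}$; if you take $c_n>0$ with $c_n\sim n^{-1/2}(\log n)^{-1}$, then $\sum_n|c_n|^2<\infty$ but $\sum_n c_nc_{n+1}\sqrt{n+1}\sim\sum_n n^{-1/2}(\log n)^{-2}=\infty$, so $\Psi_0(a_N)\to+\infty$ by monotone convergence. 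The two limits you want to subtract do not exist, and the putative limiting quantity $\inner{\psi_0|(\mf{a}^\dag+\mf{a})|\psi_0}$ is undefined for such $\psi_0$, even though the \emph{difference} $\Psi_r(a_N)-\Psi_0(a_N)$ is claimed to converge.

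This is exactly why the paper never decouples the two terms: it pushes $a_N$ through the translation via the exact identity $a_NT(r)=T(e^{-1/N}r)\bigl\{a_N+\tfrac{r}{\sqrt{2\theta}}(e^{-1/N}+1)R(\tfrac iN)\bigr\}$, so that the (possibly divergent) ``$a_N$-part'' of $\Psi_r(a_N)$ is cancelled against $\Psi_0(a_N)$ and only the uniformly bounded remainder $\tfrac{r}{\sqrt{2\theta}}(e^{-1/N}+1)\Psi_0\bigl(R(\tfrac iN)\bigr)\to\sqrt{2/\theta}\,r$ survives. If you only care about ground states in $\mathrm{Dom}(\mf{a})$ (which includes the usual coherent states, $\psi_0=\ket{0}$), your argument is complete and arguably more transparent --- it cleanly exhibits $a_N$ as a regularisation of the position observable $\mf{a}+\mf{a}^\dag$ and the limit as the shift of its expectation value under translation. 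But for the lemma as stated you must work with the difference directly, as the paper does.
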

\begin{proof}
Since
$$
[\mf{a},T(r)]=[\mf{a}^\dag,T(r)]=\tfrac{r}{\sqrt{2\theta}}\,T(r) \;,\qquad
R(\tfrac{i}{N})T(r)=T(e^{-\frac{1}{N}}r)R(\tfrac{i}{N}) \;,
$$
we have
$$
a_NT(r)=T(e^{-\frac{1}{N}}r)\big\{a_N+\tfrac{r}{\sqrt{2\theta}}(e^{-\frac{1}{N}}+1)R(\tfrac{i}{N})\big\} \;.
$$
Since $T(r)$ is a strongly continuous one-parameter group of unitary transformations (by Stone's theorem),
$T(e^{-\frac{1}{N}}r)$ is norm convergent to $T(r)$ for $N\to\infty$ and
$$
\lim_{N\to\infty}\big\{
\Psi_r(a_N)-\Psi_0(a_N)
\big\}=\sqrt{\tfrac{2}{\theta}}\,r\,\lim_{N\to\infty}\Psi_0\big(R(\tfrac{i}{N})\big) \;.
$$
If $\ket{\psi_0}=\sum_{n\geq 0}c_n\ket{n}$ with $\sum_{n\geq 0}|c_n|^2=1$, one has
$$
\Psi_0\big(R(\tfrac{i}{N})\big)=\sum\nolimits_{n\geq 0}|c_n|^2e^{-\frac{n}{N}} \;.
$$
By Weierstrass M-test, the series is uniformly convergent to a continuous functions of $\frac{1}{N}$,
and $\lim_{N\to\infty}\Psi_0\big(R(\tfrac{i}{N})\big)=\sum\nolimits_{n\geq 0}|c_n|^2\lim_{N\to\infty}e^{-\frac{n}{N}}=1$.
\end{proof}

\begin{prop}[\cite{MT11}]
For any $z,z'\in\C$, we have $d_D(\Psi_z,\Psi_{z'})=|z-z'|$.
\end{prop}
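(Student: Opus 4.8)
The plan is to reduce the statement to the single equality $d_D(\Psi_r,\Psi_0)=r$ for a real $r>0$ and an arbitrary normalized ground state $\ket{\psi_0}$, and then to establish this by matching a lower and an upper bound. For the reduction, note that the Lipschitz seminorm $a\mapsto\|[D,a]\|$ on $\mc{S}(\N^2)$ is invariant under $a\mapsto\mathrm{Ad}_{T(w)}(a)$ and $a\mapsto\mathrm{Ad}_{R(\alpha)}(a)$: conjugation by $T(w)$ shifts $\mf{a},\mf{a}^\dag$ by scalars and conjugation by $R(\alpha)$ rephases them, so $[\mf{a},a]$ and $[\mf{a}^\dag,a]$ are only conjugated by a unitary (and rephased), which preserves their norms; both automorphisms also preserve $\mc{S}(\N^2)$. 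Together with $\mathrm{Ad}^*_{T(w)}\Psi_z=\Psi_{z+w}$ and $\mathrm{Ad}^*_{R(\alpha)}\Psi_z=\Psi_{e^{i\alpha}z}$ (the latter simultaneously replacing $\ket{\psi_0}$ by $R(\alpha)\ket{\psi_0}$, which is harmless since any ground state is allowed), this yields $d_D(\Psi_z,\Psi_{z'})=d_D(\Psi_{|z-z'|},\Psi_0)$, so we may assume $z'=0$ and $z=r>0$.

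For the lower bound, take the element $b_N$ of Lemma \ref{lemma:three}. As $\|[D,b_N]\|\le1$ it is admissible in the supremum defining $d_D$, whence
$$
d_D(\Psi_r,\Psi_0)\ \ge\ \Psi_r(b_N)-\Psi_0(b_N)\ =\ \frac{\Psi_r(a_N)-\Psi_0(a_N)}{1+(e^{1/N}-1)N}\,.
$$
Letting $N\to\infty$ and using Lemma \ref{lemma:four} for the numerator together with the elementary limit $(e^{1/N}-1)N\to1$ for the denominator yields the bound $d_D(\Psi_r,\Psi_0)\ge r$.

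For the upper bound, fix $a=a^*\in\mc{S}(\N^2)$ with $\|[D,a]\|\le1$ and let $G$ be the skew-adjoint generator of the unitary group $s\mapsto T(s)$, $s\in\R$. Since $a$ is a rapid-decay matrix, $\mf{a}a,\mf{a}^\dag a,a\mf{a},a\mf{a}^\dag$ — hence $[a,G]=\tfrac{1}{\sqrt{2\theta}}\bigl([\mf{a},a]-[\mf{a}^\dag,a]\bigr)$ — are bounded, so $s\mapsto T(-s)aT(s)$ is norm-differentiable with derivative $T(-s)[a,G]T(s)$. Therefore, with $\ket{\psi_s}:=T(s)\ket{\psi_0}$,
$$
\Psi_r(a)-\Psi_0(a)=\inner{\psi_0,\bigl(T(-r)aT(r)-a\bigr)\psi_0}=\int_0^r\inner{\psi_s,[a,G]\,\psi_s}\,\de s\,,
$$
so that $|\Psi_r(a)-\Psi_0(a)|\le r\,\|[a,G]\|$. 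As $\|[\mf{a},a]\|=\|[\mf{a}^\dag,a]\|$ when $a=a^*$, and Lemma \ref{lemma:1} applied to the $D$ of \eqref{eq:Dirred} turns $\|[D,a]\|\le1$ into the bound on $\|[\mf{a}^\dag,a]\|$ that forces $\|[a,G]\|\le1$, we obtain $|\Psi_r(a)-\Psi_0(a)|\le r$, i.e.\ $d_D(\Psi_r,\Psi_0)\le r$. Combined with the lower bound this gives $d_D(\Psi_r,\Psi_0)=r$, and the reduction yields the general formula. Given Lemmas \ref{lemma:three}--\ref{lemma:four}, the real content is the upper bound; the points to handle with care are the $\sqrt{2/\theta}$ normalization (so that the two estimates meet precisely at $r$), the fact that $\mathrm{Ad}_{T(w)}$ and $\mathrm{Ad}_{R(\alpha)}$ stabilize $\mc{S}(\N^2)$, and the justification that the orbit may be differentiated under the state — the last being ensured by the smoothness of elements of $\mc{S}(\N^2)$.
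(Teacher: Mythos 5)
Your overall route is the same as the paper's: reduce to $d_D(\Psi_r,\Psi_0)=r$ by invariance of the Lipschitz ball under $\mathrm{Ad}_{T(w)}$ and $\mathrm{Ad}_{R(\alpha)}$, get the lower bound from Lemmas \ref{lemma:three}--\ref{lemma:four}, and get the upper bound by integrating $t\mapsto\Psi_t(a)$ along the translation flow. The reduction and the upper bound are correct; indeed your upper bound is cleaner than the paper's own display, since you correctly route $\|[D,a]\|\leq 1$ through Lemma \ref{lemma:1} and the normalization of \eqref{eq:Dirred} to get $\|[\mf{a},a]\|=\|[\mf{a}^\dag,a]\|\leq\sqrt{\theta/2}$, so that $\|[a,G]\|\leq\frac{2}{\sqrt{2\theta}}\sqrt{\theta/2}=1$ and the estimate closes at exactly $r$.

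The gap is in the lower bound, at precisely the normalization you flagged as delicate but did not actually verify. Taking Lemma \ref{lemma:three} at face value, the admissible element is $b_N=a_N/(1+(e^{1/N}-1)N)$; since the denominator tends to $2$ (you yourself invoke $(e^{1/N}-1)N\to 1$) while the numerator tends to $\sqrt{2/\theta}\,r$ by Lemma \ref{lemma:four}, your chain of inequalities yields
$$
d_D(\Psi_r,\Psi_0)\;\geq\;\lim_{N\to\infty}\frac{\Psi_r(a_N)-\Psi_0(a_N)}{1+(e^{1/N}-1)N}\;=\;\tfrac{1}{2}\sqrt{\tfrac{2}{\theta}}\,r\;=\;\frac{r}{\sqrt{2\theta}}\;,
$$
which equals $r$ only for $\theta=1/2$ and is strictly smaller than $r$ for $\theta>1/2$, so the sandwich does not close. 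The missing ingredient is the prefactor $\sqrt{\theta/2}$ that the paper inserts in front of the limit, i.e.\ the trial element one actually needs is (asymptotically) $\sqrt{\theta/2}\,a_N$; its admissibility amounts to the sharper estimate $\|[\mf{a}^\dag,a_N]\|\leq 1+o(1)$ rather than the normalization $1+(e^{1/N}-1)N\to 2$ recorded in Lemma \ref{lemma:three}. In other words, the lower bound cannot be obtained by the purely mechanical division you perform; you must track where the $\sqrt{2/\theta}$ of the Dirac operator enters the admissibility condition (exactly as you did, correctly, in the upper bound) and justify that the resulting trial elements have Lipschitz norm $1+o(1)$.
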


\begin{proof}
Suppose $U$ is a unitary transformation of $\ell^2(\N)$. If
$\|[\mf{a}^\dag,\mathrm{Ad}_U(a)]\|=\|[\mf{a}^\dag,a]\|$ for all $a\in\A$,
then the distance is invariant under $\mathrm{Ad}_U^*$.
For $U=R(\alpha)$ one has $U\mf{a}^\dag U^*=e^{i\alpha}\mf{a}^\dag$ and the above condition is satisfied.
For $U=T(z)$ one has $U\mf{a}^\dag U^*=\mf{a}^\dag-\frac{\bar z}{\sqrt{2\theta}}$ and the above condition is satisfied.
Hence the distance is $G$-invariant, $d_D(\Psi_z,\Psi_{z'})=d_D(\Psi_0,\Psi_{|z-z'|})$, and it
is enough to prove the proposition when $z=0$ and $z'=r>0$.

From Lemma \ref{lemma:three} and \ref{lemma:four} it follows that
$$
d_D(\Psi_0,\Psi_r)\geq \sqrt{\tfrac{\theta}{2}}\lim_{N\to\infty}\big\{
\Psi_r(a_N)-\Psi_0(a_N)\big\}=r \;.
$$
On the other hand, if $A$ is a self-adjoint operator, $U(t)=e^{-itA}$ the corresponding one-parameter unitary group
and $\varphi$ a state, one has
$$
\frac{\de}{\de t}\varphi(e^{itA}ae^{-itA})=i\varphi(e^{itA}[A,a]e^{-itA}) \;.
$$
For $\varphi=\Psi_0$ and $U(t)=T(t)$, whose generator is $A=\frac{i}{\sqrt{2\theta}}(\mf{a}^\dag-\mf{a})$, we get
$$
\Psi_t([\mf{a}-\mf{a}^\dag,a])=\sqrt{2\theta}\frac{\de}{\de t}\Psi_t(a) \;.
$$
For any $a\in\mathfrak{B}$, by integrating previous equation we get:
$$
\Psi_r(a)-\Psi_0(a)=\frac{1}{\sqrt{2\theta}}\int_0^r\Psi_t([\mf{a},a]-[\mf{a}^\dag,a])\de t
\leq \frac{r}{\sqrt{2\theta}}\big(\|[\mf{a},a]\|+\|[\mf{a}^\dag,a]\|\big)\leq \frac{2r}{\sqrt{2\theta}}
 \;.
$$
This proves that $d_D(\Psi_0,\Psi_r)\leq r$.
\end{proof}

We conclude this short section by giving a geometric interpretation of the above result.
Let $f_{mn}$ be the matrix basis of $L^2(\R^2)$ and $\psi_z$ the unit vector
$$
\psi_z=\frac{e^{-\frac{1}{2\theta}|z|^2}}{\sqrt{2\pi\theta}}\sum_{m,n\geq 0}\frac{1}{\sqrt{m!\hspace{1pt}n!}}
\left(\frac{z}{\sqrt{2\theta}}\right)^m
\left(\frac{\bar z}{\sqrt{2\theta}}\right)^n
f_{mn} \;.
$$
Explicitly, as a function $\psi_z(\xi)=(\pi\theta)^{-\frac{1}{2}}e^{-\frac{1}{2\theta}|\xi-z|^2}$ is a Gaussian.

This defines a state on Moyal algebra that coincides with the coherent state $\Psi_z$:
$$
\inner{\psi_z,a\ast_\theta\psi_z}=
e^{-\frac{1}{2\theta}|z|^2}\sum_{m,n\geq 0}\frac{1}{\sqrt{m!\hspace{1pt}n!}}
\left(\frac{\bar z}{\sqrt{2\theta}}\right)^m
\left(\frac{z}{\sqrt{2\theta}}\right)^n
\,a_{mn}
\equiv \Psi_z(a) \;.
$$
But it also defines a Gaussian state on $\mc{S}(\R^2)$ given by
$\inner{\psi_z,f\psi_z}=\int_{\R^2}|\psi_z(x)|^2f(x)\de^2x$.
The Wasserstein distance between two Gaussian states with the same variance
is the Euclidean distance between the peaks (see e.g.~\S3.2 of \cite{DM09}).
Our computation proves that the distance is undeformed after quantization.

\subsection{Distance between eigenstates of the harmonic oscillator}\label{sec:4.4.2}

Let $\Psi^\theta_m$ be the state associated to the $m$-th basis vector ($m$-th eigenvector of the quantum harmonic oscillator):
$$
\Psi^\theta_m(a)=a_{mm}
$$
These are rotationally invariant: since $R(\alpha)\ket{n}=e^{i\alpha}\ket{n}$ implies $\mathrm{Ad}^*_{R(\alpha)}\Psi^\theta_m=\Psi^\theta_m$ for all $\alpha\in\R$.
It will be clear later why we indicate explicitly the deformation parameter $\theta$.

It is shown in \cite{CDMW09} that for all 
$m<n$, one has
$$
d_D(\Psi^\theta_m,\Psi^\theta_n)=\sqrt{\frac{\theta}{2}}\sum_{k=m+1}^n\frac{1}{\sqrt{k}} \;.
$$
Let $X_0$ be the metric space given by the set $\R^+_0$ with Euclidean distance.
For $\theta>0$, the states $\Psi^\theta_m$ with distance $d_D$
form a metric space $X_\theta$, which is isometrically embedded into
the Euclidean half-line by 
\begin{equation}\label{eq:xthetam}
f_\theta:X_\theta\to X_0, \quad   f_\theta(\Psi^\theta_m)=x_m^\theta:=\sum_{k=1}^m \sqrt{\frac{\theta}{2k}} \;,
\end{equation}
for $m\geq 1$ and and $x_0^\theta=0$.
Indeed, one easily checks that:
$$
d_D(\Psi_m^\theta,\Psi_n^\theta)=|x_m^\theta-x_n^\theta| \;.
$$

Intuitively, we expect that the large scale structure of Moyal plane is the one of an ordinary
Euclidean plane, that is for $\theta\to 0^+$ the Moyal plane approximates metrically
the Euclidean plane. A rigorous way to formulate this is by showing
that the Moyal plane is convergent to
the Euclidean plane in the Gromov-Hausdorff distance.
Here we prove a smaller result, i.e.~that for $\theta\to 0$ the metric spaces $X_\theta$ converge
to the half line for the Gromov-Hausdorff distance \cite{BBI01}.

\begin{prop}
$X_\theta\to X_0$ for the Gromov-Hausdorff distance.
\end{prop}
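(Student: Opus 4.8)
The plan is to exploit the isometric embedding $f_\theta:X_\theta\hookrightarrow X_0$ supplied by \eqref{eq:xthetam} and to estimate, \emph{inside the common ambient space} $X_0=\R^+_0$, the Hausdorff distance between its image $S_\theta:=f_\theta(X_\theta)=\{x_m^\theta:m\geq 0\}$ and all of $X_0$. Once we know this Hausdorff distance tends to $0$ with $\theta$, the Gromov--Hausdorff bound follows for free, since a common isometric ambient realization of the two spaces immediately bounds $d_{\mathrm{GH}}$.

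First I would record two elementary facts about the sequence $x_m^\theta=\sqrt{\theta/2}\,\sum_{k=1}^m k^{-1/2}$: it is strictly increasing with $x_m^\theta\to+\infty$ (because $\sum_k k^{-1/2}$ diverges), so that $\R^+_0=\bigcup_{m\geq 1}[x_{m-1}^\theta,x_m^\theta]$; and its consecutive gaps satisfy $x_m^\theta-x_{m-1}^\theta=\sqrt{\theta/(2m)}\leq\sqrt{\theta/2}$ for all $m\geq 1$. Consequently any $y\in\R^+_0$ lies in some interval $[x_{m-1}^\theta,x_m^\theta]$ and is therefore within distance $\tfrac12\sqrt{\theta/(2m)}\leq\tfrac12\sqrt{\theta/2}$ of the nearer endpoint, which is a point of $S_\theta$; since conversely $S_\theta\subset\R^+_0$, this gives $d_{\mathrm H}^{X_0}(S_\theta,X_0)\leq\tfrac12\sqrt{\theta/2}$.

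Because $f_\theta$ is an isometry of $X_\theta$ onto $S_\theta$, the Gromov--Hausdorff distance is bounded by the Hausdorff distance computed in this ambient space:
$$
d_{\mathrm{GH}}(X_\theta,X_0)\;\leq\;d_{\mathrm H}^{X_0}\bigl(f_\theta(X_\theta),X_0\bigr)\;\leq\;\tfrac12\sqrt{\theta/2}\;\xrightarrow{\;\theta\to 0^+\;}\;0,
$$
which is precisely the assertion $X_\theta\to X_0$.

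There is essentially no analytic difficulty: the whole content sits in the two estimates on the sequence $x_m^\theta$ above. The one point that deserves care is that $X_\theta$ and $X_0$ are non-compact, so one should state at the outset which notion of Gromov--Hausdorff distance is meant --- the one obtained by infimizing the Hausdorff distance over all pairs of isometric embeddings into a common metric space, or equivalently pointed Gromov--Hausdorff convergence with base points $\Psi^\theta_0\leftrightarrow 0$. With either convention the displayed upper bound is legitimate and suffices to establish convergence, so this is more a matter of phrasing than a genuine obstacle.
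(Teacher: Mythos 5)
Your proof is correct, and its top-level reduction is the same as the paper's: both pass to the image $f_\theta(X_\theta)\subset X_0$ of the isometric embedding \eqref{eq:xthetam} and bound the Hausdorff distance to $X_0$ inside the half-line. Where you differ is in the density estimate. The paper fixes an arbitrary $\xi\in\R^+_0$, sandwiches $x_n^\theta$ between $\sqrt{2\theta n}-\sqrt{2\theta}$ and $\sqrt{2\theta n}$ by comparing the sum with an integral, and then chooses the approximant $x_{n_\theta}^\theta$ from the decomposition $\xi^2=2\theta n_\theta+\epsilon_\theta$ with $0\le\epsilon_\theta<2\theta$; this is a pointwise argument whose uniformity in $\xi$ is implicit rather than displayed. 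You instead observe that the consecutive gaps are $x_m^\theta-x_{m-1}^\theta=\sqrt{\theta/(2m)}\le\sqrt{\theta/2}$ and that $x_m^\theta\to\infty$ by divergence of $\sum_k k^{-1/2}$, so the image is a $\tfrac12\sqrt{\theta/2}$-net in $\R^+_0$. This is shorter, avoids the integral comparison entirely, and yields an explicit uniform rate $d_{\mathrm{GH}}(X_\theta,X_0)\le\tfrac12\sqrt{\theta/2}$, which is a genuine (if modest) improvement over the paper's qualitative statement. Your closing caveat about which notion of Gromov--Hausdorff distance is used for these unbounded spaces is well taken --- the paper glosses over this by citing \cite{BBI01} --- and, as you note, your uniform bound is valid for either the embedding definition or the pointed version with base points $\Psi_0^\theta\leftrightarrow 0$.
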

\begin{proof}
Let $Y_\theta=\{x^\theta_m\}_{m\in\N}$ with $x^\theta_m$ given by \eqref{eq:xthetam}. These
are metric subspaces of $X_0$ (with Euclidean distance), and being isometric to $X_\theta$
it is enough to show that $Y_\theta\to X_0$ for the Hausdorff distance.
For this, we need to find, for any $\xi\in\R^+_0$, a set $\{\xi_\theta\in Y_\theta\}_{\theta>0}$
such that $|\xi-\xi_\theta|\to 0$ for $\theta\to 0^+$ \cite[pag.~253]{BBI01}.

As a preliminary step, notice that for any $m<n$ we have
$$
x_n^\theta=\sum_{k=1}^n \sqrt{\frac{\theta}{2k}}
\geq\int_{1}^{n+1}\sqrt{\frac{\theta}{2k}}\,\de k
=\sqrt{2\theta}\big(\sqrt{n+1}-1\big)\geq\sqrt{2\theta n}-\sqrt{2\theta} \;,
$$
and
$$
x_n^\theta\leq\int_0^n
\sqrt{\frac{\theta}{2k}}\,\de k=\sqrt{2\theta n}
\;.
$$
Let $\xi\in\R^+_0$ be an arbitrary point. For any $\theta>0$, $\xi^2$ can be written (in a unique way) as
$\xi^2=2\theta n_\theta+\epsilon_\theta$, with $n_\theta$ integer and $0\leq\epsilon_\theta<2\theta$.
Note that $\lim_{\theta\to 0}\epsilon_\theta=0$ and so $\lim_{\theta\to 0}\sqrt{2\theta n_\theta}=\xi$.
If we set $\xi_\theta=x_{n_\theta}^\theta$, then from the above discussion we have
$$
\xi-\sqrt{2\theta n_{\theta}} \leq \xi-\xi_\theta \leq \xi-\sqrt{2\theta n_{\theta}}+\sqrt{2\theta} \;,
$$
which proves that $\lim_{\theta\to 0^+}|\xi-\xi_\theta|=0$.
\end{proof}


\subsection*{Acknowledgments}
We thank Bram Mesland and Walter van Suijlekom for discussions and correspondence, and an anonymous referee for many valuable remarks and suggestions.
This research was partially supported by UniNA and Compagnia di San Paolo under the program STAR 2013.
F.L.~is partially supported by CUR Generalitat de Catalunya under project FPA2010-20807.

\medskip


\begin{thebibliography}{99}
\itemsep=-2pt
\small

\bibitem{Wei97}
S. Bates and A. Weinstein, \textit{Lectures on the Geometry of Quantization},
Berkeley Math. Lect. Notes, AMS, 1997.

\bibitem{BCL12}
P. Bertozzini, R. Conti and W. Lewkeeratiyutkul, \textit{Categorical Non-commutative Geometry},
J. Phys.: Conf. Ser. 346 (2012), 012003.

\bibitem{BBI01}
D. Burago, Y. Burago and S. Ivanov, \textit{A Course in Metric Geometry},
  Graduate Studies in Math. 33, AMS, 2001.
  
 \bibitem{CDMW09}
E.~Cagnache, F.~D'Andrea, P.~Martinetti and J.-C. Wallet, \textit{The spectral
  distance on the Moyal plane}, J. Geom. Phys. 61 (2011), 1881--1897.

\bibitem{CCvS13}
A.H. Chamseddine, A. Connes and W.D. van Suijlekom, \textit{Inner fluctuations in noncommutative geometry without
  the first order condition}, J. Geom. Phys. 73 (2013), 222--234.

\bibitem{Con94}
A.~Connes, \textit{Noncommutative Geometry}, Academic Press, 1994.

\bibitem{CC06}
A. Connes and A.H. Chamseddine, \textit{Inner fluctuations of the spectral action},
  J. Geom. Phys. 57 (2006), 1--21.

\bibitem{CDS98}
A. Connes, M.R. Douglas and A. Schwarz, \textit{Noncommutative Geometry and Matrix Theory: Compactification on Tori},
	JHEP 02 (1998), 003.

\bibitem{CM08}
A.~Connes and M.~Marcolli, \textit{Noncommutative geometry, quantum fields and
  motives}, Colloquium Publications, vol.~55, AMS, 2008.

\bibitem{DLM13}
F.~D'Andrea, F.~Lizzi and P.~Martinetti, \textit{Spectral geometry with a cut-off: topological and metric aspects}, 
J. Geom. Phys. 82 (2014), 18--45.

\bibitem{DM09}
F.~D'Andrea and P.~Martinetti, \textit{A view on Transport Theory from
  Noncommutative Geometry}, SIGMA 6 (2010), 057.

\bibitem{GGISV04}
V.~Gayral, J.M.~Gracia-Bond{\'\i}a, B.~Iochum, T.~Sch{\"u}cker, J.C. V{\'a}rilly,
\textit{Moyal planes are spectral triples}, Commun. Math. Phys. 246 (2004), 569--623.

\bibitem{GW11}
V. Gayral and R. Wulkenhaar,
\textit{Spectral geometry of the Moyal plane with harmonic propagation},
J. Noncommut. Geom. 7 (2013), 939--979.

\bibitem{GVF01}
J.M. Gracia-Bond\'ia, J.C. V\'arilly and H. Figueroa,
\textit{Elements of Noncommutative Geometry},
Birkh\"auser, Boston, 2001.

\bibitem{Lan02}
G.~Landi, \textit{An introduction to noncommutative spaces and their
  geometries}, Springer, 2002.

\bibitem{lls98}
G.~Landi, F.~Lizzi and R.~J.~Szabo,
\textit{String geometry and the noncommutative torus},
Commun. Math. Phys. 206 (1999), 603.

\bibitem{Lan98book}
N.P.~Landsman, \textit{Topics Between Classical And Quantum Mechanics}, Springer, 1998.

\bibitem{Lod97}
J.L. Loday, \textit{Cyclic homology}, Springer-Verlag, 1997.

\bibitem{Mad95}
J. Madore, \textit{An introduction to noncommutative differential geometry and its physical applications},
  London Math. Soc. lect. note series 206, Cambridge Univ. Press, 1995.

\bibitem{Mar05}
M. Marcolli, \textit{Arithmetic Noncommutative Geometry},
  Univ. Lecture Series 36, AMS, 2005.

\bibitem{MT11}
P.~Martinetti and L.~Tomassini, \textit{Noncommutative geometry of the
Moyal plane: translation isometries, Connes' distance on coherent states, Pythagoras equality},
Commun. Math. Phys. 323 (2013), 107--141.

\bibitem{Mes09}
B.~Mesland, \textit{Unbounded bivariant $K$-theory and correspondences in noncommutative geometry},
	J. Reine Angew. Math. 691 (2014), 101--172.
	
	, DOI:10.1515/crelle-2012-0076 (on-line, 2012).

\bibitem{Rie74}
M.A. Rieffel, \textit{Morita Equivalence for $C^*$-and $W^*$-algebras}, Journal of Pure
   and Applied Algebra 5 (1974), 51--96.

\bibitem{Rie81}
M.A. Rieffel, \textit{$C^*$-algebras associated with irrational rotations},
   Pacific J. Math. 93 (1981), 415--429.

\bibitem{Var06}
J.C. V\'arilly, \textit{An introduction to noncommutative geometry},
EMS Lect. Ser. in Math., 2006.

\bibitem{Ven11}
J.J. Venselaar, \textit{Morita ``equivalences'' of equivariant torus spectral triples},
Lett. Math. Phys. 103 (2013), 131--144.

\bibitem{Wil07}
D.P.~Williams, \textit{Crossed Products of $C^*$-Algebras},
Math. Surveys and Mono. 134, AMS, 2007.

\end{thebibliography}
\end{document}